\newtheorem{coro}{Corollary}
\newtheorem{obse}{Observation}
\newtheorem{defi}{Definition}
\newcommand{\VV}{{\mathcal V}}
\newcommand{\NP}{{\textrm{NP}}}
\newcommand{\PP}{{\textrm{P}}}
\newcommand{\head}[1]
\begin{document}

\title{Computational Complexity Characterization of Protecting Elections from Bribery \footnote{A 2 page extended abstract has been published at the Proceedings of the 17th International Conference on Autonomous Agents and MultiAgent Systems (AAMAS'18)}}

%
%
\author{Lin Chen\inst{1} \and
Ahmed Sunny\inst{1}
Lei Xu\inst{2} \and
Shouhuai Xu\inst{3} \and
Zhimin Gao\inst{4} \and
Yang Lu\inst{5} \and
Weidong Shi\inst{5} \and 
Nolan Shah\inst{6} }
\authorrunning{F. Author et al.}
%
\institute{Texas Tech University, 2500 Broadway, Lubbock, TX 79409, USA \email{lin.chen@ttu.edu}  \email{ahmed.sunny@ttu.edu} \and
University of Texas Rio Grande Valley,1201 W University Dr, Edinburg, TX 78539, USA \email{xuleimath@gmail.com}\and 
University of Texas San Antonio,1 UTSA Circle, San Antonio, TX 78249, USA  \email{shouhuai.xu@utsa.edu} \and Auburn University at Montgomery,7430 East Dr, Montgomery, AL 36117, USA  \email{mtion@masn.com} \and
University of Houston, 4800 Calhoun Rd, Houston, TX 77004, USA   \email{ylu17@central.edu} \email{wshi3@uh.edu} \and
Amazon Web Services, Seatle, USA
\email{nolan@0x9b.com}}

%
\maketitle              

\begin{abstract}
	The bribery problem in election has received considerable attention in the literature, upon which various algorithmic and complexity results have been obtained. It is thus natural to ask whether we can protect an election from potential bribery. We assume that the protector can protect a voter with some cost (e.g., by isolating the voter from potential bribers). A protected voter cannot be bribed. Under this setting, we consider the following bi-level decision problem: Is it possible for the protector to protect a proper subset of voters such that no briber with a fixed budget on bribery can alter the election result? The goal of this paper is to give a full picture on the complexity of protection problems. We give an extensive study on the protection problem and provide algorithmic and complexity results. Comparing our results with that on the bribery problems, we observe that the protection problem is in general significantly harder. Indeed, it becomes $\Sigma_2^p$-complete even for very restricted special cases, while most bribery problems lie in \NP. However, it is not necessarily the case that the protection problem is always harder. Some of the protection problems can still be solved in polynomial time, while some of them remain as hard as the bribery problem under the same setting.
	
\keywords{Voting, complexity, NP-hardness, $\Sigma_2^p$-hardness}

\end{abstract}

\section{Introduction}

In an election, there are a set of candidates and a set of voters. Each voter has a {\em preference list} of candidates. Given these preference lists, a winner is determined based on some {voting rule}, examples of which will be elaborated later.

In the context of election, the {\em bribery problem} has received considerable attention (see, for example, \cite{bredereck2016prices,erdelyi2020complexity,faliszewski2009hard,kaczmarczyk2019algorithms,knop2017voting}).
In this problem, there is an attacker who attempts to manipulate the election by bribing some voters, who will then report preference lists of the attacker's choice (rather than the voters' own preference lists). Each voter has a price for being bribed, and the attacker has an attack budget for bribing voters.
There are two kinds of  attackers: {\em constructive} vs. {\em destructive}.
A {\em constructive} attacker attempts to make its designated candidate win an election, whereas the designated candidate would not win the election should there be no attacker. In contrast, a {\em destructive} attacker attempts to make its designated candidate lose the election, whereas the designated candidate would win the election should there be no attacker. 
The research question is: Given an attack budget for bribing, whether or not a (constructive or destructive) attacker can achieve its goal?

In this paper, we initiate the study of a new problem, called the {\em protection problem},
which extends the bribery problem as follows. There are also a set of candidates, a set of voters, and a bribery attacker. Each voter also has a {\em preference list} of candidates. There is also a { voting rule} according to which a winner is determined.
Going beyond the bribery problem, the protection problem further considers a defender, who aims to protect elections from bribery.
More specifically, the defender is given a defense budget and can use the defense budget to award some of the voters so that they cannot be bribed by the attacker anymore.
This leads to an interesting problem: {\em Given a defense budget, is it possible to protect an election from an attacker with a given attack budget for bribing voters (i.e., assuring that the attacker cannot achieve its goal)?}




\smallskip
\noindent\textbf{Our contributions.}
We introduce the problem of protecting elections from bribery, namely the protection problem.
Given a defense budget for rewarding some of the voters and an attack budget for bribing some of the rest voters, the protection problem asks whether or not the defender can protect the election.
We investigate the protection problem against the aforementioned two kinds of bribery attackers: constructive vs. destructive. 


We present a characterization on the computational complexity of the protection problem (summarized in Table \ref{table:1} in {Section \ref{sec:discussion}}).
The characterization is primarily concerning the voting rule of $r$-approval, which will be elaborated in Section \ref{sec:definition}.
%
%
At a high level, our results can be summarized as follows. (i) The {\em protection problem} is hard and might be much harder than the {\em bribery problem}.
For example, the protection problem is $\Sigma_2^p$-complete in most cases,
while the bribery problem is in \NP\, under the same settings.
(ii) The {\em destructive protection problem} (i.e., protecting elections against a destructive attacker) is {\em no} harder than the {\em constructive protection problem} (i.e., protecting elections against a constructive attacker) in all of the settings we considered. In particular, the destructive protection problem is $\Sigma_2^p$-hard only when the voters are weighted and have arbitrary prices, while the constructive protection problem is  $\Sigma_2^p$-hard even when the voters are unweighted and have the unit price.
(iii) Voter weights and prices have completely different effects on the computational complexity of the protection problem.
For example, the constructive protection problem is \textrm{coNP}-hard in one case
but is in \PP\, in another case.


\smallskip

\noindent\textbf{Related Work.}
The problem of protecting elections from attacks seemingly has not received the due attention.
Very recently, Yin et al. \cite{yin2016optimally} considered the problem of defending elections against an attacker who can delete (groups of) voters.
That is, the investigation is in the context of the {\em control problem}, where the attacker attempts to manipulate an election by adding or deleting some voters. 
The control problem has been extensively investigated (see, for example, \cite{chen2017elections,faliszewski2015weighted,faliszewski2016control,yin2016optimally}).
Although the control problem is related to the bribery problem, the means used by the attacker in the control problem (i.e., attacker adding or deleting some voters)
is different from the means used by the attacker in the bribery problem (i.e., attacker changing the preference lists of the bribed voters).
We investigate the protection problem, which is defined in the context of the bribery problem rather than the control problem.
That is, the problem we investigate is different from the problem investigated by Yin et al. \cite{yin2016optimally}.

The protection problem we study is inspired by the bribery problem.
Faliszewski et al. \cite{faliszewski2009hard} gave the first characterization on the complexity of the {\em bribery problem}, including some dichotomy theorems.
In the bribery problem, the attacker can pay a fixed, but voter-dependent, price to arbitrarily manipulate the preference list of a bribed voter. The complexity of the bribery problem under the scoring rule of $r$-approval or $r$-veto for small values of $r$ was addressed later by Lin \cite{lin2010complexity} and Bredereck and Talmon \cite{bredereck2016np}. There are also studies on measuring the bribery price in different ways (see, e.g.,~\cite{bredereck2016prices,dey2017frugal,kaczmarczyk2019algorithms}).



Technically, the protection problem is related to the {\em bi-level optimization} problem, especially the {\em bi-level knapsack} problem (\cite{caprara2014study,chen2013approximation,qiu2015improved,wang2010two}). 
In the bi-level knapsack problem, there is a leader and a follower. The leader makes a decision first (e.g., packing a subset of items into the knapsack), and then the follower solves an optimization problem given the leader's decision (e.g., finding the most profitable subset of items that have not been packed by the leader). The problem asks for the decision of the leader such that a certain objective function is optimized (e.g., minimizing the profit of the follower). The protection problem we study can be formulated as the bi-level problem by letting the defender award some voters who therefore cannot be bribed by the attacker anymore, and then the attacker bribes some of the remaining voters as an attempt to manipulate the election.

\section{Problem definition}
\label{sec:definition}
\noindent\textbf{Election model.}
Consider a set of $m$ candidates $\mathcal{C}=\{c_1,c_2,\ldots, c_m\}$ and a set of $n$ voters $\mathcal{V}=\{v_1,v_2,\ldots,v_n\}$. Each voter $v_j$ has a preference list of candidates, which is essentially a permutation of candidates,
denoted as $\tau_j$. The preference of $v_j$ is denoted by $(c_{\tau_j(1)},c_{\tau_j(2)},\ldots,c_{\tau_j(m)})$, meaning that $v_j$ prefers candidate $c_{\tau_j(z)}$ to $c_{\tau_j(z+1)}$, where $z=1,2,\ldots$. Since $\tau_j$ is a permutation over $\{1,2,\ldots,m\}$, we denote by $\tau_j^{-1}$ the inverse of $\tau_j$, meaning that $\tau_j^{-1}(i)$ is the position of candidate $c_i$ in vector $(c_{\tau_j(1)},c_{\tau_j(2)},\ldots,c_{\tau_j(m)})$.


\smallskip\noindent\textbf{Voting rules.}
In this paper, we focus on the scoring rule (or scoring protocol) that maps a preference list to a $m$-vector $\alpha=(\alpha_1,\alpha_2,\ldots,\alpha_m)$, where $\alpha_i\in\mathbb{N}$ is the score assigned to the $i$-th candidate on the preference list of voter $v_j$ and $\alpha_1\ge \alpha_2\ge\ldots\ge \alpha_m$.  Given that $\tau_j$ is the preference list of $v_j$, candidate $c_{\tau_j(i)}$ receives a score of $\alpha_{i}$ from $v_j$. The {\em total score} of a candidate is the summation of the scores it received from the voters. The winner is the candidate that receives the highest total score. We focus on a {\em single-winner} election, meaning that only one winner is selected.
In the case of a tie, a random candidate with the highest total score is selected. However, our results remain valid for all-natural variation of selecting a single winner. 

We say a scoring rule is {\em non-trivial}, if $\alpha_1>\alpha_m$ (i.e., not all scores are the same).
There are many (non-trivial) scoring rules, including the popular {\em $r$-approval}, {\em plurality}, {\em veto}, {\em Borda count} and so on. In the case of $r$-approval, $\alpha=(\underbrace{1,1,\ldots,1}_{r},\underbrace{0,0,\ldots,0}_{m-r})$. In the case of plurality, $\alpha=(1,0,\ldots,0)$.  In the case of veto, $\alpha=(1,1,\ldots,1,0)$.
It is clear that plurality and veto are special cases of the scoring rule of $r$-approval.

\smallskip
\noindent\textbf{Weights of voters.}
Voters can have different weights. 
Let $w_j\in\mathbb{N}$ be the weight of voter $v_j$. In a weighted election, the total score of a candidate is the {\em weighted sum} of the scores a candidate receives from the voters. For example, candidate $c_i$ receives a score $w_j\cdot \alpha_{\tau_j^{-1}(i)}$ from voter $v_j$.

By re-indexing all of the candidates, we can set, without loss of generality, $c_m$ as the winner in the absence of bribery.

\smallskip
\noindent\textbf{Adversarial models.}
We consider an attacker that does not belong to $\mathcal{C}\cup \mathcal{V}$ but attempts to manipulate the election by bribing some voters.
Suppose voter $v_j$ has a {\em bribing price} $p_j^b$, meaning that $v_j$, upon receiving a bribery of amount $p_j^b$ from the attacker, will change its preference list to the list given by the attacker.
The attacker has a total budget $B$. As in the bribery problem, we also consider two kinds of attackers:
\begin{compactitem}
	\item {\em Constructive attacker}: This attacker attempts to make a designated candidate win the election, meaning that the designated candidate is the only candidate who gets the highest score. 
	\item {\em Destructive attacker}: This attacker attempts to make a designated candidate lose the election, meaning that there is another candidate that gets a strictly higher score than the designated candidate does.
\end{compactitem}

\smallskip
\noindent\textbf{Protection.} In the protection problem,
voter $v_j$, upon receiving an award of amount $p_j^a$ (or {\em awarding price}) from the defender, will always report its preference list faithfully and cannot be bribed. Note that $p_j^a$ may have multiple interpretations, such as monetary award, economic incentives or the cost of isolating voters from bribery.
We say a voter $v_j$ is {\em awarded} if $v_j$ receives an award of $p_j^a$.

\smallskip
\noindent\textbf{Problem statement.} We formalize our problem as follows.
\vspace{-3mm}
\begin{center}
	\fbox{\begin{minipage}{0.98\textwidth}
			\noindent\textbf{The constructive protection problem (i.e., protecting elections against constructive attackers):}
			
			\noindent{\em Input:} A set $\mathcal{C}$ of $m$ candidates. A set $\mathcal{V}$ of $n$ voters, each with a weight $w_j\in\mathbb{Z}_{>0}$, a preference list $\tau_j$, an awarding price of $p_j^a\in\mathbb{Z}_{>0}$ and a bribing price of $p_j^b\in\mathbb{Z}_{>0}$. A
			scoring rule for selecting a single winner. A defender with a defense budget $F\in\mathbb{Z}_{\ge 0}$. An attacker with an attack budget $B\in\mathbb{Z}_{\ge 0}$ attempting to make candidate $c_m$ win the election.
			
			\noindent{\em Output:} Decide whether there exists a $\VV_F\subseteq \VV$ such that
			\begin{itemize}
				\item  $\sum_{j:v_j\in\VV_F}p_j^a\le F$; and
				\item  for {\em any} subset $\VV_B\subseteq \VV\setminus\VV_F$ with $\sum_{j:v_j\in \VV_B}p_j^b\le B$, $c_m$ does not get a strictly higher score than any other candidate
				despite the attacker bribing the voters belonging to $\VV_B$ (i.e., bribing $\VV_B$).
			\end{itemize}
	\end{minipage}}
\end{center}
\vspace{-2mm}
\begin{center}
	\fbox{\begin{minipage}{0.98\textwidth}
			\noindent\textbf{The destructive protection problem (i.e., protecting elections against destructive attackers):}
			
			\noindent{\em Input:} A set $\mathcal{C}$ of $m$ candidates. A set $\mathcal{V}$ of $n$ voters, each with a weight $w_j\in\mathbb{Z}_{>0}$, a preference list $\tau_j$, an awarding price of $p_j^a\in\mathbb{Z}_{>0}$ and a bribing price of $p_j^b\in\mathbb{Z}_{>0}$. A scoring rule
			for selecting a single winner. Suppose $c_m$ is the winner if no voter is bribed. A defender with a defense budget $F\in\mathbb{Z}_{\ge 0}$. An attacker with an attack budget $B\in\mathbb{Z}_{\ge 0}$ attempting to make $c_m$ lose the election by making $c\in \mathcal{C}\setminus\{c_m\}$ get a strictly higher score than $c_m$ does.
			
			\noindent{\em Output:} Decide if there exists a $\VV_F\subseteq \VV$ such that
			\begin{itemize}
				\item  $\sum_{j:v_j\in\VV_F}p_j^a\le F$; and
				\item  for {\em any} subset $\VV_B\subseteq \VV\setminus\VV_F$ such that $\sum_{j:v_j\in \VV_B}p_j^b\le B$, no candidate $c\in \mathcal{C}\setminus\{c_m\}$ can get a strictly higher score than $c_m$ does despite the attacker bribing $\VV_B$.
			\end{itemize}
	\end{minipage}}
\end{center}
\vspace{-1mm}

\noindent\textbf{Further terminology and notations.}
We denote by $W(c_i)$ the total score obtained by candidate $c_i$ in the absence of bribery (i.e., no voter is bribed).
If the defender can select $\VV_F$ such that no constructive or destructive attacker can succeed, we say the defender succeeds. 
We call our problem as the (constructive or destructive) {\em weighted-\$-protection} problem, where ``weighted'' indicates that the voters are weighted and ``\$'' indicates that arbitrary awarding and bribing prices are involved.
In addition to investigating the general {\em weighted-\$-protection} problem, we also investigate the following special cases of it:
\begin{compactitem}
	\item the {\em \$-protection} problem with $w_j=1$ for each $j$ (i.e., the voters are not weighted);
	\item the {\em weighted-protection} problem with $p_j^a=p_j^b=1$ for each $j$ (i.e., voters are associated with the unit awarding price and the unit bribing price);
	\item the {\em unit-protection} problem with $w_j=p_j^a=p_j^b=1$ for each $j$ (i.e., voters are not weighted, and are associated with the unit awarding price and the unit bribing price).
	\item the {\em symmetric protection} problem with $p_j^a=p_j^b$ for each $j$ (i.e., the awarding price and the bribing price are always the same), while noting that different voters may have different prices.
\end{compactitem}










\vspace{-3mm}
\section{The Case of Constantly Many Candidates}
\label{sec:constant-many-candidates}



\subsection{The Weighted-\$-Protection Problem}\label{subsec:weighted-dollar}
The goal of this subsection is to prove the following theorem.
\begin{theorem}\label{thm:sigma-hard-general-destructive}
	For any non-trivial scoring rule, both the {\em constructive and destructive weighted-\$-protection problem}, is $\Sigma_2^p$-complete.
\end{theorem}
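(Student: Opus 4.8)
The plan is to show both the constructive and destructive protection problems lie in $\Sigma_2^p$ by reading off the quantifier structure directly from the problem statement. The problem has the form ``there exists a defense set $\VV_F$ (of cost at most $F$) such that for all bribery sets $\VV_B \subseteq \VV\setminus\VV_F$ (of cost at most $B$) the attacker fails.'' Guessing $\VV_F$ is an $\exists$-phase over polynomial-size certificates; verifying the inner universal statement is a coNP predicate, since for a fixed $\VV_F$ one can guess a bribery set $\VV_B$ and an optimal re-voting of the bribed voters and check in polynomial time whether $c_m$'s winning condition is violated (computing weighted $r$-approval scores is easy). This places both problems in $\exists\forall\cdot\PP = \Sigma_2^p$. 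I would state this first and briefly, since it is routine.

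\textbf{$\Sigma_2^p$-hardness: reduction from a canonical $\Sigma_2^p$-complete problem.}
The substance is hardness, and I would reduce from $\exists\forall$-3SAT (or equivalently the $\Sigma_2^p$-complete problem $\exists X \forall Y\, \phi(X,Y)$, with $\phi$ in 3-CNF or 3-DNF as convenient). The intuition guiding the gadget design is that the defender's choice of $\VV_F$ should encode the existential assignment to $X$, and the attacker's choice of bribery set $\VV_B$ should encode the universal assignment to $Y$; the election outcome should then test whether $\phi$ is satisfied. Concretely, I would introduce, for each existential variable $x_i$, a pair of voters (or a small cluster) whose awarding price and bribing price are tuned so that the defender's budget $F$ forces exactly one of each complementary pair to be protected — thereby selecting a truth value — and similarly set up universal-variable gadgets so that the attacker, constrained by budget $B$, can flip exactly the voters corresponding to a choice of $Y$. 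Weights and arbitrary prices are the tools that let me enforce these ``exactly one per pair'' constraints tightly; this is why the theorem is stated for the fully general weighted-\$ setting. The scoring rule being non-trivial ($\alpha_1 > \alpha_m$) guarantees there is a usable score gap between the top and bottom positions, so bribing a voter to demote or promote a candidate changes scores by a controllable nonzero amount; I would reduce the general non-trivial rule to a two-value behaviour by only ever placing the relevant candidates in the top position (score $\alpha_1$) or the bottom position (score $\alpha_m$).

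\textbf{Wiring the formula into the score constraints.}
The heart of the construction is a clause/outcome gadget ensuring that $c_m$ fails to be a strict winner (constructive case) or that some $c$ beats $c_m$ (destructive case) precisely when the encoded $\phi(X,Y)$ is false for the chosen $X$ and the adversarial $Y$. I would designate one ``target'' candidate per clause (or a single shared challenger candidate) and calibrate the voters so that the combined effect of the $X$-selection and the worst-case $Y$-bribery pushes the challenger's score above $c_m$'s exactly when a clause is unsatisfied. Because the attacker is adversarial (universally quantified) and seeks to make $c_m$ lose, the attacker will try the $Y$-assignment that maximizes the challenger — so the defender succeeds iff for every $Y$ the formula holds, matching the $\forall Y$ quantifier. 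The destructive case maps most naturally onto ``challenger beats $c_m$,'' and I would then argue the constructive case by the standard trick of making $c_m$ the designated winner whose victory is exactly the complement of a destructive success, so the same gadget serves both (the theorem asserts both are $\Sigma_2^p$-complete).

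\textbf{Main obstacle.}
The hard part will be the budget calibration: I must choose weights, awarding prices, bribing prices, and the budgets $F$ and $B$ so that the defender is forced into exactly the intended ``one protected voter per existential variable'' pattern (no cheaper or mixed strategy should help), and so that the attacker's budget permits exactly the encodings of full $Y$-assignments and nothing that circumvents the gadget. Proving the two directions of correctness — that a valid $(X,Y)$-witness yields a valid protection strategy and conversely that any successful $\VV_F$ must decode to a genuine $\exists\forall$ witness — requires showing these budget constraints are tight, with no ``off-gadget'' deviations available to either player. This tightness argument, rather than any single clever idea, is where the real work lies.
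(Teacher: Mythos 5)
Your membership argument matches the paper's (guess $\VV_F$, observe the inner check is a universally quantified polynomial-time predicate), and that part is fine. The hardness half, however, has a genuine gap. The paper proves this theorem in the setting of \emph{constantly many} candidates --- indeed the hardness lemma (Lemma~\ref{lemma:general-hardness}) establishes $\Sigma_2^p$-hardness already for $m=2$ --- and it does so by reducing from the De-Negre variant of the bi-level knapsack problem, a purely numeric $\Sigma_2^p$-complete problem whose single ``total weight of packed items exceeds $W$'' condition maps directly onto the single score comparison between two candidates. Your reduction from $\exists\forall$-3SAT cannot be carried out in that regime: with a bounded number of candidates the only constraints available are a bounded number of linear score comparisons of the form ``challenger $c$ beats $c_m$,'' and the total score of a challenger is an additive function of which voters are bribed. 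A 3-CNF formula is a conjunction of disjunctions, and the disjunctive (OR) structure of a clause is not expressible as a single additive threshold; the standard way around this is to dedicate one candidate (one threshold constraint) per clause, which forces $m$ to grow with the formula. So your construction could at best prove hardness for unboundedly many candidates --- a strictly weaker statement that would not support the paper's $m=2$ claim or the corresponding table entries. (The paper does use a quantified-matching reduction of the flavor you describe, but only later, for the arbitrary-$m$ constructive case in Lemma~\ref{lemma:sigma-2-price}.)

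Two further points. First, your reliance on ``the constructive case is exactly the complement of a destructive success'' is only valid when $m=2$ (where the two problems coincide, as the paper notes); for general $m$, making a designated candidate win is not the negation of making a designated candidate lose, so the same gadget does not automatically serve both. Second, the step you yourself flag as the real work --- calibrating weights, prices, and budgets so that the existentially quantified defender is forced into a one-per-pair assignment pattern \emph{and} the universally quantified attacker gains nothing from non-assignment bribery sets --- is precisely the content of such a proof and is entirely absent. As it stands the proposal is a plausible plan for a different (and weaker) theorem, not a proof of this one.
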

The theorem follows from Lemma~\ref{lemma:general-sigma-complete} and Lemma~\ref{lemma:general-hardness} below, which shows the $\Sigma_2^p$ membership and $\Sigma_2^p$-hardness, respectively.

\begin{lemma}\label{lemma:general-sigma-complete}
	For any non-trivial scoring rule, both the constructive and destructive weighted-\$-protection problems are in $\Sigma_2^p$.	
\end{lemma}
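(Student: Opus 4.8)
The plan is to verify directly that the protection problem has the defining logical shape of $\Sigma_2^p$: a polynomially bounded existential quantifier, followed by a polynomially bounded universal quantifier, followed by a polynomial-time predicate. The existential quantifier guesses the defender's protected set $\VV_F$; the universal quantifier ranges over every possible attacker response; and the predicate checks that the defense is legal and the attacker fails. Whatever difficulty exists lies not in any clever combinatorics but in being careful about exactly what the attacker's ``response'' consists of, in encoding the quantifiers so that illegal guesses cannot spuriously witness a yes-instance, and in confirming that every guessed object has size polynomial in the input.

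First I would recall the standard characterization: a language $L$ is in $\Sigma_2^p$ iff there are a polynomial-time predicate $R$ and a polynomial $q$ such that $x\in L$ iff $\exists y\,(|y|\le q(|x|))\,\forall z\,(|z|\le q(|x|))\,R(x,y,z)$. I would instantiate $y$ as an encoding of $\VV_F$, a subset of the $n$ voters, costing $O(n)$ bits. The key observation is that an attacker action is \emph{not} merely the choice of the bribed set $\VV_B$: it also includes the choice of a new preference list, i.e.\ a permutation of the $m$ candidates, for each bribed voter. Hence $z$ encodes a pair $(\VV_B,\{\tau'_j\}_{v_j\in\VV_B})$, of size $O(n+nm\log m)$, which is still polynomial in the input.

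Next I would define $R(x,\VV_F,(\VV_B,\{\tau'_j\}))$ to output $1$ exactly when $\VV_F$ is a \emph{legal} defense, that is $\sum_{v_j\in\VV_F}p_j^a\le F$, \textbf{and} the guessed attacker action is either infeasible ($\VV_B\not\subseteq\VV\setminus\VV_F$ or $\sum_{v_j\in\VV_B}p_j^b>B$) or unsuccessful. ``Unsuccessful'' means, for the constructive problem, that $c_m$ is not the unique highest-scoring candidate after the lists of $\VV_B$ are replaced by $\{\tau'_j\}$, and for the destructive problem, that no $c\in\mathcal{C}\setminus\{c_m\}$ attains a strictly higher score than $c_m$. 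Each test amounts to summing weighted scores and comparing totals, so $R$ runs in polynomial time. Making legality a conjunct of $R$ (rather than, say, the negated disjunct it might naively appear to be) is what guarantees that $\exists\VV_F\,\forall z\,R$ holds precisely when some legal defense $\VV_F$ defeats every feasible bribery, which is exactly the protection problem; an over-budget $\VV_F$ makes $R=0$ for all $z$ and therefore cannot witness.

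The point that most needs care, rather than a genuine obstacle, is the universal quantifier over the reassigned preference lists: quantifying only over the bribed set $\VV_B$ while assuming some ``optimal'' reassignment would be incorrect, since the attacker's entire power is to set bribed voters' lists adversarially. Quantifying explicitly over all $\{\tau'_j\}$ removes any need for a structural optimal-attack lemma and keeps the predicate manifestly polynomial-time. A secondary detail is to confirm that, although weighted scores can be numerically large, each is bounded by $\sum_j w_j\cdot\alpha_1$ and so is representable and comparable in polynomial time. With these verifications the membership in $\Sigma_2^p$ follows for the constructive and destructive variants simultaneously, since they differ only in the poly-time success test inside $R$.
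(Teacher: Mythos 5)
Your proof is correct and follows essentially the same route as the paper: guess $\VV_F$ existentially, quantify universally over the pair consisting of the bribed set and the reassigned preference lists, and verify in polynomial time that a legal defense defeats every feasible attack. The paper's own argument is just a terser version of this, also explicitly including the new preference lists $\hat{\tau}_j$ in the universally quantified object; your extra care about where the budget constraints sit (inside the predicate versus inside the quantifier bounds) is a harmless, logically equivalent reformulation.
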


\begin{lemma}\label{lemma:general-hardness}
	For any non-trivial scoring rule, both the constructive and destructive weighted-\$-protection problems are both $\Sigma_2^p$-hard even if there are only $m=2$ candidates.
\end{lemma}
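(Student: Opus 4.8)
The plan is to prove $\Sigma_2^p$-hardness by reducing from a $\Sigma_2^p$-complete \emph{bilevel knapsack interdiction} problem of the type studied in \cite{caprara2014study}: given items $1,\ldots,n$, each with a leader interdiction cost $a_i$, a follower weight $b_i$ and a follower profit $g_i$, together with a leader budget $A$, a follower budget $C$ and a threshold $T$, decide whether the leader can delete a set $I$ with $\sum_{i\in I}a_i\le A$ so that \emph{every} follower packing $S\subseteq\{1,\ldots,n\}\setminus I$ with $\sum_{i\in S}b_i\le C$ satisfies $\sum_{i\in S}g_i\le T$. The key observation is that with only $m=2$ candidates the whole election collapses to a single margin $M=W(c_2)-W(c_1)$: each voter $v_j$ contributes $\pm w_j\delta$ to $M$, where $\delta=\alpha_1-\alpha_2>0$ is fixed by the non-trivial scoring rule, and bribing $v_j$ can only flip the sign of that contribution. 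Hence the three per-voter parameters $w_j,p_j^b,p_j^a$ play exactly the roles of the knapsack profit, the follower weight and the leader interdiction cost, and \emph{protecting} a voter is precisely \emph{interdicting} an item.

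For the destructive problem I would create one voter $v_j$ per item $j$, all initially ranking $c_2=c_m$ first, and set $w_j=g_j$, $p_j^b=b_j$, $p_j^a=a_j$, $B=C$, $F=A$. Flipping $v_j$ to $c_1$ then decreases $M$ by exactly $2w_j\delta$, so bribing a set $S$ forces $c_2$ to a strict loss iff $\sum_{j\in S}w_j>M_0/(2\delta)$. To match the knapsack threshold I would add a single inert \emph{anchor} voter whose chosen candidate and weight calibrate the initial margin to $M_0=\delta(2T+1)$; giving the anchor a bribing price larger than $B$ and an awarding price larger than $F$ makes it impossible to bribe and pointless to protect, so it only fixes the baseline. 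This half-integer threshold yields the clean equivalence ``$c_2$ loses $\iff\sum_{j\in S}w_j\ge T+1$''. The constructive case is the mirror image: let the item voters rank $c_1$ first and calibrate the anchor so that flipping enough weight makes $c_2$ the \emph{strict} winner; since for $m=2$ the two problems are symmetric under swapping $c_1\leftrightarrow c_2$, the same construction works verbatim.

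Correctness then follows in both directions: because the anchor cannot be bribed, an attacker set $\VV_B$ ranges exactly over the follower packings $S$ of the unprotected item voters with $\sum_{j\in S}b_j\le B$, and a defender set $\VV_F$ ranges (without loss of optimality) over interdiction sets $I$ with $\sum_{j\in I}a_j\le F$; the defender succeeds iff $\exists I\,\forall S:\sum_{j\in S}g_j\le T$, which is exactly a YES-instance of the interdiction problem. Combined with the membership of Lemma~\ref{lemma:general-sigma-complete}, this gives $\Sigma_2^p$-completeness already for $m=2$. The step I expect to be the main obstacle is making the quantifier and optimization structures line up exactly: I must verify that ``for \emph{every} bribe set the attacker fails'' corresponds to capping the follower's \emph{optimum} rather than one fixed packing, and I must absorb the strict-versus-nonstrict ``$+1$'' offsets of the winning conditions into the anchor weight so that the reduction stays faithful for an \emph{arbitrary} non-trivial $\alpha$, not merely for plurality. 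A secondary point to settle is the precise $\Sigma_2^p$-complete source: should the off-the-shelf interdiction-knapsack formulation not match our parameterization, I would instead derive the required hardness from $\exists\forall$-\textsc{Sat} via a standard knapsack-style encoding.
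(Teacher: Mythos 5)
Your proposal is correct and takes essentially the same route as the paper: the paper also reduces from Caprara et al.'s De-Negre bi-level knapsack variant (where the follower's profit equals its weight, i.e., $w_j=p_j^b$), identifies voters with items, protection with interdiction, and uses dummy voters with out-of-budget prices to calibrate the initial margin so that the attacker succeeds iff the packer exceeds the threshold $W$. The only differences are cosmetic (one anchor voter versus the paper's two calibration dummies, and your slightly more general parameterization of the source problem, which specializes to the known hard variant).
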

\smallskip
\noindent{\it Proof sketch.}
The proof of Lemma~\ref{lemma:general-hardness} follows from De-Negre (DNeg) variant of bi-level knapsack problem, which is proved to be $\Sigma_2^p$-hard by Caprara et al.~\cite{caprara2014study}. 
    We give a brief explanation.
	In this De-Negre variant, there are an adversary and a packer. The adversary has a reserving budget $\bar{F}$ and the packer has a packing budget $\bar{B}$. There is a set of $n$ items, each having a price $\bar{p}_j^a$ to the adversary, a price $\bar{p}_j^b$ to the packer, and a weight $\bar{w}_j=\bar{p}_j^b$ to both the adversary and the packer. The adversary first reserves a subset of items whose total price is no more than $\bar{F}$. Then the packer solves the knapsack problem with respect to the remaining items that are not reserved; i.e., the packer will select a subset of the remaining items whose total price is no more than $\bar{B}$ but their total weight is maximized. The De-Negre variant asks if the adversary can reserve a proper subset of items such that the total weight of the unreserved items that are selected by the packer is no more than some parameter $W$. The De-Negre variant is similar to the weighted-\$-protection protection problem, because we can view the defender and attacker in the protection problem respectively as the adversary and packer in the bi-level knapsack problem.
	In the case of a single-winner election with $m=2$ candidates, the goal of the defender is to assure that the constructive attacker cannot make the loser get a strictly higher score than the winner by bribing. This is essentially the same as ensuring that the constructive attacker cannot bribe a subset of non-awarded voters whose total weight is higher than a certain threshold, which is the same as the bi-level knapsack problem. \qed

\subsection{{The Weighted-Protection Problem}}\label{subsec:weighted-unit}
This is a special case of the weighted-\$-protection problem when $p_j^a=p_j^b=1$.

The following theorem used by Faliszewski et al. \cite{faliszewski2009hard} was originally proved for another problem. In our context, $F=0$ and thus $\VV_F=\emptyset$, it is \NP-hard to decide if the constructive attacker can succeed or, equivalently, if the defender {\em cannot} succeed. Hence, it is \textrm{coNP}-hard to decide if the defender can succeed and Theorem~\ref{thm:uni-price-restrict-np} follows.
\begin{theorem}\label{thm:uni-price-restrict-np}
	\emph{(By Faliszewski et al. \cite{faliszewski2009hard})}
	If $m$ is a constant, the constructive weighted-protection problem is \textrm{coNP}-hard for any
	scoring rule that $\alpha_2,\alpha_3,\ldots,\alpha_m$ are not all equal (i.e., it does not  hold that $\alpha_2=\alpha_3=\ldots=\alpha_m$).
\end{theorem}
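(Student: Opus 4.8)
The plan is to prove coNP-hardness by reducing from the ordinary bribery problem, exploiting the fact that the weighted-protection problem collapses into the complement of bribery once the defense budget is set to zero. First I would invoke the result of Faliszewski et al.~\cite{faliszewski2009hard} in its bribery form: for constant $m$ and any scoring rule in which $\alpha_2,\alpha_3,\ldots,\alpha_m$ are not all equal, it is \NP-hard to decide whether a constructive attacker with budget $B$ can bribe a subset of the weighted, unit-price voters so as to make the designated candidate $c_m$ the unique top scorer. This is precisely the statement Faliszewski et al. established, originally phrased for the bribery setting rather than for protection.

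The reduction I would use is essentially the identity map, hence trivially polynomial. Given a bribery instance, I would build a weighted-protection instance on the same candidates, voters, weights, preference lists, scoring rule, and attack budget $B$; I would keep the unit prices $p_j^a=p_j^b=1$ and, crucially, set the defense budget to $F=0$. Since every awarding price $p_j^a=1$ is strictly positive while $F=0$, the only affordable choice is $\VV_F=\emptyset$, so the defender cannot shield a single voter.

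With $\VV_F$ forced to be empty, the defender succeeds exactly when, for \emph{every} affordable set $\VV_B\subseteq\VV$, the attacker fails to make $c_m$ the unique top scorer; this is precisely the statement that the bribery instance is a no-instance. Thus the mapping sends yes-instances of bribery to no-instances of protection and vice versa, so a decision procedure for the constructive weighted-protection problem would decide the complement of the bribery problem. Since the bribery problem is \NP-hard, its complement is \textrm{coNP}-hard, and Theorem~\ref{thm:uni-price-restrict-np} follows.

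The one step that deserves genuine care is confirming that the two quantified conditions are exact negations of one another: that the protection problem's universal clause (for every $\VV_B$, the attacker cannot turn $c_m$ into the unique winner) is the faithful complement of bribery's existential clause (some $\VV_B$ makes $c_m$ the unique winner), and in particular that the tie-breaking convention and the non-degeneracy hypothesis (the failure of $\alpha_2=\alpha_3=\ldots=\alpha_m$) are carried over unchanged from the construction of Faliszewski et al., so that their \NP-hardness applies verbatim. Once that alignment is checked, no further combinatorial work is needed.
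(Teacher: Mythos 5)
Your proposal is correct and is essentially identical to the paper's own argument: the paper likewise sets $F=0$ so that $\VV_F=\emptyset$ is forced, observes that deciding whether the defender succeeds is then the exact negation of the \NP-hard weighted bribery question of Faliszewski et al.~\cite{faliszewski2009hard}, and concludes \textrm{coNP}-hardness. Your added care about the quantifier alignment and the non-degeneracy hypothesis is sensible but does not change the route.
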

In contrast, the destructive version is easy. Using the fact that $m$, the number of candidates, is a constant, we can prove the following Theorem~\ref{thm:uni-price-plu-p} through suitable enumerations.
\begin{theorem}\label{thm:uni-price-plu-p}
	If $m$ is a constant, then the destructive weighted-protection problem is in \PP\, for any scoring rule.
\end{theorem}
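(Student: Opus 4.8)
\noindent\textit{Proof proposal.} The plan is to reduce the destructive attacker's behaviour to a transparent greedy rule, to turn ``the defender succeeds'' into a constant number of threshold conditions, and then to search the defender's awarding decision over a polynomially-sized space. First I would analyze the attacker against a \emph{fixed} awarded set $\VV_F$ (equivalently, a fixed pool $S=\VV\setminus\VV_F$). Fix a target $c_i$ with $i\neq m$. On any bribed voter $v_j$, the most damaging list for the pair $(c_i,c_m)$ ranks $c_i$ first and $c_m$ last, so bribing $v_j$ raises $\mathrm{score}(c_i)-\mathrm{score}(c_m)$ by exactly
\[
 g_j^{(i)} \;=\; w_j\big[(\alpha_1-\alpha_{\tau_j^{-1}(i)})+(\alpha_{\tau_j^{-1}(m)}-\alpha_m)\big]\;\ge\;0 .
\]
Since every bribe costs one unit, an attacker aiming at $c_i$ simply bribes the $B$ voters of $S$ with the largest $g_j^{(i)}$. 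Writing $D_i=W(c_m)-W(c_i)\ge 0$ for the initial deficit, the attacker drives $c_i$ strictly above $c_m$ exactly when this top-$B$ sum exceeds $D_i$.

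Next I would argue that it suffices to consider these single-target attacks. If \emph{some} bribery of at most $B$ voters from $S$ makes a candidate $c_i$ beat $c_m$, then re-setting each of those voters' lists to rank $c_i$ first and $c_m$ last only raises $\mathrm{score}(c_i)$ and lowers $\mathrm{score}(c_m)$, so $c_i$ still beats $c_m$; the gain of this modified bribery is at most the unrestricted top-$B$ sum for $c_i$. Hence the attacker succeeds against $S$ iff the top-$B$ sum of $g_j^{(i)}$ over $S$ exceeds $D_i$ for some $i\neq m$, and the defender succeeds iff there is a $\VV_F$ with $|\VV_F|\le F$ such that for every $i\neq m$ the top-$B$ sum over $\VV\setminus\VV_F$ is at most $D_i$. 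This is a system of only $m-1$ (a constant number of) threshold constraints.

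The remaining difficulty is the defender's choice, since a single awarded voter may be dangerous for several targets at once. To tame this I would group voters into types according to their preference permutation; there are at most $m!$ types, a constant. Within a type $\tau$ every target's gain equals $w_j\,\gamma_\tau^{(i)}$ for a constant $\gamma_\tau^{(i)}\ge 0$, so inside a type the ordering by gain coincides with the ordering by weight \emph{for all targets simultaneously}. An exchange argument then shows that, for any fixed number $k_\tau$ of voters awarded in type $\tau$, awarding the $k_\tau$ heaviest voters of that type is optimal: swapping an awarded voter for an unawarded heavier voter of the same type only shrinks the pool's gains, hence weakly decreases every top-$B$ sum at once. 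Consequently it is enough to enumerate the per-type award counts $(k_\tau)_\tau$ with $\sum_\tau k_\tau\le F$; there are at most $(n+1)^{m!}$ such profiles, which is polynomial in $n$ for constant $m$. For each profile I would award the heaviest $k_\tau$ voters of each type, compute the top-$B$ sum for each of the $m-1$ targets, and test all thresholds, accepting iff some profile passes. Each test is polynomial, so the whole procedure runs in polynomial time and hence lies in \PP.

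I expect the main obstacle to be the coupling among the $m-1$ targets: a naive greedy awarding rule is not obviously correct because the set of heavy, high-swing voters differs from target to target. The two ideas that resolve this are the exchange argument, which collapses each type's contribution to a single integer (its award count), and the bound $m!$ on the number of types, which keeps the count-profile enumeration polynomial. A secondary point requiring care is the reduction to single-target attacks in the second paragraph, which must be justified even though the destructive attacker only needs \emph{some} candidate to overtake $c_m$.
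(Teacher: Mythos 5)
Your proposal is correct, and its outer structure coincides with the paper's proof: both reduce the defender's choice to a per-preference-type count profile (at most $m!$ types, hence $n^{O(m!)}$ profiles), justified by the same exchange/dominance observation that within a type the heaviest voters should be awarded (the paper packages this as Lemmas~\ref{lemma:destructive-1}--\ref{lemma:destructive-2} and Corollary~\ref{coro:maximal}). Where you diverge is the inner verification step. The paper, for each candidate $\VV_F$, also enumerates $\VV_B$ by type-counts and then enumerates how many bribed voters are reassigned to each of the $m!$ possible new preference lists, giving an $n^{3m!}$ brute force. You instead characterize the attacker's best response in closed form: reduce to single-target attacks (rank $c_i$ first and $c_m$ last on every bribed list), observe that with unit bribing prices the optimal attack on target $c_i$ is to take the top-$B$ voters by the gain $g_j^{(i)}=w_j\Delta_{ij}$, and test $m-1$ threshold inequalities. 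This is exactly the ``minmax vector addition'' viewpoint the paper develops in Appendix~\ref{appsec:dominance} and uses in Lemma~\ref{lemma:destructive-veri-p} for the arbitrary-$m$ destructive case, so your argument in effect combines the two ingredients the paper keeps separate. What your route buys is a faster and cleaner inner loop (greedy per target rather than a second and third round of enumeration), and it makes transparent why only $m-1$ constraints matter; what the paper's blunter enumeration buys is uniformity with the proofs of Theorems~\ref{thm:assy-uniweight-np} and~\ref{thm:uni-weight-plu-p}, where the attacker's best response is not as easily collapsed. Your single-target reduction and the nonnegativity $g_j^{(i)}\ge 0$ are both justified correctly, so I see no gap.
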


\vspace{-2mm}
\subsection{The \$-Protection Problem}\label{subsec:unit-dollar}

This is the special case of the protection problem with $w_j=1$ for every $j$. The following two theorems illustrate the significant difference (in terms of complexity) between the general problem and its special case with symmetricity (i.e., $p_j^a=p_j^b$). 


\begin{theorem}\label{thm:assy-uniweight-np}
	For constant $m$ and any non-trivial scoring rule, both the constructive and destructive \$-protection problems are \NP-complete.
\end{theorem}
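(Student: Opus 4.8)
\noindent\emph{Proof proposal.}
The plan is to establish the two directions separately: membership in \NP, and \NP-hardness. The quantifier structure $\exists\,\VV_F\,\forall\,\VV_B$ of the protection problem naively places it in $\Sigma_2^p$, so the crux of membership is to collapse the inner universal quantifier by showing that, once the awarded set $\VV_F$ is fixed, the attacker's best response is computable in polynomial time when $m$ is constant and the voters are unweighted. I would then guess $\VV_F$, verify $\sum_{j:v_j\in\VV_F}p_j^a\le F$ in polynomial time, and accept iff the (polynomially computable) optimal attack on $\VV\setminus\VV_F$ fails; this certifies membership in \NP.

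For the inner attacker problem in the destructive case, fix a target candidate $c\neq c_m$ (only $m-1$ choices, a constant). Bribing voter $v_j$ and moving $c$ to the top and $c_m$ to the bottom raises the margin $\mathrm{score}(c)-\mathrm{score}(c_m)$ by a gain $g_j^{(c)}=(\alpha_1-\alpha_{\tau_j^{-1}(c)})+(\alpha_{\tau_j^{-1}(m)}-\alpha_m)\ge 0$ at cost $p_j^b$. Since each gain is determined by the pair of positions $(\tau_j^{-1}(c),\tau_j^{-1}(m))$, it takes at most $m(m-1)$ distinct values. Hence the attacker faces a knapsack with only a constant number $K=O(m^2)$ of distinct item values: grouping voters into the $K$ value classes, sorting each class by $p_j^b$, and enumerating the count vector $(t_1,\dots,t_K)\in\{0,\dots,n\}^K$ (polynomially many tuples because $K$ is constant) lets me compute, via prefix sums, the minimum bribing cost realizing each total gain, and thus decide whether any selection with cost $\le B$ beats the threshold $W(c_m)-W(c)$. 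The constructive case is analogous but requires $c_m$ to beat all $m-1$ opponents simultaneously; here I would enumerate over the constant number of classes indexed by a voter's original preference type together with the attacker-imposed preference pattern (with $c_m$ placed first), optimizing cost per original type by taking the cheapest voters. In both cases the attacker's optimum, and hence the verification that the defender succeeds, is computable in $n^{O(1)}$ time, so the problem lies in \NP.

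For \NP-hardness I would reduce from a weakly \NP-hard number problem such as \textsc{Subset Sum}, \textsc{Partition}, or \textsc{Knapsack}. Because the attainable gains $g_j^{(c)}$ are confined to the constant value set of the scoring rule (this is precisely what degrades the $\Sigma_2^p$-hardness of the weighted case down to \NP), the numeric content of the source instance cannot be hidden in the gains; instead I would encode the input numbers into the \emph{prices}, exploiting the freedom to set $p_j^a$ and $p_j^b$ independently. Concretely, I would build item voters carrying identical (unit-type) gains but prices reflecting the item sizes, keep the attack budget $B$ \emph{binding} so that the attacker must solve a genuine cost-constrained selection, and tune $F$, $B$, and the initial margin so that the defender can award a subset with $\sum p_j^a\le F$ that drives every feasible attack below the winning threshold \emph{iff} the underlying instance is solvable. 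The price asymmetry $p_j^a\neq p_j^b$ is essential here, and is exactly what is lost in the symmetric special case, foreshadowing the easiness result that follows.

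The main obstacle will be the hardness construction rather than membership. Three points need care: (i) realizing the required uniform gains and controlling the baseline scores $W(c_i)$ for an \emph{arbitrary} non-trivial scoring rule (where the only guarantee is $\alpha_1>\alpha_m$), which I expect to handle by adding a constant number of padding candidates and filler voters that fix all other candidates' scores out of contention; (ii) arranging the two budgets and the margin so that blocking the attacker amounts to a two-sided ($\le$ and $\ge$) constraint on the protected set, turning the $\le F$ awarding budget together with the blocking requirement into an exact subset-sum condition; and (iii) ensuring the same construction, or a minor variant, yields hardness for both the constructive and the destructive versions. Combining the \NP\ upper bound with this reduction gives \NP-completeness. \qed
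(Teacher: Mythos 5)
Your overall architecture matches the paper's: guess $\VV_F$, show the attacker's best response is polynomial-time computable when $m$ is constant and weights are unit (collapsing the inner $\forall$ and giving \NP\ membership), then prove hardness by a reduction from a partition-type problem with the numbers hidden in the asymmetric prices. Your membership argument is sound and in the destructive case is arguably cleaner than the paper's: the paper enumerates $n^{m!}$ candidate sets $\VV_B$ and $n^{m!}$ ways of rewriting preferences (using the fact that within a preference class the attacker always takes the cheapest voters), whereas you observe that the per-voter gain takes only $O(m^2)$ distinct values and solve a knapsack with constantly many distinct item values. Both are valid for constant $m$.

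The hardness direction, however, is a plan rather than a proof, and it is missing the one idea that makes the paper's reduction close. You correctly say the $\le F$ awarding constraint and the blocking requirement should combine into an exact subset-sum condition, but two one-sided inequalities on \emph{different} price vectors ($\sum_{\VV_F}p_j^a\le F$ and $\sum_{\VV^*\setminus\VV_F}p_j^b> B$) do not pin down $\sum_{\VV_F}a_j$ unless you also control the \emph{cardinality} of $\VV_F$. The paper achieves this by (i) reducing from a balanced-partition variant in which the target subset must contain exactly $n$ of the $2n$ integers, (ii) adding a large uniform offset to both prices, $p_j^a=4q+a_j$ and $p_j^b=4q-a_j$, so that the budget $F=4qn+q$ forces $|\VV_F|\le n$, (iii) arranging the vote tallies ($4n-1$ voters for the winner versus $2n$ for the designated candidate, with the extra voters made unbribable) so the attacker must bribe at least $n$ key voters, and (iv) imposing the extra precondition $a_n+\cdots+a_{2n-1}\ge q+1$ on the source instance so that $|\VV_F|\le n-1$ would leave $n$ bribable voters cheap enough to fit in $B$, forcing $|\VV_F|\ge n$. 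With $|\VV_F|=n$ fixed, the offsets cancel and the two inequalities become $\sum_{\VV_F}a_j\ge q$ and $\sum_{\VV_F}a_j\le q$. Without this cardinality mechanism (or an equivalent one) your reduction does not yield an "iff", so you should either supply it or find a substitute before claiming completeness; the rest of your outline (padding candidates and filler voters for a general non-trivial scoring rule, and the coincidence of the constructive and destructive problems at $m=2$) is consistent with what the paper does.
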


\begin{theorem}
	\label{thm:uni-weight-plu-p}
	For constant $m$, both destructive and
	constructive symmetric \$-protection problems are in $P$ for any scoring rule.
\end{theorem}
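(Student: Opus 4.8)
\noindent\textit{Proof plan.}
The plan is to exploit that $m$ is constant in two ways: the number of distinct preference lists, and hence the number of voter \emph{types}, is at most $m!=:K$, a constant; and the number of competitors the attacker must deal with is constant. Call two voters the same type if they share the same preference list, and for a type $t$ let $s_t(c)\in\{\alpha_1,\dots,\alpha_m\}$ be the score this list assigns to candidate $c$. Since the election is unweighted, only the \emph{counts} of bribed/awarded voters per type and their prices matter. The first step is a structural reduction of the search space. Because $p_j^a=p_j^b=:p_j$ in the symmetric case, within any fixed type the voter cheapest for the defender to award is exactly the voter cheapest for the attacker to bribe. An exchange argument then shows an optimal defender may, within each type, award a prefix of the cheapest voters: replacing an awarded expensive type-$t$ voter $v$ by a non-awarded cheaper type-$t$ voter $v'$ never increases the defender's cost (here $p^a_{v'}=p_{v'}\le p_v=p^a_v$ uses symmetry) and never helps the attacker, since type-$t$ voters are interchangeable up to price and any successful attacker bribe using $v$ can be rerouted through the cheaper, identical $v'$. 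The same interchangeability forces the attacker to bribe, within each type, a prefix of the cheapest still-available voters. Hence a defender action is a vector $(a_1,\dots,a_K)$ and an attacker action a vector $(b_1,\dots,b_K)$, each enumerable in $O(n^K)$ time.

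For the destructive case, fix a target $c\neq c_m$. Ranking $c$ first and $c_m$ last in a bribed vote is optimal and yields a per-voter gain toward $\mathrm{score}(c)-\mathrm{score}(c_m)$ equal to $g_t^{(c)}=(\alpha_1-s_t(c))+(s_t(c_m)-\alpha_m)$, which depends only on the type. Given a defender vector $(a_1,\dots,a_K)$, I would decide whether the attacker can push some $c$ past $c_m$ by maximizing $\sum_t b_t\,g_t^{(c)}$ over feasible $(b_1,\dots,b_K)$ within budget $B$, which is just enumeration over the constantly many type-count vectors; the defender succeeds iff some feasible $(a_1,\dots,a_K)$ blocks every target $c$. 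This is polynomial.

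The constructive case is the step I expect to be the main obstacle, because the attacker must beat all $m-1$ competitors simultaneously. After ranking $c_m$ first in every bribed vote (clearly optimal for $c_m$), the attacker must still distribute the residual scores $\alpha_2,\dots,\alpha_m$ among the competitors, one permutation per bribed voter. For a fixed $(b_1,\dots,b_K)$, the final score of $c_m$ and the base scores $S_c$ of the competitors (from the un-bribed voters) are determined, and the attacker's remaining freedom is a min--max assignment: choose a nonnegative integer matrix $(y_{k,c})$, indexed by score levels $k\in\{2,\dots,m\}$ and competitors $c$, with all row and column sums equal to $N=\sum_t b_t$, giving competitor $c$ the added score $\sum_k y_{k,c}\alpha_k$, and minimize $\max_c\big(S_c+\sum_k y_{k,c}\alpha_k\big)$. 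By Birkhoff--von Neumann such an integer matrix decomposes into $N$ per-voter permutations, so this transportation polytope captures exactly the legal assignments; as it is integral, the min--max value is computable in polynomial time (binary search on a threshold $\theta$, each feasibility test being a transportation program of constant dimension $(m-1)^2$). The attacker wins with this $(b_1,\dots,b_K)$ iff the min--max value is below $c_m$'s final score.

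Finally I would assemble the bounds: $O(n^K)$ defender vectors times $O(n^K)$ attacker vectors times a polynomial assignment solve, with $K\le m!$ constant, gives a polynomial-time algorithm for both problems. The two delicate points are (i) making the exchange argument rigorous, where the symmetry $p_j^a=p_j^b$ is essential in aligning ``cheap for the defender'' with ``cheap for the attacker'' --- precisely the alignment that fails in the asymmetric setting, consistent with the \NP-completeness in Theorem~\ref{thm:assy-uniweight-np} --- and (ii) the integrality of the constructive assignment step, which is what keeps the attacker's best response computable despite the coupling across competitors. \qed
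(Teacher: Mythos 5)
Your proposal is correct, and its skeleton coincides with the paper's: exploit the symmetry $p_j^a=p_j^b$ so that within each of the at most $m!$ preference types the cheapest voters dominate (this is exactly the paper's dominance relation and its Corollaries on maximality of $\VV_F$ and $\VV_B$), conclude that both players may be assumed to take per-type prefixes of cheapest voters, and enumerate the $O(n^{m!})$ type-count vectors for $\VV_F$ and then for $\VV_B$. Where you diverge is the innermost step, deciding how the bribed voters' new preference lists are chosen. The paper handles this by brute force: since only the counts of bribed voters reassigned to each of the $m!$ possible lists matter, it enumerates a third factor of $n^{m!}$ and checks each outcome, giving $n^{3m!}$ total. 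You instead solve this step exactly --- for the destructive case via the per-type gain $g_t^{(c)}$ toward a fixed target (essentially the paper's own minmax-vector-addition viewpoint from the appendix), and for the constructive case via a min--max transportation program justified by Birkhoff--von Neumann integrality. Both are polynomial for constant $m$; the paper's enumeration is more elementary and uniform across the two cases, while your transportation argument is sharper (polynomial in $n$ with exponent independent of $m$ for that subroutine) and isolates more clearly why the attacker's coupled best response over $m-1$ competitors remains tractable. Your identification of the symmetry $p_j^a=p_j^b$ as the exact point where the argument would break in the asymmetric setting also matches the paper's dichotomy with Theorem~\ref{thm:assy-uniweight-np}.
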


\section{The Case of Arbitrarily Many Candidates}
\label{sec:arbitrary-many-candidates}

\subsection{The Case of Constructive Attacker}


The following Theorem \ref{thm:sigma-2-price} shows $\Sigma_2^p$-hardness for the most special cases of the constructive weighted-\$-protection problem, namely $w_j=p_j^a=p_j^b=1$ (unit-protection). It thus implies readily the $\Sigma_2^p$-hardness for the more general constructive \$-protection and constructive weighted-protection.

\begin{theorem}
	\label{thm:sigma-2-price}
	For arbitrary $m$, the $r$-approval constructive unit-protection problem is $\Sigma_2^p$-complete. 
\end{theorem}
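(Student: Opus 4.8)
I want to show that the $r$-approval constructive unit-protection problem is $\Sigma_2^p$-complete for arbitrary $m$. Membership in $\Sigma_2^p$ already follows from Lemma~\ref{lemma:general-sigma-complete}, so the entire burden is the $\Sigma_2^p$-hardness, and crucially I must obtain it under the most restrictive parameter regime $w_j = p_j^a = p_j^b = 1$. This unit restriction is exactly what makes the problem hard to reduce to: I can no longer encode numeric weights or prices into a knapsack-style gadget the way Lemma~\ref{lemma:general-hardness} did with $m=2$. Instead, all of the combinatorial structure must be pushed into the candidate set, which is why arbitrary $m$ is essential here.

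**The plan.** I would reduce from a canonical $\Sigma_2^p$-complete quantified problem, most naturally $\exists\forall$-$3$SAT (or a quantified variant of a covering/independent-set problem that aligns better with approval scores). The $\exists$ block of variables should correspond to the defender's choice of the protected set $\VV_F$, and the $\forall$ block together with the inner satisfiability check should correspond to the attacker's bribery response $\VV_B$ and the resulting scores. Concretely, I would build one candidate per clause (or per literal) so that ``candidate $c$ overtakes $c_m$'' encodes ``the attacker found a falsifying/unsatisfying assignment.'' Since prices and weights are all $1$, the budgets $F$ and $B$ become pure cardinality constraints: the defender may award at most $F$ voters and the attacker may bribe at most $B$ of the remaining voters. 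The main design problem is therefore to simulate the \emph{logical} coupling between the two players using only \emph{counting}, by introducing many parallel ``copies'' of voters (padding/duplication gadgets) so that protecting a voter in one group forces a specific effect in the score differential of exactly one candidate.

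**The key steps, in order.** First, I would fix the target candidate $c_m$ and set up, for each variable in the $\exists$ block, a voter gadget such that awarding it versus not awarding it encodes the truth value the defender commits to; the defense budget $F$ is tuned so that the defender must choose exactly one of the two options per variable. Second, for each clause I would introduce a dedicated candidate whose current score relative to $c_m$ is a known margin, arranged so that the attacker, operating within budget $B$, can push some clause-candidate above $c_m$ \emph{iff} the $\forall$-assignment it picks (via which voters it bribes among the unprotected ones) corresponds to a clause left unsatisfied by the defender's committed assignment. Third, I would verify the two directions: (yes-instance) if the $\exists\forall$ formula is true, the defender's awarding set derived from the satisfying $\exists$-assignment blocks every attacker response; (no-instance) if the formula is false, then for \emph{every} awarding set there is an attacker response realizing an overtaking candidate. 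Throughout I must confirm that all gadget sizes are polynomial and that, in the unit regime, the score arithmetic with the $r$-approval vector $(1,\dots,1,0,\dots,0)$ actually produces the intended margins.

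**The main obstacle.** The hard part will be enforcing the $\exists$-block commitment and the logical clause check using only unit prices and unit weights. With arbitrary prices one could simply make certain voters ``expensive'' to award or bribe; here I must instead replicate voters and candidates to manufacture the needed thresholds purely through cardinality, while simultaneously ensuring the attacker cannot ``cheat'' by spending its budget $B$ in an unintended way (e.g., attacking across several clause-candidates at once or exploiting the protected set to shift scores unpredictably). Making the reduction robust against such unintended attacker strategies—so that the only profitable bribery corresponds to a genuine falsifying assignment—is the delicate technical core, and I expect it to require careful choice of $r$, of the per-candidate score margins, and of the budgets $F$ and $B$ relative to the gadget multiplicities.
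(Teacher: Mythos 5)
Your high-level architecture matches the paper's: membership comes from Lemma~\ref{lemma:general-sigma-complete}, hardness comes from a reduction out of a $\Sigma_2^p$-complete quantified problem in which the $\exists$ block is realized by the defender's awarding set, the $\forall$ block by the attacker's bribery, and all budget constraints become cardinality constraints padded by duplication. However, what you have written is a plan, not a proof, and the parts you defer (``the delicate technical core'') are exactly where the actual work lies. Two gaps are concrete. First, you have no mechanism preventing the defender from spending its budget $F$ on the $\forall$-side gadgets instead of committing to an $\exists$-assignment; if the defender can protect the voters that the attacker needs in order to realize a universally quantified choice, the intended correspondence collapses. The paper resolves this by reducing from a tailored variant of $\exists\forall$ three-dimensional matching ($\exists\forall$3DM$'$) and replicating each $M_2$-voter $m_1-t+1$ times, which exceeds the defense budget $F=m_1-t$, so the attacker can always reach an unprotected copy of any $M_2$-triple no matter what the defender does. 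You would need an analogous replication argument, and you do not supply one. Second, the heart of the paper's proof is a tight linear counting argument: from $\alpha+\beta+\gamma\le n$ and $4\alpha+3\beta+\gamma\ge 3n+t$ (scores contributed by bribed $M_1$-, $M_2$-, and dummy voters under $4$-approval) one forces $\alpha=t$, $\beta=n-t$, and equality throughout, which is what converts an arbitrary successful bribery into an exact perfect matching $U_1\cup U_2$. This is precisely the step that rules out the ``unintended attacker strategies'' you worry about, and it works because every $M_1$-triple touches exactly four key candidates and every $M_2$-triple exactly three; clause gadgets for $\exists\forall$-3SAT do not have this uniform incidence structure, so your clause-candidate margins would not yield such a clean forcing argument without substantial additional normalization. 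Until you either construct these gadgets explicitly or switch to a source problem with the rigid incidence structure of 3DM, the reduction is not established.
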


Membership in $\Sigma_2^p$ follows directly from Lemma~\ref{lemma:general-sigma-complete}. To prove Theorem~\ref{thm:sigma-2-price}, it suffices to show the following.

\begin{lemma}
	\label{lemma:sigma-2-price}
	For arbitrary $m$, the $r$-approval constructive unit-protection problem is $\Sigma_2^p$-hard even if $r=4$.
\end{lemma}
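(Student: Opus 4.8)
\textbf{Proof Plan for Lemma~\ref{lemma:sigma-2-price}.}

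The plan is to reduce from a canonical $\Sigma_2^p$-complete problem of the $\forall\exists$ form. A natural choice is a quantified variant of a covering or packing problem, or directly from a $\Sigma_2^p$-complete quantified boolean formula such as $\exists X \forall Y\, \phi(X,Y)$ where $\phi$ is in DNF (so that its negation is in CNF, matching the attacker's "any CNF can be satisfied" structure). Since the protection problem itself has the shape "does there exist a defense set $\VV_F$ such that for \emph{any} bribery set $\VV_B$ the attacker fails," the outer existential quantifier corresponds to the defender's choice of awarded voters, and the inner universal quantifier corresponds to ranging over all admissible bribery moves. The crux is to encode the formula's existential variables by defender choices and its universal variables by attacker choices, using the rigid combinatorial budget $F$ (number of voters awarded, since all prices and weights are~$1$) rather than arbitrary numeric weights --- this is exactly why arbitrary $m$ and $r$-approval become essential.

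First I would set up a variable gadget for each existential variable $x_i$: a small group of voters whose awarding (or not awarding) encodes the truth assignment to $x_i$. Because we are in the unit-price, unit-weight regime, the defender's only lever is \emph{which} voters to protect subject to a cardinality bound $F$; I would tune $F$ so that the defender is forced to make a clean binary choice per existential variable (a "select exactly one of two" gadget, enforced by making any deviation either exceed the budget or immediately let the attacker win). Second, I would build candidate gadgets --- one designated candidate $c_m$ the attacker wants to promote, plus auxiliary candidates --- arranged so that the attacker's successful bribery corresponds precisely to finding a satisfying assignment of the universal variables to the \emph{negated} clauses. The value $r=4$ should enter because each clause in a $3$-CNF-like encoding needs a fixed constant number of approval slots to simultaneously track a literal, its complement, a clause-membership marker, and a padding/normalization candidate; four approved positions give just enough room to make the score bookkeeping tight.

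The heart of the argument is the score-balancing: I would normalize so that, absent any bribery, $c_m$ ties or narrowly leads every auxiliary candidate, and then show that the attacker can push some auxiliary candidate strictly above $c_m$ within budget $B$ \emph{if and only if} the inner universal player has a move falsifying $\phi$ given the defender's encoded assignment. Protected (awarded) voters anchor scores that the attacker cannot disturb, so awarding the "correct" voters for a satisfying $X$ blocks every attacker response, mirroring $\forall Y\,\phi$. I would choose $B$ as a tight cardinality threshold so that the attacker can only afford to bribe one "clause-activating" bundle at a time, forcing the correspondence between bribery sets and clause-violating assignments to be exact rather than approximate.

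The main obstacle I anticipate is the \emph{tightness of the two nested thresholds simultaneously}. Both $F$ (defender) and $B$ (attacker) are pure cardinality bounds here, with no freedom to exploit distinct weights or prices, so every gadget must be padded with dummy voters and dummy candidates so that the \emph{only} budget-feasible defender strategies are the intended assignment encodings, and the \emph{only} budget-feasible attacker responses are the intended clause attacks --- with no "cheating" strategy that spends the budget in an unintended way to either over-protect or over-bribe. Getting the padding constants and the exact values of $F$, $B$, and the per-candidate score offsets to line up, so that correctness holds in both the "yes" and "no" directions, is the delicate part; the role of $r=4$ is precisely to give enough slots to absorb this padding while keeping the scores integral and the winner-determination strict. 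Once the gadget constants are fixed, verifying the forward direction (a satisfying $X$ yields a safe defense) and the reverse direction (a safe defense yields a satisfying $X$) should be routine case analysis over the constant-size gadgets. \qed
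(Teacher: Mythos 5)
Your high-level framing is right --- the defender's choice realizes the outer $\exists$ and the attacker's bribery realizes the inner $\forall$, and the paper's actual proof is built on exactly this alignment (it reduces from an $\exists\forall$ variant of 3-dimensional matching, $\exists\forall$3DM$'$, rather than from quantified Boolean formulas, with the unawarded voters encoding the existentially chosen sub-matching $U_1$ and the bribery set encoding the universally chosen $U_2$). But what you have written is a plan with the two hardest steps left as acknowledged open items, and one of them hides a genuine missing idea rather than mere bookkeeping. In a reduction of this shape the defender is free to spend its budget $F$ on \emph{any} voters, including the voters that implement the universal player's moves; if it can do so, it restricts the attacker's range and the ``for all $Y$'' semantics collapses. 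Your gadget description gives no mechanism preventing this. The paper's construction needs a specific device for exactly this: every $M_2$-voter (the universal side) is replicated $m_1-t+1$ times while the defense budget is only $F=m_1-t$, so no defender strategy can remove all copies of any universal move. Without an analogous replication or price/padding argument for your universal-variable gadgets, the ``safe defense $\Rightarrow$ satisfying $X$'' direction fails, because a safe defense might simply have disabled the universal player.

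The second unresolved item is that ``forcing a clean binary choice per existential variable'' and ``the attacker can only afford the intended clause attacks'' are asserted, not constructed. In the paper this forcing is achieved by an exact counting argument: the budgets and scores are tuned so that the attacker's only hope is to bribe voters satisfying $\alpha+\beta+\gamma\le n$ and $4\alpha+3\beta+\gamma\ge 3n+t$ simultaneously, which pins down $\alpha=t$, $\beta=n-t$ and forces the bribed set to be precisely a perfect matching $U_1\cup U_2$; the ``score $4$ per $M_1$-voter versus score $3$ per $M_2$-voter versus score $1$ per dummy'' accounting is where $r=4$ actually enters (each $M_1$-voter approves three element candidates plus the leading candidate). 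Your stated rationale for $r=4$ (literal, complement, clause marker, padding) is not backed by any concrete gadget, and there is no analogue of the tight double inequality that makes both directions of the equivalence exact. So as it stands the proposal identifies the right proof architecture but does not yet contain a proof; the two places you flag as ``delicate'' are precisely where the substantive ideas of the paper's argument live.
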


To prove Lemma~\ref{lemma:sigma-2-price}, we reduce from a variant of the $\exists\forall$ 3 dimensional matching problem (or $\exists\forall$3DM), which is called $\exists\forall$3DM$'$ and defined below. The classical $\exists\forall$ 3DM is $\Sigma_2^p$-hard~ proved by Mcloughlin \cite{mcloughlin1984complexity}.
By leveraging the proof by Mcloughlin ~\cite{mcloughlin1984complexity}, we can show the $\Sigma_2^p$-hardness of the $\exists\forall$3DM$'$ problem.
\begin{center}
	\fbox{\begin{minipage}{0.98\textwidth}
			\textbf{$\exists\forall$3DM$'$:}
			Given a parameter $t$, three disjoint sets of elements $W$, $X$, $Y$ of the same cardinality, and two disjoint subsets $M_1$ and $M_2$ of $W\times X\times Y $ such that
			$M_1$ contains each element of $W\cup X\cup Y$ at most once. Does there exist a subset $U_1\subseteq M_1$ such that $|U_1|=t$ and for any $U_2\subseteq M_2$, $U_1\cup U_2$ is {\em not} a {\em perfect matching} (where a {\em perfect matching} is a subset of triples in which every element of $W\cup X\cup Y$ appears exactly once)?
	\end{minipage}}
\end{center}
\vspace{-2mm}
\begin{proof}[Proof of Lemma~\ref{lemma:sigma-2-price}]
	Given an arbitrary instance of $\exists\forall$ 3DM$'$, we construct an instance of the constructive unit-protection problem in $r$-approval election as follows. Recall that $r=4$ and thus every voter votes for 4 candidates.
	
	Suppose $|W|=|X|=|Y|=n$, $|M_1|=m_1$, $|M_2|=m_2$.	
	
	There are $3n+2$ key candidates, including:
	\begin{compactitem}
		\item $3n$ key candidates, each corresponding to one distinct element of $W\cup X\cup Y$ and we call them element candidates. The score of every element candidate is $n+\xi$;
		\item one key candidate called leading candidate, whose total score is $n+t+\xi-1$;
		\item one key candidate called designated candidate, whose total score is $\xi$.
	\end{compactitem}
	Here $\xi$ is some sufficiently large integer, e.g., we can choose $\xi=(m_1+m_2)n$.
	Besides key candidates, there are also 
	many dummy candidates, each of score either $1$ or $m_1-t+1$. The number of dummy candidates will be determined later.
	
	
	There are $m_1+m_2(m_1-t+1)$ key voters, including: 
	\begin{compactitem}
		\item $m_1$ key voters, each corresponding to a distinct triple in $M_1$ and we call them $M_1$-voters. For each $(w_i,x_j,y_k)\in M_1$, the corresponding voter votes for the $3$ candidates corresponding to elements $w_i$, $x_j$, $y_k$ together with the leading candidate;
		\item $m_2\cdot(m_1-t+1)$ key voters, each distinct triple in $M_2$ corresponds to exactly $m_1-t+1$ voters and we call them $M_2$-voters. For every $(w_i,x_j,y_k)\in M_2$, each of its $m_1-t+1$ corresponding voters vote for the $3$ candidates corresponding to elements $w_i$, $x_j$, $y_k$ together with one distinct dummy candidate. Since the $m_1-t+1$ voters are identical,
		we can view them as $m_1-t+1$ copies, i.e., every $M_2$-voter has $m_1-t+1$ copies.
	\end{compactitem}
	Besides key voters, there are also sufficiently many dummy voters. Each dummy voter votes for exactly one key candidate and 3 distinct dummy candidates. Dummy voters and dummy candidates are used to make sure that the score of key candidates are exactly as we have described. More precisely, if we only count the scores of key candidates contributed by key voters, then the element candidate corresponding to $z\in W\cup X\cup Y$ has a score of $d(z)=d_1(z)+(m_1-t+1)d_2(z)$ where $d_i(z)$ is the number of occurrences of $z$ in the triple set $M_i$ for $i=1,2$, and the leading candidate has a score of $m_1$. Hence, there are exactly $n+\xi-d(z)$ dummy voters who vote for the element candidate corresponding to $z$, and $n+t+\xi-1-m_1$ dummy voters who vote for the leading candidate.
	
	Overall, we create $\sum_{z\in W\cup X\cup Y} (n+\xi-d(z))+n+t+\xi-m_1-1$ dummy voters, and $3\sum_{z\in W\cup X\cup Y} (n+\xi-d(z))+3(n+t+\xi-m_1-1)+m_2$ dummy candidates. 

As the leading candidate is the current winner, the constructive unit-protection problem asks whether the election can be protected against an attacker attempting to make the designated candidate win. The defense budget is $F=m_1-t$ and the attack budget is $B=n$. In the following we show that the defender succeeds if and only if the given $\exists\forall$ 3DM$'$ instance admits a feasible solution $U_1$.
	
	\smallskip
	\noindent\textbf{``Yes" Instance of $\exists\forall$ 3DM$'$ $\to$ ``Yes" Instance of Constructive Unit-Protection.} Suppose the instance of $\exists\forall$ 3DM$'$ admits a feasible solution $U_1$, we show that the answer for constructive unit-protection problem is ``Yes".
	
	Recall that each $M_1$-voter corresponds to a distinct triple $(w_i,x_j,y_k)$ in $M_1$ and votes for 4 candidates -- the leading candidate and the three candidates corresponding to $w_i,x_j,y_k$. We do {\em not} award $M_1$-voters {corresponding} to the triples in $U_1$, but award all of the remaining $M_1$-voters. {The resulting} cost is exactly $F=m_1-t$. In what follows we show that after awarding voters this way, the attacker cannot make the designated candidate win.
	
	Suppose on the contrary, the attacker can make the designated candidate win by bribing $\alpha\le t$ voters among the $M_1$-voters, $\beta\le m_2$ voters among the $M_2$-voters, and $\gamma$ dummy voters. We claim that the following inequalities hold:
	\begin{subequations}
		\begin{align}
		&\alpha+\beta+\gamma\le n~\label{3dm'-1}\\
		&4\alpha+3\beta+\gamma\ge 3n+t ~\label{3dm'-2}
		\end{align}
	\end{subequations}
	Inequality~(\ref{3dm'-1}) follows from the fact that the attack budget is $n$ and the attacker can bribe at most $n$ voters. Inequality~(\ref{3dm'-2}) holds because of the following. Given that a candidate can get at most one score from each voter and that the attacker can bribe at most $n$ voters, bribing voters can make the designated candidate obtain a score at most $n+\xi$. Hence, the score of each {key candidate} other than the designated one should be at most $n+\xi-1$. Recall that without bribery, each of the $3n$ {element candidate} has a score of $n+\xi$ and the leading candidate has a score of $n+t+\xi-1$. Hence, the attacker should decrease at least 1 score from each element candidate and $t$ scores from the leading candidate, leading to a total score of $3n+t$. Note that an {$M_1$-voter} contributes 1 score to 4 key candidates, therefore it contributes in total a score of $4$ to the key candidates. Similarly an {$M_2$-voter} contributes a score of $3$, and a dummy voter contributes a score of $1$ to the key candidates. Therefore, by bribing (for example) an $M_1$-voter, the total score of all the element candidates and the leading candidate can decrease by at most $4$. Thus, inequality~(\ref{3dm'-2}) holds.

	In the following we derive a contradiction based on Inequalities (\ref{3dm'-1}) and (\ref{3dm'-2}).
	By plugging $\gamma\le n-\alpha-\beta$ into Inequality (\ref{3dm'-2}), we have
	$3\alpha+2\beta\ge 2n+t.$
	Since $\beta\le n-\alpha$, we have
	$3\alpha+2\beta\le \alpha+2n\le 2n+t$.
	Hence, $3\alpha+2\beta= \alpha+2n= 2n+t$, and we have $\alpha=t$ and $\beta=n-t$. Note that the defender has awarded every {$M_1$-voter} except the ones corresponding to $U_1$, where $|U_1|=t$. Hence, every voter corresponding to the triples in $U_1$ is bribed. Furthermore, as Inequality (\ref{3dm'-2}) is tight, bribing voters makes the designated candidate have a score of $n+\xi$, while making each of the other key candidates have a score of $n+\xi-1$. This means that the score of each element candidate decreases exactly by $1$. Hence, the attacker has selected a subset of $M_2$-voters such that together with the $M_1$-voters corresponding to triples in $U_1$, these voters contribute exactly a score of 1 to every element candidate. Let $U_2$ be the set of triples to which the bribed $M_2$-voters correspond, then $U_1\cup U_2$ forms a 3-dimensional matching, which is a contradiction to the fact that $U_1$ is a feasible solution to the $\exists\forall$ 3DM$'$ instance. Thus, the attacker cannot make the designated candidate win and the answer for the constructive unit-protection problem is ``Yes".
	
	\smallskip
	\noindent\textbf{\say{No} Instance of $\exists\forall$ 3DM$'$ $\to$ \say{No} Instance of Constructive Unit-Protection.} Suppose for any $U_1\subseteq M_1$, $|U_1|=t$ there exists $U_2\in M_2$ such that $U_1\cup U_2$ is a perfect matching, we show that the answer to the constructive unit-protection problem is ``No". Consider an arbitrary set of voters awarded by the defender. Among the awarded voters, let $H$ be the set of triples that corresponds to the awarded $M_1$-voters. As $|H|\le m_1-t$, $|M_1\setminus H|\ge t$. We select an arbitrary subset $H_1\subseteq M_1\setminus H$ such that $|H_1|=t$. There exists some $H_2\subseteq M_2$ such that $H_1\cup H_2$ is a perfect matching, and we let the attacker bribe the set of voters corresponding to triples in $H_1\cup H_2$. Note that this is always possible as every $M_2$-voter has $m_1-t+1$ copies, so no matter which $M_2$-voters are awarded the briber can always select one $M_2$-voter corresponding to each triple in $H_2$. It is easy to see that by bribing these voters, the score of every element candidate decreases by $1$, and the score of the leading voter decreases by $t$. Meanwhile, let each bribed voter vote for the designated candidate and three distinct dummy candidates, then the designated candidate has a score of $n+\xi$ and becomes a winner, i.e., the answer to the constructive unit-protection problem is ``No".	
	\qed
\end{proof}
\hspace{-1mm}
\noindent\textbf{Remark}. The proof of Lemma~\ref{lemma:sigma-2-price} can be easily modified to prove the $\Sigma_2^p$-hardness of $r$-approval constructive unit-protection problem for any fixed $r\ge 4$. Specifically, we can make the same reduction, and add dummy candidates such that every voter additionally votes for exactly $r-4$ distinct dummy candidates. 
\vspace{-2mm}
\subsection{The Case of Destructive Attacker}
\label{subsec: destructive-arbitrary}

\begin{theorem}
	\label{thm:destructive-np}
	Both $r$-approval destructive weighted-protection and $r$-approval (symmetric) \$-protection problems are \NP-complete.
\end{theorem}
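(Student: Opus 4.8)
I would prove the two claims in parallel, first settling membership and then hardness. For membership, note that although the general problem sits in $\Sigma_2^p$ (Lemma~\ref{lemma:general-sigma-complete}), both destructive variants considered here actually lie in \NP, because once the defender commits to an award set $\VV_F$ the remaining question -- whether a destructive attacker can still succeed -- is polynomial-time decidable. The plan is to use $\VV_F$ itself as the certificate: verify $\sum_{j:v_j\in\VV_F}p_j^a\le F$, and then for each candidate $c\neq c_m$ compute the \emph{minimum} attacker cost needed to make $c$ beat $c_m$. Here the only relevant quantity is the pairwise comparison $c$ versus $c_m$: bribing a voter $v_j$ and optimally reassigning it (placing $c$ in, and $c_m$ out of, the top $r$, which is possible whenever $m>r$, as in our constructions) yields a ``swing'' toward $c$ of $2w_j$ if $v_j$ votes for $c_m$ but not $c$, of $w_j$ if $v_j$ votes for neither or for both, and of $0$ otherwise; the attacker flips $c$ iff the total swing of the bribed set exceeds $W(c_m)-W(c)$. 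For weighted-protection the prices are unit, so the cheapest flip is found by greedily taking the voters of largest swing; for symmetric \$-protection the weights are unit, so swings lie in $\{0,1,2\}$ and the cheapest flip is a value-$\{1,2\}$ knapsack, solved by enumerating the number of swing-$2$ voters used and taking the cheapest voters of each type. In both cases the inner check is polynomial, so guess-and-verify places the problems in \NP.

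For hardness I would reduce from a suitable NP-complete problem. Two structural facts guide the design. First, the constant-$m$ versions are in \PP\ (Theorem~\ref{thm:uni-price-plu-p} for weighted-protection and Theorem~\ref{thm:uni-weight-plu-p} for symmetric \$-protection), so the hardness must be driven by \emph{arbitrarily many candidates}; I would therefore introduce one ``threat'' candidate per constraint of the source instance, one voter per item/set, and -- exactly as in the proof of Lemma~\ref{lemma:sigma-2-price} -- a pool of dummy voters and dummy candidates that pins every base score $W(c_i)$ to a prescribed value. Second, the all-unit case is easy, so the weights (for weighted-protection) and the symmetric prices (for \$-protection) must carry the combinatorial content; I would reduce from a packing/number-flavored NP-complete problem (such as X3C, $3$DM, or Subset-Sum) and encode the numeric side in the voter weights in the first problem and in the prices in the second. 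The budgets $F,B$ and the gaps $W(c_m)-W(c)$ would be tuned so that ``the defender can protect'' holds if and only if the source instance is a yes-instance, with the defender's award forced to be ``solution-shaped.''

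The hard part will be controlling the attacker's freedom. Under $r$-approval a bribed voter can be redirected to \emph{any} candidate, so the attacker's moves are inherently not localized to a single threat candidate: every bribed voter can simultaneously add a point to a chosen competitor and subtract one from $c_m$. Consequently I cannot make threats ``candidate-specific'' through the vote structure alone; the localization must come from the gaps and budgets, which is precisely why the weights/prices are indispensable. The main obstacle, then, is to choose the gaps large relative to $B$ and to set the weights/prices so that the only bribe sets capable of pushing some competitor past $c_m$ are the intended ones, while making the defender's budget just tight enough to neutralize exactly those sets iff a solution exists. Proving both directions of this correspondence -- and in particular ruling out unintended attacks that pool voters across several threat candidates -- is where essentially all of the work lies; the two problems then follow from the same template, differing only in whether the numbers reside in the weights or in the prices. \qed
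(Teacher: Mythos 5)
Your membership argument is sound and is essentially the paper's: given $\VV_F$, the attacker's best move against a fixed competitor $c_i$ is captured by the per-voter swing $\Delta_{ij}$ (the paper's minmax vector addition formulation, Lemma~\ref{lemma:equivalent}), and the inner optimization is a greedy selection when prices are unit (Lemma~\ref{lemma:destructive-veri-p}) and a knapsack with $O(1)$ distinct values when weights are unit (Lemma~\ref{lemma:destructive-veri-price-p}). Your swing values $\{0,w_j,2w_j\}$ match the paper's $w_j\Delta_{ij}$ exactly.

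The hardness half, however, is a plan rather than a proof, and the step you defer (``ruling out unintended attacks that pool voters across several threat candidates'') is precisely where the theorem's content lies; you name the obstacle but do not overcome it. The paper's reduction (Lemma~\ref{lemma:destructive-np}) resolves it with two ideas you do not have. First, the numeric data carry essentially no combinatorial content: all key voters get one large weight $Q$ (resp.\ price $1$) and all dummy voters weight $1$ (resp.\ price exceeding both budgets), so the weights/prices serve only to make dummies irrelevant, while the entire structure lives in the $3$-approval incidence between triple-voters and element candidates --- your suggestion to ``encode the numeric side'' Subset-Sum-style in the weights or prices points in the wrong direction and would anyway clash with your own membership argument, which shows the pairwise attack problem is polynomially solvable. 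Second, the budgets are set to $F=\zeta$ and $B=\eta-\zeta$ so that the attacker can simply bribe \emph{every} unprotected key voter; bribing all $\eta$ key voters raises each element candidate by a fixed amount, and the protected set of $\zeta$ triples subtracts exactly $\zeta-1$ units of swing from every element candidate if and only if it is a perfect matching (otherwise some element is covered twice and the corresponding candidate loses at most $\zeta-2$, hence overtakes the leader). This uniform-discount mechanism is what localizes the attack and makes both directions of the correspondence go through; without it, your proposed tuning of ``gaps large relative to $B$'' has no concrete instantiation. Note also that the construction needs $r\ge 3$ (triples), a restriction your sketch does not surface.
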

The proof of Theorem~\ref{thm:destructive-np} is based on a crucial observation of the equivalence between the destructive weighted-\$-protection problem (under an arbitrary scoring rule) and the {\it minmax vector addition problem} we introduce (see Appendix~\ref{appsec:dominance}). The full proof of Theorem~\ref{thm:destructive-np} can be found in Appendix~\ref{appsec:des}.


\vspace{-2mm}
\section{Summary of Results}
\label{sec:discussion}
\vspace{-2mm}
The preceding characterization of the computational complexity of the protection problem in various settings is summarized in Table~\ref{table:1}.
	\begin{table*}[!htbp]
\caption{Summary of results for {\em single-winner} election under the {\em $r$-approval} scoring rule: 
``Symmetric'' means $p_j^a=p_j^b$ for every $j$ and ``asymmetric'' means otherwise;
hardness results that are proved for the case with only two candidates (i.e., $m=2$) are marked with a ``$\diamond$'' (Note that when $m=2$, the $1$-approval rule is the same as the plurality, veto or Borda scoring rule. It can be shown that with a slight modification, the hardness results hold for any {\em non-trivial} scoring rule);
algorithmic results (marked with a ``\PP'') hold for arbitrary scoring rules;
the complexity of the protection problem against a destructive attacker with $w_j=p_j^a=p_j^b=1$ remains open;
for most variants of the protection problem against a constructive attacker, we only provide hardness results and we do not know yet whether or not they belong to the class of \text{coNP}-complete or $\Sigma_2^p$-complete proble.\label{table:1}
}
\centering
\label{my-label}
\scalebox{0.85}{
\begin{tabular}{|c|c|c|c|}
\hline
\# of candidates & Model parameters & Destructive attacker & Constructive attacker \\ \hline
\multirow{5}{*}{constant}  & Weighted, Priced, Asymmetric  & $\Sigma_2^p$-complete $\hspace{1mm}\diamond$ (Thm~\ref{thm:sigma-hard-general-destructive}) & $\Sigma_2^p$-complete $\hspace{1mm}\diamond$ (Thm~\ref{thm:sigma-hard-general-destructive}) \\ \cline{2-4}
& Weighted, $p_j^a=p_j^b=1$  & \PP\, (Thm~\ref{thm:uni-price-plu-p}) & \textrm{coNP}-hard (Thm~\ref{thm:uni-price-restrict-np})  \\ \cline{2-4}
&  $w_j=1$, Priced, Asymmetric & $\text{\NP}$-complete $\hspace{1mm}\diamond$ (Thm~\ref{thm:assy-uniweight-np}) & $\text{\NP}$-complete $\hspace{1mm}\diamond$ (Thm~\ref{thm:assy-uniweight-np}) \\ \cline{2-4}
& $w_j=1$, Priced, Symmetric & \PP\, (Thm~\ref{thm:uni-weight-plu-p}) & \PP\, (Thm~\ref{thm:uni-weight-plu-p}) \\ \cline{2-4}
&  $w_j=1$, $p_j^a=p_j^b=1$ & \PP\,(Thm~\ref{thm:uni-weight-plu-p}) & \PP\,(Thm~\ref{thm:uni-weight-plu-p}) \\ \hline
\multirow{5}{*}{arbitrary}  & Weighted, Priced, Asymmetric  & $\Sigma_2^p$-complete $\hspace{1mm}\diamond$ (Thm~\ref{thm:sigma-hard-general-destructive}) & $\Sigma_2^p$-complete $\hspace{1mm}\diamond$ (Thm~\ref{thm:sigma-hard-general-destructive}) \\ \cline{2-4}
& Weighted, $p_j^a=p_j^b=1$  & \NP-complete (Thm~\ref{thm:destructive-np}) & $\Sigma_2^p$-hard (Thm~\ref{thm:sigma-2-price})\\ \cline{2-4}
&  $w_j=1$, Priced, Asymmetric & $\text{\NP}$-complete  (Thm~\ref{thm:destructive-np})  & $\Sigma_2^p$-hard (Thm~\ref{thm:sigma-2-price}) \\ \cline{2-4}
&  $w_j=1$, Priced, Symmetric & $\text{\NP}$-complete  (Thm~\ref{thm:destructive-np}) & $\Sigma_2^p$-hard (Thm~\ref{thm:sigma-2-price})\\ \cline{2-4}
&  $w_j=1$, $p_j^a=p_j^b=1$ & ? & $\Sigma_2^p$-hard (Thm~\ref{thm:sigma-2-price}) \\ \hline
 \end{tabular}}
\end{table*}

		We remark three natural open problems for future research. One is the complexity of the destructive protection problem with $w_j=p_j^a=p_j^b=1$. It is not clear whether the problem is in \PP\, or is \NP-complete. Another is the constructive protection problem with $p_j^a=p_j^b=1$ and arbitrary voter weights. We only show its coNP-hardness, it is not clear whether or not this problem is coNP-complete. The third problem is the complexity of $r$-approval constructive unit-protection problem when $r=\{1,2,3\}$ as our hardness proof only holds when $r\ge 4$.

\vspace{-3mm}
\section{Conclusion}
\label{sec:conclusion}
\vspace{-2mm}
We introduced the protection problem and characterized its computational complexity.
We showed that the problem, in general, is $\Sigma_2^p$-complete, and identified settings in which the problem becomes easier.
Moreover, we showed the protection problem in some parameter settings is polynomial-time solvable, suggesting that these parameter settings can be used for real-work election applications.

In addition to the open problems mentioned in Section~\ref{sec:discussion}, the following are also worth investigating.
First, our hardness results would motivate the study of approximation or FPT (fixed parameter tractable) algorithms for the protection problem. Note that even polynomial time approximation schemes can exist for $\Sigma_2^p$-hard problems (see, e.g., By Caparara et al. \cite{caprara2014study}). It is thus desirable that a similar result can be obtained for some variants of the protection problem.
Second, how effective is this approach when applied towards the problem of defending against other types of attackers that can, e.g., add or delete votes?
Third, much research remains to be done in extending the protection problem to accommodate other scoring rules such as Borda and Copeland.


\clearpage

\normalsize

\section{Appendix}
\label{sec:Apeendix}

\subsection{Destructive Weighted-Protection -- an Equivalent Formulation}\label{appsec:dominance}
We provide an equivalent formulation of the destructive weighted-protection problem under any scoring rule $\alpha=(\alpha_1,\cdots,\alpha_m)$, which will be very useful for several proofs throughout this paper. 

\begin{center}
	\fbox{\begin{minipage}{0.97\textwidth}
			\noindent\textbf{The minmax vector addition problem:} 
			
			\noindent{\em Input:}
			A vector $\vec{\Lambda}=(\Lambda(c_1),\Lambda(c_2), \ldots,\Lambda(c_{m-1}))$ where $\Lambda(c_i)$ is the score of $c_i$ in the absence of bribery. An $(m-1)$-vector $\vec{\Delta}_j=(\Delta_{1j},\Delta_{2j},\ldots,\Delta_{(m-1),j})$ for each voter $v_j$ where $\Delta_{ij}= \alpha_1-\alpha_{\tau_j^{-1}(i)}+\alpha_{\tau_j^{-1}(m)}-\alpha_{m}$. Awarding price $p_j^a$ and bribing price $p_j^b$ for voter $v_j$, $j=1,2,\ldots,n$. 
			Defense budget $F$ and attack budget $B$. 
			
			{\em Output}: Decide if there exists a subset $\VV_F\subseteq \VV$ such that
			\begin{itemize}
				\item $\sum_{j:v_j\in \in\VV_F} p_j^a\le F$; and
				\item For {\em any} subset $\VV_B\subseteq \VV\setminus\VV_F$ with $\sum_{j:v_j\in \VV_B} p_j^b\le B$, it holds that
				$$\left\|\vec{\Lambda}+\sum_{j:v_j\in\VV_B} w_j \vec{\Delta}_j\right\|_{\infty}\le \Lambda(c_m),$$
				where $\left\|\cdot\right\|_{\infty}$ is the infinity norm (i.e., the maximal absolute value among the $m-1$ coordinates).
			\end{itemize}
	\end{minipage}}
\end{center}

\begin{lemma}\label{lemma:equivalent}
	The answer to the destructive weighted-\$-protection problem is ``Yes" if and only if the answer to the corresponding minmax vector addition problem is ``Yes''.
	\end{lemma}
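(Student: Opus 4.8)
The plan is to establish the lemma through a single per-instance equivalence: for \emph{any} fixed defender set $\VV_F$ and \emph{any} bribed set $\VV_B\subseteq\VV\setminus\VV_F$ that respects the attack budget, the destructive attacker succeeds (makes some $c\neq c_m$ strictly beat $c_m$) if and only if the vector inequality $\big\|\vec{\Lambda}+\sum_{j:v_j\in\VV_B} w_j\vec{\Delta}_j\big\|_{\infty}\le \Lambda(c_m)$ is \emph{violated}. Since the budget constraints on $\VV_F$ and on $\VV_B$ are literally identical in the two problem statements, proving this equivalence for every $\VV_B$ immediately forces the existence of a good $\VV_F$ in one problem to be equivalent to its existence in the other, which is exactly the claim.

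First I would pin down the attacker's optimal move for a fixed $\VV_B$. A destructive attacker needs only one candidate to overtake $c_m$, so I would argue it is without loss of generality to commit to a single target $c_i$ ($i\neq m$) and, on every bribed voter, place $c_i$ in the top position and $c_m$ in the bottom position. This is legitimate because a bribed voter reports whatever list the attacker dictates; and for a fixed target, relative to any other reassignment of the bribed voters this canonical assignment can only raise $\mathrm{score}(c_i)$ (each bribed voter gives at most $w_j\alpha_1$) and only lower $\mathrm{score}(c_m)$ (each gives at least $w_j\alpha_m$), hence it maximizes the gap $\mathrm{score}(c_i)-\mathrm{score}(c_m)$ over all strategies using that bribed set. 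Thus the attacker succeeds with $\VV_B$ iff this maximal gap is positive for some $i\neq m$.

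Next I would compute the maximal gap in closed form. Writing $\Lambda(c_i)=\sum_j w_j\alpha_{\tau_j^{-1}(i)}$ for the no-bribery scores, a direct calculation gives
\[
\mathrm{score}(c_i)-\mathrm{score}(c_m)=\Lambda(c_i)-\Lambda(c_m)+\sum_{j:v_j\in\VV_B} w_j\,\Delta_{ij},
\]
where the per-voter term $w_j\big[(\alpha_1-\alpha_{\tau_j^{-1}(i)})-(\alpha_m-\alpha_{\tau_j^{-1}(m)})\big]=w_j\Delta_{ij}$ reproduces exactly the definition of $\Delta_{ij}$ in the minmax problem. Hence $c_i$ overtakes $c_m$ iff the $i$-th coordinate of $\vec{\Lambda}+\sum_{j\in\VV_B} w_j\vec{\Delta}_j$ strictly exceeds $\Lambda(c_m)$, and the attacker succeeds iff some coordinate exceeds $\Lambda(c_m)$.

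Finally I would reconcile this one-sided coordinate condition with the two-sided infinity norm. The key point is that $\Delta_{ij}\ge 0$ for all $i,j$, since $\alpha$ is non-increasing ($\alpha_1\ge\alpha_{\tau_j^{-1}(i)}$ and $\alpha_{\tau_j^{-1}(m)}\ge\alpha_m$); combined with $\Lambda(c_i)\ge 0$, every coordinate of $\vec{\Lambda}+\sum_j w_j\vec{\Delta}_j$ is nonnegative and so never falls below $-\Lambda(c_m)\le 0$. Therefore the lower-bound half of $\|\cdot\|_{\infty}\le\Lambda(c_m)$ holds automatically, and the infinity-norm condition is equivalent to ``every coordinate $\le\Lambda(c_m)$,'' i.e.\ to the attacker failing. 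Negating yields the desired per-$\VV_B$ equivalence and hence the lemma. I expect the main obstacle to be the reduction step: justifying rigorously that no multi-target or mixed reassignment of the bribed voters can outperform the canonical ``push $c_i$ up, push $c_m$ down'' attack, so that the gap formula above truly captures the attacker's best response; the sign analysis of $\Delta_{ij}$ that disposes of the absolute value is the other point needing care, though it is routine.
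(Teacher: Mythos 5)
Your proposal is correct and follows essentially the same route as the paper: both arguments rest on the observation that, for a fixed bribed set and a fixed target $c_i$, the attacker's best move is to place $c_i$ on top and $c_m$ at the bottom of every bribed list, which changes the score gap by exactly $\sum_{j:v_j\in\VV_B} w_j\Delta_{ij}$, so the attacker succeeds iff some coordinate of $\vec{\Lambda}+\sum_j w_j\vec{\Delta}_j$ strictly exceeds $\Lambda(c_m)$. Your explicit check that $\Delta_{ij}\ge 0$ (so the absolute value in the infinity norm is harmless) is a point the paper leaves implicit, but the substance of the two proofs is the same.
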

\begin{proof}
	\noindent\textbf{A ``Yes" Instance of Minmax Vector Addition $\rightarrow$ A ``Yes" Instance of Destructive Weighted-\$-Protection.} 
	Suppose the answer to the minmax vector addition problem is ``Yes.'' Then, there exists some $\VV_F^*\subseteq \VV$ such that for any $\VV_B\subseteq\VV\setminus\VV_F^*$ with $\sum_{j:v_j\in\VV_B}p_j^b\le B$, it holds that 
	\begin{eqnarray}\label{eq:1}
	\left\|\vec{\Lambda}+\sum_{j:v_j\in\VV_B} w_j\vec{\Delta}_j\right\|_{\infty}\le \Lambda(c_m).
	\end{eqnarray}
For showing a contradiction, suppose the answer for the destructive weighted-\$-protection problem is ``No''. In this case, even if the defender awards {the voters in} $\VV_F^*$, the attacker can still make $c_m$ lose by bribing some subset $\VV_B^*$ of voters. Note that if $c_m$ does not win, there must exist some other candidate, say, $c_i$, who gets a strictly higher score than $c_m$ after the attacker bribes some voters. Let us compare their scores before and after bribing voters. Before bribing voters, the scores of $c_i$ and $c_m$ are $\Lambda(c_i)$ and $\Lambda(c_m)$, respectively. Recall that a candidate $c_k$ is at the position of $\tau_j^{-1}(k)$ on the preference list of $v_j$, therefore any $v_j\in \VV_B^*$ contributes {a score of} $\alpha_{\tau_j^{-1}(i)}$ to $\Lambda(c_i)$, and contributes {a score of} $\alpha_{\tau_j^{-1}(m)}$ {to $\Lambda(c_m)$}. After bribing voters, the preference list of $v_j$ is changed, but regardless of the change, $v_j$ contributes at least $\alpha_m$ to $c_m$ and at most $\alpha_1$ to $c_i$. Let the scores of $c_i$ and $c_m$ after bribing voters be $\Lambda'(c_i)$ and $\Lambda'(c_m)$, respectively. Then, it follows that
   \begin{eqnarray*}
   && \Lambda'(c_i)\le \Lambda(c_i)+\sum_{j:v_j\in \VV_B^*} w_j (\alpha_1-\alpha_{\tau_j^{-1}(i)})\\
   && \Lambda'(c_m)\ge \Lambda(c_m)+\sum_{j:v_j\in \VV_B^*}w_j(\alpha_m-\alpha_{\tau_j^{-1}(m))}.
   \end{eqnarray*}
	Since $\Lambda'(c_i)>\Lambda'(c_m)$, we have 
	$$\Lambda(c_i)+\sum_{j:v_j\in \VV_B^*} w_j(\alpha_1-\alpha_{\tau_j^{-1}(i)})>\Lambda(c_m)+\sum_{j:v_j\in \VV_B^*}w_j(\alpha_m-\alpha_{\tau_j^{-1}(m)}),$$
that is, 
$$\Lambda(c_i)+\sum_{j:v_j\in \VV_B^*}w_j\Delta_{ij}>\Lambda(c_m),$$
which contradicts Eq.~(\ref{eq:1}). Thus, the answer to the destructive weighted-\$-protection problem is ``Yes".

\medskip

\noindent\textbf{A ``Yes" Instance of Destructive Weighted-\$-Protection $\rightarrow$ A ``Yes" Instance of Minmax Vector Addition.} Suppose the answer to the destructive weighted-\$-protection problem is ``Yes" by awarding the voters in $\VV_F^*$. We show that the answer to the corresponding instance of minmax vector addition problem is ``Yes''. Suppose on the contrary the answer is ``No.'' Then, for $\VV_F^*$ there exists some $\VV_B^*\subseteq\VV\setminus \VV_F^*$ such that
\begin{eqnarray*}
\left\|\vec{\Lambda}+\sum_{j:v_j\in\VV_B^*} w_j\vec{\Delta}_j\right\|_{\infty}> \Lambda(c_m).
\end{eqnarray*}
Consequently, there must exist some $1\le i\le m-1$ such that 
$\Lambda(c_i)+\sum_{j:v_j\in\VV_B^*} \Delta_{ij}>\Lambda(c_m)$. By plugging in $\Delta_{ij}$, we have 
$$\Lambda(c_i)+\sum_{j:v_j\in \VV_B^*} w_j(\alpha_1-\alpha_{\tau_j^{-1}(i)})>\Lambda(c_m)+\sum_{j:v_j\in \VV_B^*}w_j (\alpha_m-\alpha_{\tau_j^{-1}(m)}).$$
	This means that if the defender awards the voters in $\VV_F^*$, then the attacker can bribe the voters in $\VV_B^*$ to change their preference lists such that for any $v_j\in \VV_B^*$, candidate $c_i$ is on top of the list and $c_m$ is at bottom of the list. By doing this, $c_i$ gets a strictly higher score than $c_m$. This contradicts the fact that the answer to the destructive weighted-\$-protection problem is ``Yes". Hence, the answer to the minmax vector addition problem is ``Yes''.
	\qed
		\end{proof}

\subsection{Voter Dominance and Preliminary Observations}\label{appsec:pre}

Let $S_m$ be the set of permutations over $\{c_1,c_2,\ldots,c_m\}$.
Each element of $S_m$ can be a preference list. Let
$\VV^h\subseteq\VV$ be the set of voters whose preference list is the $h$-th element of $S_m$.

For two voters $v_j$ and $v_{j'}$ , we say $v_j$ {\em dominates} $v_{j'}$ (or $v_{j'}$ is {\em dominated} by $v_j$), denoted by $v_{j'}\prec v_{j}$, {if any of the following two conditions hold}:
(i) The following holds and at least one of the inequalities is strict:
\begin{equation*}
(\tau_j=\tau_{j'})~\wedge~(w_j\ge w_{j'}) ~\wedge~ (p_j^a\le p_{j'}^a) ~\wedge~ (p_{j}^b\le p_{j'}^b).
\end{equation*}
(ii) The following holds:
\begin{eqnarray*}
(\tau_j=\tau_{j'})\wedge(w_j=w_{j'}) \wedge (p_j^a=p_{j'}^a) \wedge (p_j^b=p_{j'}^b) \wedge (j'<j).
\end{eqnarray*}
{Note that the domination relation is only defined between the voters who have the {\em same} preference.} Intuitively, if $v_{j'}\prec v_{j}$, then $v_j$ is more ``important'' than $v_{j'}$ because $v_j$ has a greater weight but is ``cheaper'' to bribe or award (i.e., more valuable to both the attacker and the defender).


We have the following lemmas.

\begin{lemma}\label{lemma:destructive-1}
Consider the destructive weighted-\$-protection problem with $\VV_F\subseteq \VV$ being the set of awarded voters.
Suppose the attacker can succeed by bribing a subset $\VV_B\subseteq \VV\setminus \VV_F$ of voters. If $v_{j'}\prec v_{j}$, $v_{j'}\in \VV_B$ and $v_j\not\in(\VV_F\cup \VV_B)$, then the attacker can succeed by bribing $(\VV_B\setminus\{v_{j'}\})\cup\{v_j\}$.
\end{lemma}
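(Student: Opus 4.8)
The plan is to route the whole argument through the equivalent \emph{minmax vector addition} formulation of Lemma~\ref{lemma:equivalent}, where the claimed swap becomes transparent. Fix the awarded set $\VV_F$. By that reformulation, for any fixed bribe set the attacker succeeds (i.e.\ $c_m$ loses) precisely when the bribed vector sum pushes some coordinate above $\Lambda(c_m)$; concretely, bribing $\VV_B$ succeeds iff
$$\left\|\vec{\Lambda}+\sum_{j:v_j\in\VV_B} w_j\vec{\Delta}_j\right\|_{\infty}>\Lambda(c_m).$$
So it suffices to show that replacing $v_{j'}$ by $v_j$ keeps the bribe set a feasible choice and does not decrease any coordinate of this vector sum.

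First I would verify feasibility of $\VV_B'=(\VV_B\setminus\{v_{j'}\})\cup\{v_j\}$. Since $\VV_B\subseteq\VV\setminus\VV_F$ and $v_j\notin\VV_F\cup\VV_B$, the set $\VV_B'$ is again a subset of $\VV\setminus\VV_F$. For the attack budget, the dominance relation $v_{j'}\prec v_j$ gives $p_j^b\le p_{j'}^b$ (this inequality holds under both clauses (i) and (ii) of the definition), hence $\sum_{j:v_j\in\VV_B'}p_j^b=\sum_{j:v_j\in\VV_B}p_j^b-p_{j'}^b+p_j^b\le B$.

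The crux is monotonicity of the vector sum. The dominance relation forces $\tau_j=\tau_{j'}$, so $\vec{\Delta}_j=\vec{\Delta}_{j'}$, and therefore
$$\vec{\Lambda}+\sum_{k:v_k\in\VV_B'} w_k\vec{\Delta}_k=\vec{\Lambda}+\sum_{k:v_k\in\VV_B} w_k\vec{\Delta}_k+(w_j-w_{j'})\vec{\Delta}_{j'}.$$
The decisive observation is that each coordinate $\Delta_{ij'}=\alpha_1-\alpha_{\tau_{j'}^{-1}(i)}+\alpha_{\tau_{j'}^{-1}(m)}-\alpha_m$ is nonnegative, because $\alpha_1$ is the largest and $\alpha_m$ the smallest entry of the sorted scoring vector. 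Since $w_j\ge w_{j'}$, the correction term $(w_j-w_{j'})\vec{\Delta}_{j'}$ is coordinatewise nonnegative, so every coordinate of the sum for $\VV_B'$ is at least the corresponding coordinate for $\VV_B$. As $\vec{\Lambda}$ and every $\vec{\Delta}_k$ have nonnegative entries, the infinity norm here equals the maximum coordinate, which can only grow; since some coordinate already exceeded $\Lambda(c_m)$ under $\VV_B$, the same coordinate still does under $\VV_B'$, so bribing $\VV_B'$ also makes $c_m$ lose.

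I do not expect a serious obstacle once Lemma~\ref{lemma:equivalent} is in hand: the argument reduces to the sign of the $\Delta_{ij'}$ entries and coordinatewise monotonicity of the maximum. The one point needing care is confirming that $\|\cdot\|_{\infty}$ reduces to the maximum coordinate (so that a coordinatewise increase cannot destroy the success condition), which is exactly why the nonnegativity of $\vec{\Lambda}$ and of the $\vec{\Delta}_k$ is emphasized above.
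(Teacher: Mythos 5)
Your proposal is correct and follows essentially the same route as the paper: an exchange argument carried out in the minmax vector addition reformulation of Lemma~\ref{lemma:equivalent}, using $p_j^b\le p_{j'}^b$ for budget feasibility and $w_j\ge w_{j'}$ together with the nonnegativity of the $\Delta_{ij}$ entries for coordinatewise monotonicity. Your explicit justification that the coordinates are nonnegative (so the infinity norm is the maximum coordinate and cannot drop under a coordinatewise increase) is a detail the paper leaves implicit, but the argument is the same.
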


Towards the proof, we need Lemma~\ref{lemma:equivalent}, which states that the destructive weighted-\$-protection problem is equivalent to another problem called minmax vector addition. The reader may refer to Section~\ref{subsec: destructive-arbitrary} for the definition of this problem. The following observation follows directly from the definition of $\Delta_{ij}$ which is included in the definition of minmax vector addition.
\begin{obse}
	If $v_{j'}\prec v_{j}$, then
	$\Delta_{ij'}\le \Delta_{ij}$ for $1\le i\le m-1$.
\end{obse}

Assuming Lemma~\ref{lemma:equivalent}, we can prove Lemma~\ref{lemma:destructive-1} as follows.

\begin{proof}[Proof of Lemma~\ref{lemma:destructive-1}]
We prove the lemma by applying an exchange argument {to} the minmax vector addition problem, which is equivalent to the destructive weighted-\$-protection by Lemma~\ref{lemma:equivalent}. Suppose by bribing voters in $\VV_B$ the destructive attacker can make $c_m$ lose. Then, it follows that $\sum_{j:v_j\in \VV_B} p_j^b\le B$ and	$$||\vec{\Lambda}+\sum_{j:j\in\VV_B} w_j\vec{\Delta}_j||_{\infty}> \Lambda(c_m).$$
	As $v_j$ dominates $v_{j'}$, $\Delta_{ij}\ge \Delta_{ij'}$ and $w_j\le w_{j'}$. Hence,  $$\sum_{j:j\in(\VV_B\setminus\{v_{j'}\})\cup\{v_j\}} p_j^b\le B$$	and
	$$||\vec{\Lambda}+\sum_{j:j\in(\VV_B\setminus\{v_{j'}\})\cup\{v_j\}} w_j\vec{\Delta}_j||_{\infty}> \Lambda(c_m).$$
	That is, the briber can also win by bribing voters in $(\VV_B\setminus\{v_{j'}\})\cup\{v_j\}$. 
	\qed
\end{proof}

\begin{lemma}\label{lemma:destructive-2}
Consider the destructive weighted-\$-protection problem. Suppose the defender succeeds by awarding a subset $\VV_F\subseteq \VV$ of voters. If $v_{j'}\prec v_{j}$, $v_{j'}\in \VV_F$ and $v_j\not\in\VV_F$, then the defender can succeed by awarding $(\VV_F\setminus\{v_{j'}\})\cup\{v_j\}$.
\end{lemma}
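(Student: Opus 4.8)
The plan is to mirror the exchange argument of Lemma~\ref{lemma:destructive-1}, but now on the defender's side, again passing through the equivalent minmax vector addition formulation (Lemma~\ref{lemma:equivalent}). Write $\VV_F' = (\VV_F \setminus \{v_{j'}\}) \cup \{v_j\}$ for the proposed new award set. Assuming the defender succeeds with $\VV_F$, I would argue by contradiction: suppose the defender fails with $\VV_F'$, so there is an attack set $\VV_B' \subseteq \VV \setminus \VV_F'$ with $\sum_{v_l \in \VV_B'} p_l^b \le B$ that violates the infinity-norm bound, i.e. $\|\vec{\Lambda} + \sum_{v_l \in \VV_B'} w_l \vec{\Delta}_l\|_\infty > \Lambda(c_m)$. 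The goal is to convert $\VV_B'$ into a winning attack contained in $\VV \setminus \VV_F$, thereby contradicting the success of $\VV_F$.

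First I would record the bookkeeping: since $v_{j'} \in \VV_F$ and $v_j \notin \VV_F$, the two pools available to the attacker satisfy $\VV \setminus \VV_F' = (\VV \setminus \VV_F) \cup \{v_{j'}\} \setminus \{v_j\}$; in particular $v_j$ is unavailable under $\VV_F'$ while $v_{j'}$ is available. Then I would split on whether $v_{j'} \in \VV_B'$. If $v_{j'} \notin \VV_B'$, then $\VV_B' \subseteq (\VV \setminus \VV_F') \setminus \{v_{j'}\} = (\VV \setminus \VV_F) \setminus \{v_j\} \subseteq \VV \setminus \VV_F$, so $\VV_B'$ is already a legal and winning attack against $\VV_F$, an immediate contradiction.

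The interesting case is $v_{j'} \in \VV_B'$, and here I would apply Lemma~\ref{lemma:destructive-1} with the auxiliary award set $\tilde{\VV}_F := \VV_F \setminus \{v_{j'}\}$ (which equals $\VV_F \cap \VV_F'$). One checks the hypotheses: $\VV_B' \subseteq \VV \setminus \VV_F' \subseteq \VV \setminus \tilde{\VV}_F$ and $\VV_B'$ is winning, $v_{j'} \prec v_j$ is given, $v_{j'} \in \VV_B'$ holds in this case, and $v_j \notin \tilde{\VV}_F \cup \VV_B'$ because $v_j \notin \VV_F$ while $v_j \in \VV_F'$ forbids $v_j \in \VV_B'$. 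Lemma~\ref{lemma:destructive-1} then yields that $\VV_B := (\VV_B' \setminus \{v_{j'}\}) \cup \{v_j\}$ is also a winning attack within budget. Since $v_{j'}$ is the sole voter in $\VV_F \setminus \tilde{\VV}_F$ and $v_{j'} \notin \VV_B$, we get $\VV_B \subseteq \VV \setminus \tilde{\VV}_F$ together with $v_{j'} \notin \VV_B$, hence $\VV_B \subseteq \VV \setminus \VV_F$, again a winning attack against $\VV_F$ and the desired contradiction.

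I expect the main obstacle to be purely the set-membership bookkeeping: verifying precisely which voters each award set leaves available, and confirming that the hypotheses of Lemma~\ref{lemma:destructive-1} (especially $v_j \notin \tilde{\VV}_F \cup \VV_B'$, and the budget and winning conditions, both of which are intrinsic to $\VV_B'$ and independent of the award set) genuinely hold for the auxiliary set $\tilde{\VV}_F$. Alternatively, if one prefers not to invoke Lemma~\ref{lemma:destructive-1} as a black box, the same conclusion follows by re-running its one-line exchange estimate: the Observation gives $\Delta_{ij'} \le \Delta_{ij}$ for every $i$, and $0 < w_{j'} \le w_j$ with $\Delta_{ij} \ge 0$ gives $w_{j'}\Delta_{ij'} \le w_j \Delta_{ij}$ coordinatewise, so replacing $v_{j'}$ by $v_j$ cannot decrease any coordinate of $\vec{\Lambda} + \sum_l w_l \vec{\Delta}_l$ and cannot increase the total bribing price (as $p_j^b \le p_{j'}^b$), preserving the violated norm bound.
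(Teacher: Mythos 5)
Your proof is correct and follows essentially the same route as the paper: a contradiction via the minmax vector addition formulation, splitting on whether $v_{j'}$ lies in the attacker's winning set and, if so, exchanging $v_{j'}$ for $v_j$ using dominance to preserve both the budget bound and the violated infinity-norm inequality. The only cosmetic difference is that you invoke Lemma~\ref{lemma:destructive-1} (with the auxiliary award set $\VV_F\setminus\{v_{j'}\}$) for the swap step, whereas the paper re-derives the same two inequalities inline.
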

\begin{proof}
We again use an exchange argument {to} the minmax vector addition problem. Let $\VV_A=\VV_F\setminus\{v_{j'}\}$. Suppose on the contrary that the defender cannot win by fixing voters in $\VV_A\cup\{v_j\}$, then there exists some $\VV_B\subseteq \VV\setminus (\VV_A\cup\{v_j\})$ such that  $\sum_{j:v_j\in \VV_B} p_j^b\le B$ and
	$$||\vec{\Lambda}+\sum_{j:j\in\VV_B} w_j\vec{\Delta}_j||_{\infty}> \Lambda(c_m).$$
	We argue that the defender cannot win either by fixing voters in $\VV_A\cup\{v_{j'}\}$, which is a contradiction. Suppose the defender fixes voters in $\VV_A\cup\{v_{j'}\}$. There are two possibilities. If $v_{j'}\not\in\VV_B$, then we let the briber bribe voters in $\VV_B$. It is obvious that the briber can win. Otherwise, $v_{j'}\in \VV_B$, then we let the briber bribe voters in $(\VV_B\setminus\{v_{j'}\})\cup \{v_j\}$. Since $v_j$ dominates $v_{j'}$, we have
	$\sum_{j:v_j\in (\VV_B\setminus\{v_{j'}\})\cup \{v_j\}} p_j^b\le B$ and
	$$||\vec{\Lambda}+\sum_{j:j\in(\VV_B\setminus\{v_{j'}\})\cup \{v_j\}} w_j\vec{\Delta}_j||_{\infty}> \Lambda(c_m).$$
	Hence, the lemma is true.
	\qed
\end{proof}

We say $\bar{\VV}\subseteq \VV$ is maximal (with respect to $\VV$) if for any $\bar{v}\in \bar{\VV}$, there is no $v\in \VV\setminus \bar{\VV}$ that can dominate $\bar{v}$. That is, $\bar{\VV}$ contains the most important voters. The following corollary follows directly from the preceding two lemmas.

\begin{coro}\label{coro:maximal}
Consider the destructive weighted-\$-protection problem.
Without loss of generality, we can assume that $\VV_F$ is maximal with respect to $\VV$ and $\VV_B$ is maximal with respect to $\VV\setminus\VV_F$.
\end{coro}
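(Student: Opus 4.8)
The plan is to derive the corollary purely from the two exchange lemmas, Lemma~\ref{lemma:destructive-1} and Lemma~\ref{lemma:destructive-2}, together with a simple potential (monovariant) argument guaranteeing that the exchanges terminate at a maximal set. Each lemma says that replacing a dominated voter by a dominating one preserves success (of the attacker in Lemma~\ref{lemma:destructive-1}, of the defender in Lemma~\ref{lemma:destructive-2}); what remains is to show that iterating such replacements cannot continue forever and that the terminal configuration is exactly a maximal set in the sense defined just above the corollary.

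First I would introduce a global ranking of the voters consistent with domination. To each voter $v_j$ associate the tuple $g(v_j)=(w_j,-p_j^a,-p_j^b,j)$ and order the tuples lexicographically. Because every voter carries a distinct index $j$, the map $g$ is injective, so it induces a total order; let $\rho(v_j)\in\{1,\ldots,n\}$ be the rank of $v_j$ in this order. The key property is that $v_{j'}\prec v_j$ implies $\rho(v_{j'})<\rho(v_j)$. Indeed, if domination holds through condition (i), then $w_j\ge w_{j'}$, $p_j^a\le p_{j'}^a$ and $p_j^b\le p_{j'}^b$ with at least one inequality strict, so the first coordinate of $g$ in which $v_j$ and $v_{j'}$ differ moves in the favorable direction and $g(v_{j'})<_{\mathrm{lex}}g(v_j)$; if domination holds through condition (ii), the first three coordinates coincide and $j'<j$ settles the comparison through the last coordinate. (Note that domination is only ever invoked between voters sharing a preference list, so it is harmless that $g$ also compares voters across different preference classes.) Now set $\Phi(\mathcal{S})=\sum_{v_j\in\mathcal{S}}\rho(v_j)$ for $\mathcal{S}\subseteq\VV$; a single dominance exchange $(\mathcal{S}\setminus\{v_{j'}\})\cup\{v_j\}$ with $v_{j'}\prec v_j$ strictly increases $\Phi$.

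With this potential in hand, both $\emph{WLOG}$ statements fall out. For the defender, among all $\VV_F\subseteq\VV$ that are feasible and successful (there is at least one, otherwise the instance is a ``No'' instance and there is nothing to prove), I would select one maximizing $\Phi$. If this $\VV_F$ were not maximal with respect to $\VV$, some $v_{j'}\in\VV_F$ would be dominated by some $v_j\in\VV\setminus\VV_F$; the exchanged set $(\VV_F\setminus\{v_{j'}\})\cup\{v_j\}$ is still feasible (the awarding price does not increase, since $p_j^a\le p_{j'}^a$) and still successful by Lemma~\ref{lemma:destructive-2}, yet has strictly larger $\Phi$, contradicting the maximality of the choice. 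Hence a maximal successful $\VV_F$ exists. For the attacker, fix any (now maximal) $\VV_F$ and suppose some feasible $\VV_B\subseteq\VV\setminus\VV_F$ lets the attacker succeed; choosing such a $\VV_B$ with maximum $\Phi$ and arguing identically with Lemma~\ref{lemma:destructive-1} shows that a maximal (with respect to $\VV\setminus\VV_F$) successful $\VV_B$ exists. Thus whenever the attacker can succeed against $\VV_F$ it can succeed using a maximal $\VV_B$; equivalently, to certify that $\VV_F$ defeats every attack it suffices to rule out maximal attacks, which is precisely the claimed reduction.

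The only point needing care — and the step I would treat as the main obstacle — is constructing the ranking so that it simultaneously respects the componentwise improvement of condition (i) and the index tie-break of condition (ii); the lexicographic tuple above is tailored to do both at once, and its injectivity (via the index coordinate) is what turns ``strictly better in some coordinate'' into a strictly larger scalar potential. Everything else is the direct, repeated application of the two exchange lemmas.
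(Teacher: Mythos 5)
Your proposal is correct and follows essentially the same route as the paper: the paper simply asserts that the corollary "follows directly" from Lemmas~\ref{lemma:destructive-1} and~\ref{lemma:destructive-2}, and your lexicographic potential $\Phi$ is just the standard way of making the implicit iterated-exchange/extremal-choice argument rigorous. No gaps.
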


Unfortunately, Corollary~\ref{coro:maximal} does not hold for the constructive weighted-\$-protection problem, which is significantly different from the destructive version of the protection problem in terms of computational complexity. Nevertheless, similar results hold for the unweighted constructive problem. 

\begin{lemma}\label{lemma:constructive-bribe-dominate}
	Given $\VV_F\subseteq \VV$ as the set of fixed voters in the constructive \$-bribery-protection problem, suppose a briber can make $c_i$ win by bribing a subset $\VV_B\subseteq \VV\setminus \VV_F$ of voters. If $v_{j'}\prec v_{j}$, $v_{j'}\in \VV_B$ and $v_j\not\in(\VV_F\cup \VV_B)$, then the briber can also win by bribing voters in $(\VV_B\setminus\{v_{j'}\})\cup\{v_j\}$.
\end{lemma}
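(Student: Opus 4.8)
The plan is to run an exchange argument directly on the election, transferring the attacker's bribe from $v_{j'}$ to $v_j$ while leaving every other bribed voter untouched. Since this is the (unweighted) constructive \$-protection problem, the domination relation $v_{j'}\prec v_j$ forces $\tau_j=\tau_{j'}$, $w_j=w_{j'}=1$, and $p_j^b\le p_{j'}^b$; these three facts are exactly what makes the swap safe. Unlike the destructive case (Lemma~\ref{lemma:destructive-1}), I would not route the argument through the minmax vector addition reformulation — the constructive problem has no such clean equivalence (indeed Corollary~\ref{coro:maximal} fails for the weighted constructive version) — but instead reason about the candidates' total scores head-on.

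Concretely, fix a winning bribe on $\VV_B$ and let $\sigma$ be the preference list the attacker assigns to $v_{j'}$ in that strategy. Set $\VV_B'=(\VV_B\setminus\{v_{j'}\})\cup\{v_j\}$, assign $v_j$ the list $\sigma$, and leave the assigned list of every other voter in $\VV_B'$ as it was in $\VV_B$. First I would check budget feasibility: the cost differs only in the two swapped voters, and since $p_j^b\le p_{j'}^b$ the total bribing cost cannot increase, so it stays within $B$. Then I would check that no candidate's total score moves. Only $v_{j'}$ and $v_j$ change behaviour: under $\VV_B$ they jointly contribute $\alpha_{\sigma^{-1}(k)}+\alpha_{\tau_j^{-1}(k)}$ to an arbitrary candidate $c_k$ (bribed $v_{j'}$ plus faithful $v_j$, recalling $v_j\notin\VV_F\cup\VV_B$), whereas under $\VV_B'$ they jointly contribute $\alpha_{\tau_{j'}^{-1}(k)}+\alpha_{\sigma^{-1}(k)}$ (faithful $v_{j'}$ plus bribed $v_j$). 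Because $\tau_j=\tau_{j'}$ the two sums are equal, so every candidate — $c_i$ included — keeps exactly the same total score, and hence $c_i$ remains the unique winner.

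The one place that genuinely uses unweightedness, and the step I expect to be the crux, is this cancellation of scores. With general weights the same computation leaves a residual change $(w_j-w_{j'})\bigl(\alpha_{\sigma^{-1}(k)}-\alpha_{\tau_j^{-1}(k)}\bigr)$ for candidate $c_k$, which can be strictly positive for a competitor $c_k\neq c_i$ and thereby erode $c_i$'s strict lead; this is precisely why the analogue of Corollary~\ref{coro:maximal} breaks in the weighted constructive setting. Setting $w_j=w_{j'}=1$ annihilates that residual, making the exchange lossless, so the argument goes through and the briber wins by bribing $(\VV_B\setminus\{v_{j'}\})\cup\{v_j\}$.
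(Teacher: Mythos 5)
Your proof is correct and follows essentially the same route as the paper's: an exchange argument performed directly on the election, giving $v_j$ the preference list that $v_{j'}$ was bribed to while restoring $v_{j'}$ to its faithful list, with unit weights guaranteeing that every candidate's total score is unchanged and $p_j^b\le p_{j'}^b$ guaranteeing budget feasibility. The only difference is that you spell out the per-candidate score cancellation and the weighted counterexample explicitly, which the paper leaves implicit.
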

\begin{proof}
	Again, we prove by an exchange argument. Suppose by bribing voters in $\VV_B$ the constructive briber can make $c_i$ win. Now we consider the following procedure: we change the preference list of $v_j$ into the same one as that of $v_{j'}$, and meanwhile, restore the preference list of $v_{j'}$ to the original one. As voters have the same weight, this procedure does not change the total score of every candidate, and $c_i$ is thus still the winner. Furthermore, this procedure is equivalent as we bribe $(\VV_B\setminus\{v_{j'}\})\cup\{v_j\}$. Since $v_j$ dominates $v_{j'}$, the total cost of bribing $(\VV_B\setminus\{v_{j'}\})\cup\{v_j\}$ is no more than that of bribing $\VV_B$. Hence, the lemma is true.
	\qed
\end{proof}

\begin{lemma}
	In the constructive \$-bribery-protection problem,	suppose the defender can win by fixing a subset $\VV_F\subseteq \VV$ of voters. If $v_{j'}\prec v_{j}$, $v_{j'}\in \VV_F$ and $v_j\not\in\VV_F$, then the defender can also win by fixing voters in $(\VV_F\setminus\{v_{j'}\})\cup\{v_j\}$.
\end{lemma}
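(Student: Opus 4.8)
The plan is to argue by contradiction via an exchange argument, in exact parallel with the proof of Lemma~\ref{lemma:destructive-2}, except that the monotonicity step used there (which relied on the minmax vector addition reformulation) is now supplied by the constructive attacker's exchange Lemma~\ref{lemma:constructive-bribe-dominate}. Write $\VV_F' := (\VV_F\setminus\{v_{j'}\})\cup\{v_j\}$ and $\VV_A := \VV_F\setminus\{v_{j'}\}$, so that $\VV_F=\VV_A\cup\{v_{j'}\}$ and $\VV_F'=\VV_A\cup\{v_j\}$. First I would dispatch feasibility of the new awarding set: since $v_{j'}\prec v_j$ forces $p_j^a\le p_{j'}^a$ (the two voters share a preference list, and the problem is unweighted), the total awarding cost of $\VV_F'$ is at most that of $\VV_F$, hence at most $F$. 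It then remains to show that no attacker with budget $B$ can succeed against $\VV_F'$.

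Suppose, for contradiction, that some bribery set $\VV_B\subseteq\VV\setminus\VV_F'$ with $\sum_{j:v_j\in\VV_B}p_j^b\le B$ lets the attacker make the designated candidate win. I would split on whether $v_{j'}\in\VV_B$. In the easy case $v_{j'}\notin\VV_B$, the memberships $\VV_B\subseteq\VV\setminus(\VV_A\cup\{v_j\})$ and $v_{j'}\notin\VV_B$ together give $\VV_B\subseteq\VV\setminus(\VV_A\cup\{v_{j'}\})=\VV\setminus\VV_F$; so the very same $\VV_B$ is a feasible attack against $\VV_F$, and the attacker wins against $\VV_F$, contradicting the hypothesis that the defender succeeds by awarding $\VV_F$.

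The crux is the case $v_{j'}\in\VV_B$, where I want to transfer the attack to $\VV_B^{\mathrm{new}}:=(\VV_B\setminus\{v_{j'}\})\cup\{v_j\}$ and show it beats $\VV_F$. The obstacle is that $v_j\in\VV_F'$, so Lemma~\ref{lemma:constructive-bribe-dominate} cannot be applied with awarding set $\VV_F'$ (its hypothesis $v_j\notin\VV_F'\cup\VV_B$ fails). The resolution is to weaken the awarded set to $\VV_A$: since $\VV_A\subseteq\VV_F'$ we have $\VV_B\subseteq\VV\setminus\VV_F'\subseteq\VV\setminus\VV_A$, and because the election outcome produced by a fixed bribery set depends only on which voters are bribed (awarding merely restricts which attacks are feasible), $\VV_B$ remains a winning and feasible attack against the awarding set $\VV_A$. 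Now the hypotheses of Lemma~\ref{lemma:constructive-bribe-dominate} hold with fixed set $\VV_A$: indeed $v_{j'}\prec v_j$, $v_{j'}\in\VV_B$, and $v_j\notin\VV_A\cup\VV_B$ (since $v_j\notin\VV_F\supseteq\VV_A$ and $v_j\notin\VV_B$). The lemma then yields that the attacker still wins by bribing $\VV_B^{\mathrm{new}}$, within budget because $p_j^b\le p_{j'}^b$.

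Finally I would check that $\VV_B^{\mathrm{new}}$ is a legal attack against the original awarding set $\VV_F$. The part $\VV_B\setminus\{v_{j'}\}$ lies in $\VV\setminus\VV_A$ and avoids $v_{j'}$, hence is disjoint from $\VV_F=\VV_A\cup\{v_{j'}\}$, while $v_j\notin\VV_F$ by assumption; thus $\VV_B^{\mathrm{new}}\subseteq\VV\setminus\VV_F$ with bribing cost at most $B$. Consequently the attacker defeats $\VV_F$, contradicting the hypothesis that the defender wins by awarding $\VV_F$. This contradiction shows the attacker cannot succeed against $\VV_F'$, so the defender wins by awarding $\VV_F'$. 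I expect the only genuinely delicate point to be the set-membership bookkeeping in the crux step: one must first relax the awarded set from $\VV_F'$ down to $\VV_A$ before invoking Lemma~\ref{lemma:constructive-bribe-dominate}, and then re-verify legality of the swapped attack against $\VV_F$.
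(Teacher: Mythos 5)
Your proposal is correct and follows essentially the same route as the paper's own proof: contradiction, a case split on whether $v_{j'}\in\VV_B$, and an appeal to Lemma~\ref{lemma:constructive-bribe-dominate} to swap $v_{j'}$ for $v_j$ in the attacker's set. Your extra care in first relaxing the awarded set to $\VV_A=\VV_F\setminus\{v_{j'}\}$ before invoking that lemma (and in checking the awarding-budget feasibility of $\VV_F'$) tidies up details the paper leaves implicit, but it is the same argument.
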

\begin{proof}
	Suppose on the contrary that the defender cannot win by fixing voters in $\VV_F'=(\VV_F\setminus\{v_{j'}\})\cup\{v_j\}$. Then the constructive briber can win by bribing voters in some subset $\VV_B\subseteq \VV\setminus\VV_F'$. There are two possibilities. If $v_{j'}\not\in\VV_B$, then even if the defender fixes $\VV_F$ the briber can still bribe $\VV_B$ and make $c_i$ win, which is a contradiction. Otherwise $v_{j'}\in\VV_B$. If the defender fixes $\VV_F$, we let the briber bribe $(\VV_B\setminus\{v_{j'}\})\cup\{v_j\}$. According to Lemma~\ref{lemma:constructive-bribe-dominate}, if the briber can win by bribing $\VV_B$, he/she can also win by bribing $(\VV_B\setminus\{v_{j'}\})\cup\{v_j\}$, again contradicting the fact that the defender can succeed by awarding $\VV_F$.
	\qed
\end{proof}

The above Lemmas implies the following.

\begin{coro}\label{coro:maximal2}
	Without loss of generality, we can assume that $\VV_F$ is maximal with respect to $\VV$, and $\VV_B$ is maximal with respect to $\VV\setminus\VV_F$ in the constructive \$-bribery-protection problem.
\end{coro}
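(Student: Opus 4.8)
The plan is to obtain the corollary from the two preceding exchange lemmas by an iterated-exchange argument, handling the defender's existential choice of $\VV_F$ and the attacker's universal range of $\VV_B$ separately, just as Corollary~\ref{coro:maximal} was deduced in the destructive setting.

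For the defender, suppose the instance is a ``Yes'' instance, witnessed by some $\VV_F\subseteq\VV$ with $\sum_{j:v_j\in\VV_F}p_j^a\le F$ against which no attacker succeeds. To upgrade this to a maximal witness I would fix a linear extension of the strict partial order $\prec$, assigning each voter a distinct integer $\mathrm{rank}(v)$ with $v\prec v^*\Rightarrow\mathrm{rank}(v)<\mathrm{rank}(v^*)$, and then choose among all successful budget-respecting sets one that maximizes $\sum_{v\in\VV_F}\mathrm{rank}(v)$. I claim such a maximizer is maximal with respect to $\VV$: otherwise some $v_{j'}\in\VV_F$ is dominated by a $v_j\in\VV\setminus\VV_F$, and the preceding lemma guarantees the defender still succeeds after awarding $(\VV_F\setminus\{v_{j'}\})\cup\{v_j\}$; the budget is preserved because $v_{j'}\prec v_j$ forces $p_j^a\le p_{j'}^a$, while $\mathrm{rank}(v_j)>\mathrm{rank}(v_{j'})$ strictly increases the objective, contradicting maximality. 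Hence we may assume $\VV_F$ is maximal.

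For the attacker, fix any such $\VV_F$. The defender protects the election against $\VV_F$ precisely when no admissible $\VV_B\subseteq\VV\setminus\VV_F$ lets the attacker win. By Lemma~\ref{lemma:constructive-bribe-dominate}, if some admissible $\VV_B$ is successful and contains a $v_{j'}$ dominated by a $v_j\in(\VV\setminus\VV_F)\setminus\VV_B$, then $(\VV_B\setminus\{v_{j'}\})\cup\{v_j\}$ is also successful and still admissible since $p_j^b\le p_{j'}^b$. Running the same rank-maximization argument over $\VV\setminus\VV_F$ shows every successful $\VV_B$ can be replaced by a successful one that is maximal with respect to $\VV\setminus\VV_F$. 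Therefore ``no $\VV_B$ succeeds'' is equivalent to ``no maximal $\VV_B$ succeeds,'' so in testing whether $\VV_F$ protects the election it suffices to consider maximal $\VV_B$.

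The main thing to get right is the termination of the repeated exchanges: since $\prec$ is only a partial order, a bare ``exchange until stuck'' loop is not obviously finite, and the linear-extension device is what makes it rigorous, because the integer objective $\sum\mathrm{rank}$ is bounded on the finite family of subsets and strictly increases under every legal exchange, so a maximizer is automatically exchange-free and hence maximal. The remaining ingredients --- that each exchange preserves the relevant budget constraint (via $p_j^a\le p_{j'}^a$ and $p_j^b\le p_{j'}^b$) and that the two reductions compose correctly under the $\exists\,\VV_F\,\forall\,\VV_B$ quantifier structure of the protection problem --- are straightforward.
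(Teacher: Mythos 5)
Your proposal is correct and follows essentially the route the paper intends: the paper states Corollary~\ref{coro:maximal2} as an immediate consequence of the two preceding exchange lemmas without writing out the iteration, and your argument is exactly that derivation, with the linear-extension/rank-sum potential added only to make termination of the repeated exchanges explicit. The budget-preservation observations ($p_j^a\le p_{j'}^a$ and $p_j^b\le p_{j'}^b$ under $v_{j'}\prec v_j$) and the separate handling of the existential $\VV_F$ and universal $\VV_B$ quantifiers match what the paper relies on implicitly.
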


\subsection{Proofs Omitted in Section~\ref{subsec:weighted-dollar}}
Recall that our goal is to prove Theorem~\ref{thm:sigma-hard-general-destructive}.

\begingroup
\def\thetheorem{\ref{thm:sigma-hard-general-destructive}}
\begin{theorem}
	For any non-trivial scoring rule, both the {\em constructive and destructive weighted-\$-protection problem}, is $\Sigma_2^p$-complete.
\end{theorem}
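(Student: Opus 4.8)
The plan is to obtain $\Sigma_2^p$-completeness by combining the two ingredients stated below as Lemma~\ref{lemma:general-sigma-complete} and Lemma~\ref{lemma:general-hardness}: membership in $\Sigma_2^p$ and $\Sigma_2^p$-hardness. Since $\Sigma_2^p$-completeness is by definition the conjunction of these two facts, the theorem is immediate once both lemmas are established, and the same argument applies verbatim to the constructive and the destructive variants. I would therefore organize the proof into a membership half and a hardness half.

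For membership, I would exploit the inherent two-level quantifier structure of the problem. A ``Yes'' instance asserts that there exists an awarding set $\VV_F\subseteq\VV$ with $\sum_{j:v_j\in\VV_F}p_j^a\le F$ such that for every bribing set $\VV_B\subseteq\VV\setminus\VV_F$ with $\sum_{j:v_j\in\VV_B}p_j^b\le B$ the attacker fails. The candidate set $\VV_F$ is a polynomial-size certificate guessed existentially; given $\VV_F$, the remaining condition is a universally quantified ($\forall\VV_B$) statement whose matrix is polynomial-time checkable, since once $\VV_F$ and $\VV_B$ are fixed one only needs to recompute the weighted scores under the worst-case re-voting of the bribed voters and test the winner condition. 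This places the problem in $\exists\,\forall$ form, i.e.\ in $\Sigma_2^p$.

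For hardness, I would follow the reduction sketched after Lemma~\ref{lemma:general-hardness} from the De-Negre variant of bi-level knapsack, which is $\Sigma_2^p$-hard by Caprara et al.~\cite{caprara2014study}. Fix $m=2$ candidates, with $c_2$ the winner in the absence of bribery, and create one voter per item, transferring the adversary's reserving budget to the defense budget $F$, the packer's packing budget to the attack budget $B$, and the item prices to the awarding and bribing prices. Because the rule is non-trivial and $m=2$, we have $\alpha_1>\alpha_2$, so bribing a voter to move $c_1$ to the top shifts the gap between the scores of $c_1$ and $c_2$ by exactly $w_j(\alpha_1-\alpha_2)$, a fixed positive multiple of the voter's weight; identifying this with the knapsack weight $\bar{w}_j=\bar{p}_j^b$ turns the attacker's optimal bribing into an instance of maximizing captured weight and the defender's awarding into an instance of reserving items. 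The knapsack threshold $W$ is matched to the margin that $c_1$ must overcome to beat $c_2$. With $m=2$ the constructive and destructive goals coincide (making the sole loser win is the same as making the winner lose), so one reduction handles both versions.

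The main obstacle I anticipate is the hardness step rather than membership: one must verify that the De-Negre optimization semantics (leader reserves, follower maximizes weight subject to a budget) translate faithfully into the decision form of the protection problem, in particular that the ``$\le W$'' bound on captured weight corresponds exactly to the event that $c_1$ does \emph{not} overtake $c_2$, and that rescaling by the constant factor $\alpha_1-\alpha_2$ and absorbing the base scores $W(c_1)$ and $W(c_2)$ into appropriately chosen padding voters keeps all quantities polynomially bounded and integral. Extending beyond $m=2$ to arbitrary non-trivial rules is then a matter of noting that the reduction only exploits the positive gap $\alpha_1-\alpha_m$ between the two extreme positions, as already indicated in the statement of Lemma~\ref{lemma:general-hardness}.
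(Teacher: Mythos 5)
Your proposal matches the paper's proof essentially exactly: the theorem is decomposed into a $\Sigma_2^p$-membership lemma (certificate $\VV_F$ guessed existentially, the pair $(\VV_B,\{\hat\tau_j\})$ quantified universally over a polynomial-time checkable matrix, per Wrathall's characterization) and a hardness lemma obtained by reducing the De-Negre bi-level knapsack problem of Caprara et al.\ to the two-candidate case, with the defender playing the reserving adversary, the attacker playing the packer, item weights identified with bribing prices, and two high-weight unbribable dummy voters encoding the threshold $W$ and the base score gap. The details you flag as needing care (the factor relating the score gap to $\alpha_1-\alpha_2$, the padding voters absorbing $W(c_1)$ and $W(c_2)$, and the coincidence of the constructive and destructive goals when $m=2$) are precisely the ones the paper's full reduction works out.
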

\addtocounter{theorem}{-1}
\endgroup

We first show $\Sigma_2^p$-membership.


\begingroup
\def\thelemma{\ref{lemma:general-sigma-complete}}
\begin{lemma}
	For any non-trivial scoring rule, both the constructive and destructive weighted-\$-protection problems are in $\Sigma_2^p$.
\end{lemma}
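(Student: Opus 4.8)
The plan is to place the problem into the canonical form that defines $\Sigma_2^p$: a polynomially bounded existential guess, followed by a polynomially bounded universal quantifier, over a matrix decidable in polynomial time (equivalently, $\Sigma_2^p=\exists\cdot\Pi_1^p$). The existential object will be the defender's set $\VV_F$, and a single universal object will encode an entire attack, after which only a score comparison remains.

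First I would take the guessed witness to be $\VV_F\subseteq\VV$, described by at most $n$ bits, and check its feasibility $\sum_{j:v_j\in\VV_F}p_j^a\le F$ in polynomial time. Next I would make precise what the universally quantified \emph{attack} is: a pair consisting of a subset $\VV_B\subseteq\VV$ together with, for each $v_j\in\VV_B$, a replacement preference list (an arbitrary permutation of $\mathcal{C}$). Such an attack has size polynomial in the input, since $\VV_B$ costs at most $n$ bits and each replacement list is a permutation of the $m$ candidates.

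The one point that needs care is the handling of the attacker's freedom to choose the new preference lists, which is itself an existential choice on the attacker's side. Because the defender must succeed against \emph{every} possible bribery, this existential is negated and therefore becomes universal: the defender wins exactly when, for all valid choices of $\VV_B$ and all replacement lists, the attacker fails to reach its goal. Consequently both the choice of $\VV_B$ and the choice of replacement lists collapse into the single universal quantifier over attacks, introducing no further quantifier alternation; verifying that this collapse is sound is the only subtlety in the argument.

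Finally, the inner matrix receives $\VV_F$ and an attack and accepts iff the attack is infeasible --- $\VV_B\not\subseteq\VV\setminus\VV_F$ or $\sum_{j:v_j\in\VV_B}p_j^b>B$ --- or the attacker fails after the replacement is applied: for the constructive problem, $c_m$ is not the unique highest-scoring candidate; for the destructive problem, no $c\in\mathcal{C}\setminus\{c_m\}$ strictly exceeds $c_m$. Since the (weighted) total score of each candidate under a fixed scoring vector and fixed preference lists is computable in polynomial time, this matrix lies in $\PP$. Pulling the attack-independent feasibility test of $\VV_F$ in front of the universal quantifier yields the shape $\exists\,\VV_F\ \forall(\text{attack})\ [\,\PP\text{-predicate}\,]$, which is precisely $\Sigma_2^p$; equivalently, ``some feasible attack succeeds against $\VV_F$'' is in \NP, so ``$\VV_F$ defends'' is in $\Pi_1^p$, and quantifying $\exists\VV_F$ gives $\exists\cdot\Pi_1^p=\Sigma_2^p$. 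The constructive and destructive versions differ only in the polynomial-time success test, so the same argument covers both, and I do not expect a genuine obstacle beyond the bookkeeping of the previous paragraph.
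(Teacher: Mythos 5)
Your proposal is correct and follows essentially the same route as the paper's proof: guess $\VV_F$ existentially, fold the attacker's choice of $\VV_B$ together with the replacement preference lists into a single polynomially bounded universal quantifier, and observe that the remaining score-comparison predicate is polynomial-time decidable, matching the $\exists\forall$-characterization of $\Sigma_2^p$. No gaps.
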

\addtocounter{lemma}{-1}
\endgroup

For ease of proof, we use the following definition of $\Sigma_2^p$ from~\cite{wrathall1976complete} (see Theorem 3 therein).

\begin{defi}
\emph{(By Wrathall \cite{wrathall1976complete})}
Let $\Gamma$ be a finite set of symbols (alphabet) and $\Gamma^+$ be the set of strings of symbols in $\Gamma$. Let $L\subseteq \Gamma^+$ be a language. $L\in \Sigma_2^p$ if and only if there exists polynomials $\phi_1$, $\phi_2$ and a language $L'\in P=\Sigma_0^p$ such that for all $x\in \Gamma^+$,
$$x\in L ~~~~\text{ if and only if }~~~~ (\exists y_1)_{\phi_1} (\forall y_2)_{\phi_2} [(x,y_1,y_2)\in L'],$$
where $(\exists y_1)_{\phi_1} (\forall y_2)_{\phi_2} [(x,y_1,y_2)\in L']$ denotes
$$(\exists y_1)(\forall y_2) [|y_1|\le \phi_1(|x|) \text{ and if } |y_2|\le \phi_2(|x|), (x,y_1,y_2)\in L']. $$
\end{defi}

{
	\begin{proof}[Proof of Lemma~\ref{lemma:general-sigma-complete}]
		Given an instance $I$ of the constructive or destructive weighted-\$-protection problem, we want to know if there exists a subset $\VV_F$ such that $\sum_{j:v_j\in \VV_F} p_j^a\le F$ and for any subset $\VV_B\subseteq \VV\setminus\VV_F$ with $\sum_{j:v_j\in \VV_B}p_j^b\le B$ and any preference list $\hat{\tau}_j$ for $v_j\in \VV_B$, the following property $\mathcal{R}(I,\VV_F,\VV_B\cup \{\hat{\tau}_j|v_j\in \VV_B\})$ is true: By bribing voter in $\VV_B$ and change the preference list of each $v_j\in \VV_B$ to $\hat{\tau}_j$, a constructive attacker cannot make candidate $c_i$ win, or a destructive attacker cannot make $c_m$ lose. It is easy to see that the property $\mathcal{R}(I,\VV_F,\VV_B\cup \{\hat{\tau}_j|v_j\in \VV_B\})$ can be verified in polynomial time, therefore Lemma~\ref{lemma:general-sigma-complete} is proved.
		\qed
	\end{proof}
}
Now we prove the $\Sigma_2^p$-hardness.
\begingroup
\def\thelemma{\ref{lemma:general-hardness}}
\begin{lemma}
		For any non-trivial scoring rule, both the constructive and destructive weighted-\$-protection problems are both $\Sigma_2^p$-hard even if there are only $m=2$ candidates.
\end{lemma}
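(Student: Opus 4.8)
The plan is to establish $\Sigma_2^p$-hardness by a direct reduction from the De-Negre (DNeg) variant of the bi-level knapsack problem, which Caprara et al.~\cite{caprara2014study} proved to be $\Sigma_2^p$-hard. Since $m=2$ forces the scoring rule to be $\alpha=(\alpha_1,\alpha_2)$ with $\alpha_1>\alpha_2$ by non-triviality, I write $\delta:=\alpha_1-\alpha_2\ge 1$ and observe that flipping a weight-$w$ voter from ranking $c_1$ on top to ranking $c_2$ on top shifts the quantity $\mathrm{score}(c_2)-\mathrm{score}(c_1)$ by exactly $2\delta w$. Given a DNeg instance with items $1,\dots,n$, budgets $\bar F,\bar B$, prices $\bar p_j^a,\bar p_j^b$, weights $\bar w_j=\bar p_j^b$, and target $W$, I would build a two-candidate election with one \emph{item-voter} $v_j$ per item, setting $w_j=\bar w_j$, $p_j^a=\bar p_j^a$, $p_j^b=\bar p_j^b$, $F=\bar F$, and $B=\bar B$. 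For the constructive case, where the attacker wants $c_2=c_m$ to strictly win, every item-voter ranks $c_1$ first by default, so bribing $v_j$ and flipping it is the attacker's only useful action and gains $c_2$ a margin of $2\delta w_j$.

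To calibrate the winning threshold I would add a single \emph{baseline voter} that cannot be bribed (bribing price $B+1$, which is infeasible for the attacker and pointless for the defender to touch), whose weight and orientation are chosen so that, when no item-voter is bribed, $c_1$ leads $c_2$ by exactly $2\delta W$. Since the item-voters already contribute a margin $\delta\sum_j w_j$ in favor of $c_1$, the baseline voter must contribute $\delta(2W-\sum_j w_j)$: I let it favor $c_1$ with weight $2W-\sum_j w_j$ when that is positive, favor $c_2$ with weight $\sum_j w_j-2W$ otherwise, and omit it when $2W=\sum_j w_j$. Using one weighted voter rather than many unit voters keeps the construction polynomial even though $W$ is encoded in binary, and the multiple-of-$\delta$ choice $2\delta W$ makes the required weight an integer. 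With this calibration, after the attacker bribes a set $\VV_B$ of item-voters the margin $\mathrm{score}(c_2)-\mathrm{score}(c_1)$ equals $2\delta\sum_{j\in\VV_B}w_j-2\delta W$, so $c_2$ strictly beats $c_1$ if and only if $\sum_{j\in\VV_B}w_j>W$.

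The equivalence then reads off directly: awarding $\VV_F$ corresponds to the adversary reserving the same items (cost $\le F=\bar F$), and the attacker's best response is to choose $\VV_B\subseteq\VV\setminus\VV_F$ maximizing $\sum_{j\in\VV_B}w_j$ subject to $\sum_{j\in\VV_B}p_j^b\le B$, which is exactly the packer's knapsack because here weight equals the bribing price. Hence the defender succeeds, i.e.\ no feasible bribe makes $c_2$ win and every feasible $\VV_B$ has weight $\le W$, if and only if the DNeg adversary can reserve items so that the packer's optimum is $\le W$. The destructive case is the mirror image: now $c_2=c_m$ is the initial winner, every item-voter ranks $c_2$ first, and the attacker flips voters to push $c_1$ ahead; the identical calibration gives ``$c_1$ beats $c_2$ iff bribed weight $>W$,'' so the same reduction proves both versions.

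I expect the main obstacle to be the bookkeeping that makes the attacker's optimization literally coincide with the packer's knapsack and that pins the winning condition to the exact threshold $W$. One must verify that flipping is the only profitable bribe (immediate for $m=2$), that the baseline voter is genuinely untouchable and never gives the defender a cheaper protection, and that the $2\delta W$ baseline yields a strict win precisely at bribed weight $W+1$ and a tie at weight $W$, so that the defender's ``$c_2$ does not strictly win'' condition lines up with the adversary's ``$\le W$.'' Keeping all weights, prices, and the baseline weight polynomial in the binary-encoded input is the remaining technical point, handled by the single heavy baseline voter.
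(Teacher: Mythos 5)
Your proposal is correct and takes essentially the same route as the paper: a reduction from the De-Negre bi-level knapsack variant in which each item becomes a voter with weight equal to its bribing price, plus unbribable calibration voter(s) that pin the winning threshold to exactly $W$, so that the attacker's best response is literally the packer's knapsack. The only differences are cosmetic (the paper uses two heavy dummy voters and states the proof for plurality, remarking that general non-trivial rules follow by slight modification, whereas you use one baseline voter and track $\delta=\alpha_1-\alpha_2$ explicitly).
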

\addtocounter{lemma}{-1}
\endgroup
Note that in case of $m=2$, destructive weighted-\$-bribery-protection is equivalent to constructive weighted-\$-bribery-protection. We prove the Lemma~\ref{lemma:general-hardness} for the protection problem under plurality. With slight modification the proof works for any non-trivial scoring protocol for two candidates.

We reduce from the De-Negre (DNeg) variant of bi-level knapsack problem, which is proved to be $\Sigma_2^p$-hard by Caprara et al.~\cite{caprara2014study}. Before we describe the bi-level knapsack problem, we first introduce the classical knapsack problem, which is closely related. In the knapsack problem, given is some fixed budget $\bar{B}$ together with a set $S$ of items, each having a price $\bar{p}_j^a$ and a weight $\bar{w}_j$. The goal is to select a subset of items whose total price is no more than the given budget and the total weight is maximized. We denote by $KP(S,\bar{B})$ the optimal objective value of the knapsack problem.

In the De-Negre (DNeg) variant of bi-level knapsack problem, there is an adversary and a packer. The adversary has a reserving budget $\bar{F}$ and the packer has a packing budget $\bar{B}$. There is a set of $n$ items, each having a price $\bar{p}_j^a$ to the adversary, $\bar{p}_j^b$ to the packer and a weight $\bar{w}_j=\bar{p}_j^b$ (to both the adversary and the packer). The adversary first reserves a subset of items whose total prices is no more than $\bar{F}$. Then the packer solves the knapsack problem with respect to the remaining items that are not reserved, i.e., the packer will select a subset of remaining items whose total price is no more than $\bar{B}$ such that their total weight is maximized. The DNeg variant of the bi-level knapsack problem asks for a proper subset of items reserved by the adversary such that the total weight of items selected by the packer is minimized. More precisely, the problem can be formulated as a bi-level integer programming as follows.

\noindent\fbox{\begin{minipage}{0.97\textwidth}
		\textbf{The DNeg variant of bi-level knapsack problem:}
		\begin{subequations}
			\begin{align*}
			\text{Minimize }\hspace{1mm} & \sum_{j=1}^n \bar{p}_j^by_j&\\
			s.t. \hspace{1mm}&\sum_{j=1}^n \bar{p}_j^a x_j\le \bar{F}& \\
			& \text{where }  y_1,y_2,\cdots,y_n \text{ solves the following:}&\\
			&\text{Maximize}\quad  \sum_{j=1}^n\bar{p}_j^by_j&\\
			& s.t. \quad \sum_{j=1}^n\bar{p}_j^by_j\le \bar{B}&\\
			& \hspace{10mm} x_j+y_j\le 1\quad\quad 1\le j\le n&\\
			& \hspace{10mm} x_j,y_j\in\{0,1\}\quad 1\le j\le n&
			\end{align*}
		\end{subequations}
\end{minipage}}

The decision version of the DNeg variant of bi-level knapsack problem asks whether there exists a feasible solution with the objective value at most $W$. The following lemma is due to Caprara et al.~\cite{caprara2014study}.

\begin{lemma}[By Caprara et al. \cite{caprara2014study}]
	The decision version of the DNeg variant of bi-level knapsack problem is $\Sigma_2^p$-complete.
\end{lemma}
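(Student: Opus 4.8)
The plan is to prove $\Sigma_2^p$-completeness in the two standard halves: membership and hardness.

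\textbf{Membership.} The decision version asks whether there is a reservation $x$ (feasible, $\sum_j \bar{p}_j^a x_j \le \bar{F}$) for which the packer's optimal packing value is at most $W$. The clause ``the packer's optimum is at most $W$'' is equivalent to ``\emph{every} packing $y$ that is feasible given $x$ (i.e.\ $\sum_j \bar{p}_j^b y_j\le \bar{B}$ and $x_j+y_j\le 1$ for all $j$) satisfies $\sum_j \bar{p}_j^b y_j\le W$.'' Hence the whole question has the shape $(\exists x)(\forall y)\,[\,P(x,y)\,]$, where $x,y$ are polynomially bounded $0/1$ certificates and $P$ is polynomial-time checkable. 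By the definition of $\Sigma_2^p$ recalled above (with $y_1=x$, $y_2=y$), the problem lies in $\Sigma_2^p$. This direction is routine.

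\textbf{Hardness.} I would reduce from the canonical $\Sigma_2^p$-complete problem $\exists\forall$-SAT: given a $3$-CNF $\phi(x_1,\dots,x_k,y_1,\dots,y_\ell)$, decide whether there is an assignment to the $x$-variables such that \emph{every} assignment to the $y$-variables satisfies $\phi$. The guiding correspondence is $\exists x\leftrightarrow$ the adversary's reservation and $\forall y\leftrightarrow$ the adversarial packer, arranged so that, once a reservation has fixed an $x$-assignment, the packer can pack strictly more than $W$ \emph{if and only if} it can exhibit a $y$-assignment together with a clause that $\phi$ leaves unsatisfied. Then the adversary can force the packer's optimum down to $\le W$ for some reservation exactly when some $x$-assignment makes $\phi$ true under all $y$, i.e.\ exactly on the ``yes'' instances of $\exists\forall$-SAT. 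Concretely, I would encode all prices as integers written in a base larger than the number of clauses (to forbid carries), with one digit block per variable and per clause. For each $x_i$ I create a true/false item pair whose $\bar{p}^a$-costs and the budget $\bar{F}$ force a feasible reservation to reserve exactly one item of each pair (selecting an $x$-assignment and thereby forbidding the packer from using that item); for each $y_i$ I create a true/false packer pair whose digit structure and the budget $\bar{B}$ force the packer to take exactly one of each pair; and clause-checking digits are set so that the only way the packer can raise its total above $W$ while staying within $\bar{B}$ is to additionally collect a ``violation'' item tied to a clause that the chosen $x,y$ fails to satisfy. Polynomial-time computability of the reduction is immediate.

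The main obstacle — and the reason the result is nontrivial — is the \emph{coincidence} $\bar{w}_j=\bar{p}_j^b$ imposed by the DNeg variant: the packer has no independently tunable value and weight, so its problem degenerates to ``pack a subset of $\bar{p}^b$-values as close to $\bar{B}$ as possible from below.'' All separation between the ``all clauses satisfied'' case (optimum $=W$) and the ``some clause violated'' case (optimum $>W$) must therefore be engineered through the single price/weight vector together with the two budgets $\bar{F},\bar{B}$ and the threshold $W$. I expect the delicate part to be choosing the base and the exact numerical values so that simultaneously (i) budget feasibility alone forces the one-per-pair selections for both players, (ii) no spurious subset of items can exceed $W$ without corresponding to a genuine falsifying $y$-assignment, and (iii) $W$ sits strictly between the best ``honest'' packing and any violation-exploiting packing. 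Verifying these no-carry and budget-tightness conditions is the bulk of the argument; once they hold, the equivalence of ``yes'' instances is immediate and $\Sigma_2^p$-completeness follows.
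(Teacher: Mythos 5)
You should first note that the paper does not prove this lemma at all: it is imported verbatim from Caprara et al.~\cite{caprara2014study}, so there is no internal proof to compare against and your attempt has to stand on its own. Your membership half is fine and routine. The hardness half, however, starts from a source problem that is not $\Sigma_2^p$-complete: $\exists\forall$-SAT with a \emph{3-CNF} matrix collapses, because for a fixed $x$-assignment the question ``does every $y$ satisfy $\phi(x,y)$'' is decidable in polynomial time — a CNF is true under all $y$-assignments iff each clause is either already satisfied by one of its $x$-literals or contains a complementary pair of $y$-literals. Hence your source problem lies in \NP, and no reduction from it can establish $\Sigma_2^p$-hardness (unless $\NP=\Sigma_2^p$). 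The canonical $\Sigma_2^p$-complete form requires the matrix in 3-DNF, and then your clause gadget no longer ports: a falsifying $y$ for a DNF must falsify \emph{every} term, so there is no single ``violated clause'' item for the packer to collect. Caprara et al.\ avoid this by routing through an interval-avoidance version of subset sum (shown $\Sigma_2^p$-complete first), not through quantified SAT directly.

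Even with a corrected source problem, the gadget plan does not discharge its own condition (ii). Because the DNeg variant forces $\bar{w}_j=\bar{p}_j^b$, the packer succeeds exactly when some subset of unreserved items has $\bar{p}^b$-sum in the window $(W,\bar{B}]$ — an \emph{inequality} condition, not an equality. The no-carry base encoding you propose certifies equalities: it guarantees that subsets hitting a specific digit pattern are well-formed, but a maximizing, adversarial packer has no obligation to hit any pattern; it will happily take both items of a ``$y$-pair,'' or neither, or an arbitrary malformed subset, so long as the total lands in $(W,\bar{B}]$. So ``budget feasibility forces one-per-pair selections'' fails on the packer's side, and ruling out every spurious subset whose sum merely falls in the window is precisely the hard part that the digit trick does not supply. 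This is why the actual proof in~\cite{caprara2014study} works with a subset-sum-interval problem where the window structure is native. Your obligations (i)–(iii) correctly identify what must hold, but the proposal provides no mechanism that establishes (ii), so the reduction as sketched has a genuine gap rather than a fillable routine step.
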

Based on the above lemma, we are able to prove Lemma~\ref{lemma:general-hardness}.
\begin{proof}[Proof of Lemma~\ref{lemma:general-hardness}]
	Given an arbitrary instance of the (decision version of) DNeg variant of bi-level knapsack problem, we construct an election instance as follows. There are $m=2$ candidates. The defense and attack budgets are $F=\bar{F}$ and $B=\bar{B}$, respectively. There are $n+2$ voters:
	\begin{itemize}
		\item $n$ key voters $v_1,v_2,\cdots,v_n$ who vote for $c_2$, each having an awarding price $p_j^a=\bar{p}_j^a$, a bribing price $p_j^b=\bar{p}_j^b$, and a weight $w_j=\bar{p}_j^b$.
		\item one dummy voter $v_{n+1}$ who votes for $c_2$ whose weight is $2W$, awarding price is $F+1$ and bribing price is $B+1$.
		\item one dummy voter $v_{n+2}$ who votes for $c_1$ whose weight is $\sum_{j=1}^n\bar{p}_j^b$, awarding price is $F+1$ and bribing price is $B+1$.
	\end{itemize}
	
	Obviously $c_2$ is the original winner. We show in the following that the constructed election instance is secure if and only if the DNeg variant of bi-level knapsack problem admits a feasible solution with an objective value at most $W$.
	
	Suppose the DNeg variant of bi-level knapsack problem admits a feasible solution with an objective value at most $W$, and let $x_i^*$ be such a solution. As $\sum_{j=1}^n\bar{p}_j^ax_j^*=\sum_{j=1}^n{p}_j^ax_j^*\le \bar{F}=F$, we let the defender award all the voters such that $x_j^*=1$, i.e., let $\VV_F=\{v_j|x_j^*=1\}$. According to the fact that the objective value of the bi-level knapsack problem is at most $W$, and the fact that the two dummy voters can never be bribed, it follows that the optimal objective value of the following knapsack problem is at most $W$:
	\begin{subequations}
		\begin{align*}
		\text{Maximize   }\hspace{10mm} &\sum_{j:v_j\in \VV\setminus\VV_F} {p}_j^by_j&\\
		\text{s.t. }\hspace{10mm} & \sum_{j\in \VV\setminus\VV_F} {p}_j^by_j\le {B}  &
		\end{align*}
	\end{subequations}
	Thus, with a budget of $B$ the the briber can never bribe key voters whose total weight is more than $W$. Note that originally the total weight of voters voting for $c_2$ is $2W+\sum_{j=1}^n p_j^b$, and the total weight of voters voting for $c_1$ is $\sum_{j=1}^n p_j^b$, the briber cannot succeed and the election is thus secure.
	
	Suppose the election is secure. Then there exists some $\VV_F$ such that we can say $\sum_{j:v_j\in\VV_F}  p_j^a\le F$ and if voters in $\VV_F$ are fixed, no briber can succeed. Note that the two dummy voters can never be protected nor bribed, whereas $\VV_F\subseteq \{v_1,v_2,\cdots,v_n\}$. Thus, among voters in $\{v_1,v_2,\cdots,v_n\}\setminus \VV_F$, within a budget of $B$ the briber cannot bribe voters whose total weight is more than $W$. This is equivalent as saying that by setting $x_j=1$ for $v_j\in \VV_F$ and $x_j=0$ otherwise, the knapsack problem for $y_j$ does not admit a feasible solution with an objective value more than $W$. Hence, the objective value of the given DNeg variant of bi-level knapsack problem is at most $W$.
	\qed
\end{proof}

\subsection{Proofs Omitted in Section~\ref{subsec:weighted-unit}}


\begingroup
\def\thetheorem{\ref{thm:uni-price-plu-p}}
\begin{theorem}
	If $m$ is a constant, then the destructive weighted-protection problem is in \PP\, for any scoring rule.
\end{theorem}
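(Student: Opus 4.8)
The plan is to turn the defender's min-max decision into a polynomial enumeration, exploiting that unit prices make \emph{both} the set of ``canonical'' defences and the attacker's best response tractable once $m$ is constant. First I would pass to the equivalent \emph{minmax vector addition} formulation of Lemma~\ref{lemma:equivalent}. With unit prices $p_j^a=p_j^b=1$ the two budgets become cardinality constraints, $|\VV_F|\le F$ and $|\VV_B|\le B$. The vector $\vec{\Delta}_j$ depends only on the preference list $\tau_j$, and since there are only $|S_m|=m!$ permutations, the voters split into at most $m!$ groups $\VV^1,\ldots,\VV^{m!}$, all voters in a group sharing the same $\vec{\Delta}$. Moreover $\Delta_{ij}=\alpha_1-\alpha_{\tau_j^{-1}(i)}+\alpha_{\tau_j^{-1}(m)}-\alpha_m\ge 0$ for every $i,j$, because $\alpha_1$ is the largest and $\alpha_m$ the smallest entry of the scoring vector; this nonnegativity is what drives the algorithm.

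Second, I would restrict the defender's choices using domination. With unit prices, inside one group $v_{j'}\prec v_{j}$ holds exactly when $w_{j'}<w_j$, or $w_{j'}=w_j$ with $j'<j$; hence the domination order on each group is the total order by weight (ties broken by index). By Corollary~\ref{coro:maximal} (which applies since the weighted-protection problem is the unit-price special case of weighted-\$-protection), we may assume $\VV_F$ is maximal, so inside each group $\VV^h$ the awarded voters form the prefix of highest-weight voters. Consequently a canonical $\VV_F$ is described entirely by a count vector $(f_1,\ldots,f_{m!})$ with $0\le f_h\le |\VV^h|$ and $\sum_h f_h\le F$: award the $f_h$ heaviest voters of group $\VV^h$. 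The number of such vectors is at most $\binom{F+m!}{m!}=O(n^{m!})$, which is polynomial because $m!$ is a constant.

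Third, for each candidate defence $\VV_F$ I would solve the attacker's best response directly. The attacker wins iff it can push some $c_i$ ($1\le i\le m-1$) strictly above $c_m$, i.e. iff $\max_{i}\bigl(\Lambda(c_i)+\max_{\VV_B}\sum_{j:v_j\in\VV_B} w_j\Delta_{ij}\bigr)>\Lambda(c_m)$, where the inner maximum ranges over $\VV_B\subseteq\VV\setminus\VV_F$ with $|\VV_B|\le B$. Since every $w_j\Delta_{ij}$ is nonnegative, the inner maximum for a fixed target $c_i$ is attained greedily by bribing the $\min(B,|\VV\setminus\VV_F|)$ non-awarded voters with the largest values $w_j\Delta_{ij}$ (an $O(n\log n)$ sort per target). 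The defender succeeds with this $\VV_F$ iff no target beats $c_m$, and the algorithm answers ``Yes'' iff some enumerated count vector blocks every target, giving an overall $O(n^{m!}\cdot m\cdot n\log n)$ procedure.

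The point needing the most care is the reduction to count vectors: one must verify that Corollary~\ref{coro:maximal} (via Lemma~\ref{lemma:destructive-2}) genuinely permits restricting attention to weight-prefix defences without loss of generality, and then that the enumeration is polynomial \emph{precisely} because the constant bound on $m$ forces a constant number $m!$ of groups. By contrast the attacker's inner problem, which is NP-hard knapsack in the weighted-\$ setting, here collapses to trivial greedy selection, since unit bribing prices turn it into a cardinality-constrained maximisation of nonnegative terms --- this is exactly the feature separating this tractable case from the $\Sigma_2^p$-complete weighted-\$ case.
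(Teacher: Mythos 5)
Your proof is correct, and its outer layer --- enumerating canonical awarded sets $\VV_F$ as weight-prefixes within each of the $m!$ preference classes, justified by Corollary~\ref{coro:maximal} --- is exactly what the paper does. Where you genuinely diverge is the inner verification. The paper stays with brute force: for each candidate $\VV_F$ it also enumerates $\VV_B$ by count vectors over preference classes (again via Corollary~\ref{coro:maximal}) and then enumerates how many bribed voters are reassigned to each of the $m!$ possible new preference lists, for a total of $n^{3m!}$ cases. You instead pass to the minmax vector addition reformulation of Lemma~\ref{lemma:equivalent}, note that every $\Delta_{ij}$ is nonnegative and that unit bribing prices turn the attacker's budget into a cardinality constraint, and compute the best response in closed form: for each target $c_i$, greedily bribe the $B$ non-awarded voters maximizing $w_j\Delta_{ij}$. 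This is essentially the same greedy the paper deploys later, in Lemma~\ref{lemma:destructive-veri-p}, to establish polynomial-time verifiability for arbitrary $m$; you have imported it into the constant-$m$ theorem. The trade-off: your inner step is asymptotically much faster ($O(m\,n\log n)$ versus $n^{2m!}$) and isolates the structural reason the destructive inner problem is easy here (nonnegative gains plus a cardinality budget collapse the knapsack), whereas the paper's version is self-contained and never needs the equivalence lemma. Both arguments are sound; yours is arguably the cleaner of the two.
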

\addtocounter{theorem}{-1}
\endgroup

\begin{proof}
The theorem is proved by trying all different possible $\VV_F$ and check whether the attacker can succeed for each of them. 	
	By Corollary~\ref{coro:maximal}, for voters having the same preference, $\VV_F$ contains voters of the largest weights. Hence to determine $\VV_F$, it suffices to know the number of voters having each preference in $\VV_F$. There are at most $m!$ different preferences, and consequently at most $n^{m!}$ different kinds of $\VV_F$, which is polynomial when $m$ is constant. For each possible choice of $\VV_F$, we check whether the attacker can succeed by trying all possible $\VV_B$ and all possible ways of changing their preferences. Firstly, by Corollary~\ref{coro:maximal}, for voters in $\VV\setminus\VV_F$ that have the same preference list, $\VV_B$ contains the ones of the largest weights, hence using a similar argument we know there are at most $n^{m!}$ different kinds of $\VV_B$. Given $\VV_F$ and $\VV_B$, it remains to determine how the preference lists of voters in $\VV_B$ should be changed. Note that we do not need to specify how the preference list is changed for each $v_j\in \VV_B$. Instead, we only need to determine the number of voters in $\VV_B$ that are changed to each preference list, which gives rise to at most $n^{m!}$ possibilities. Therefore, overall there are at most $n^{3m!}$ different possibilities regarding $\VV_F$, $\VV_B$ and how to alter the preference lists of voters in $\VV_B$, which can be enumerated efficiently when $m$ is a constant.
	\qed
\end{proof}
{We remark that an argument similar to the one we used to prove Theorem \ref{thm:uni-price-plu-p} was used by Faliszewsk et al. \cite{faliszewski2009hard}.}

\subsection{Proofs omitted in Section~\ref{subsec:unit-dollar}}

\begingroup
\def\thetheorem{\ref{thm:assy-uniweight-np}}
\begin{theorem}
	For any non-trivial scoring rule, both the constructive and destructive \$-protection problems are \NP-complete.
\end{theorem}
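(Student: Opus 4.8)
I would prove \NP-completeness by treating membership and hardness separately, and I would exploit that the two budgets still interact nontrivially even though, with unit weights, the attacker's per-voter influence becomes uniform.

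\smallskip
\noindent\textbf{Membership in \NP.} The key point is that, for constant $m$ and $w_j=1$, the \emph{inner} problem faced by the attacker --- deciding, for a \emph{fixed} awarded set $\VV_F$, whether a feasible bribery exists --- is solvable in polynomial time. Since the protection problem asks whether there exists a $\VV_F$ with $\sum_{j:v_j\in\VV_F}p_j^a\le F$ for which the attacker fails, and such a $\VV_F$ is a polynomial-size certificate, a polynomial-time inner decision procedure places the whole problem in \NP. For the destructive variant I would invoke the minmax-vector-addition reformulation (Lemma~\ref{lemma:equivalent}): with $w_j=1$ the attacker succeeds iff for some coordinate $i$ one can pick $\VV_B\subseteq\VV\setminus\VV_F$ with $\sum_{j:v_j\in\VV_B}p_j^b\le B$ and $\sum_{j:v_j\in\VV_B}\Delta_{ij}>\Lambda(c_m)-\Lambda(c_i)$. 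Because each $\Delta_{ij}$ lies in the bounded set $\{0,1,\dots,\alpha_1-\alpha_m\}$ and all scores are polynomially bounded, for each fixed $i$ this is a knapsack instance whose total attainable profit is polynomial, solved by the standard profit-indexed dynamic program; running it for each of the $m-1$ coordinates decides the inner destructive problem. For the constructive variant I would instead run a dynamic program over the vector of gaps $(\mathrm{score}(c_m)-\mathrm{score}(c_i))_{1\le i\le m-1}$, whose coordinates range over a polynomially bounded interval. As $m$ is constant this state space is polynomial; each voter is processed by deciding either not to bribe it or to bribe it into one of the constantly many preference types, and we keep the minimum bribing cost reaching a state in which $c_m$ is the unique winner. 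Either way the inner problem is in \PP, so the protection problem is in \NP.

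\smallskip
\noindent\textbf{\NP-hardness.} By the ``$\diamond$'' remark accompanying Table~\ref{table:1} it suffices to prove hardness for $m=2$ under plurality, where the constructive and destructive versions coincide, the general non-trivial rule following by a slight modification. With $m=2$ and unit weights, only the voters ranking $c_2$ first are useful to the attacker, and bribing any one of them shifts the $c_2$-versus-$c_1$ gap by the same fixed amount; hence the attacker needs a \emph{fixed number} $k$ of such voters to overturn the outcome and, among the non-awarded ones, will always bribe the $k$ cheapest. Thus the attacker succeeds iff the $k$ cheapest non-awarded useful voters have total bribing price at most $B$, and the defender succeeds iff it can award a set of total awarding price at most $F$ so that the $k$ cheapest survivors cost strictly more than $B$. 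I would reduce an \NP-complete subset-selection problem (a partition/knapsack variant) to this interdiction question: encode each source item by one useful voter whose awarding price and bribing price carry the two item parameters, and tune the initial gap (hence $k$), $F$, and $B$ so that ``the defender can block'' holds exactly when the source instance is a yes-instance.

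\smallskip
\noindent\textbf{Main obstacle.} The delicate part is the hardness gadget, not the membership dynamic programs (which are routine once one observes the polynomial score bound and constant dimension). Because the per-voter effect is uniform, the attacker's best response collapses to the greedy ``cheapest $k$,'' so the defender is really interdicting this greedy rule. Writing the useful voters in increasing order of bribing price and letting the first kept item be at position $p$, blocking amounts to deleting exactly $p-k$ earlier voters whose awarding prices sum to at most $F$ while the surviving $k$ cheapest bribing prices exceed $B$ --- i.e.\ a two-upper-bound, cardinality-constrained subset-selection, which is where the \NP-hardness must be injected. I would therefore design the prices so that (i) this two-constraint feasibility is equivalent to the chosen partition/subset-sum instance, and (ii) the ``trivial'' blocking option of simply awarding more than the surviving slack of useful voters is ruled out by making its awarding cost exceed $F$, forcing the genuinely combinatorial regime.
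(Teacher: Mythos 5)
Your proposal matches the paper's proof in both halves: membership is established by showing that the inner (fixed-$\VV_F$) bribery decision is polynomial-time for constant $m$, so that $\VV_F$ itself serves as the \NP\ certificate, and hardness is shown for $m=2$ under plurality by reducing from a partition-type problem in which each item becomes a voter whose awarding and bribing prices are complementary offsets of the item value, so that the defender is forced to award an exact number of voters and the two budget constraints squeeze the awarded item-sum to an exact target (the paper uses a ``Balanced Partition$'$'' variant with prices $4q+a_j$ and $4q-a_j$, budgets $F=4qn+q$ and $B=(4n-1)q-1$). The one caution is that your membership dynamic programs implicitly assume the score gaps are polynomially bounded, whereas the paper instead enumerates the $n^{O(m!)}$ possible type-profiles of $\VV_B$ and of the rewritten preference lists (using the dominance observation that within each preference type the cheapest voters are bribed), which works for any fixed scoring rule without that assumption.
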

\addtocounter{theorem}{-1}
\endgroup

To prove Theorem \ref{thm:assy-uniweight-np}, we first show the problem belongs to \NP\ in Lemma \ref{lemma:uni-weight-NP-easy} and then show its NP-hardness in Lemma \ref{lemma:unit-weight-hardness}.

\begin{lemma}
\label{lemma:uni-weight-NP-easy}
For any scoring rule and arbitrary constant $m$, both the constructive and destructive \$-protection problems
are in \NP.
\end{lemma}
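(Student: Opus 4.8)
The plan is to prove $\Sigma_2^p$-membership in the sense of the Wrathall characterization already set up, but specialized to the constant-$m$ case where the structure collapses to \NP. The key insight is that when $m$ is a constant, both the existential guess (the awarded set $\VV_F$) and the universal quantifier (over bribery sets $\VV_B$ and re-assigned preference lists) can be handled without a genuine alternation, because the ``for all $\VV_B$'' part reduces to a polynomial-time optimization rather than an independent nondeterministic search. First I would recall that by Corollary~\ref{coro:maximal} (for the destructive case) and Corollary~\ref{coro:maximal2} (for the constructive \$-protection case), we may assume $\VV_F$ and $\VV_B$ are maximal with respect to the domination order, so each is determined by how many voters of each of the at most $m!$ preference types it contains.

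The certificate for an \NP\ algorithm is simply the awarded set $\VV_F$, described succinctly by the count of awarded voters of each preference type; since there are at most $m!$ types, this is a vector of $m!$ nonnegative integers bounded by $n$, hence of polynomial size, and we can check $\sum_{j:v_j\in\VV_F}p_j^a\le F$ in polynomial time. The remaining obligation is to verify in polynomial time that \emph{no} admissible bribery set can let the attacker succeed. This is where constant $m$ does the work: rather than guessing $\VV_B$, I would argue the worst-case bribery can be computed directly. For the destructive case I would invoke Lemma~\ref{lemma:equivalent} to pass to the minmax vector addition formulation; the attacker succeeds iff some coordinate $c_i$ can be pushed above $\Lambda(c_m)$, and since there are only $m-1$ coordinates, one checks each candidate $c_i$ separately by solving a single knapsack-type maximization of $\sum_{j\in\VV_B}w_j\Delta_{ij}$ subject to $\sum_{j\in\VV_B}p_j^b\le B$ over the unawarded voters. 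For the constructive case with $m=2$ the same reduction to a single knapsack maximization applies.

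The subtlety — and the main obstacle to be careful about — is that a plain knapsack maximization is only pseudopolynomial, so I cannot simply claim the inner verification is polynomial via dynamic programming on $B$. The resolution, and the reason the lemma is stated for \emph{constant} $m$, is that we are not required to solve the inner problem exactly by DP; instead the entire verification is placed inside the nondeterministic guess. Concretely, I would have the \NP\ machine guess not only $\VV_F$ but also the attacker's best response counts, and then \emph{verify} rather than optimize: the machine checks that the guessed response is feasible (budget $B$ respected) and already achieves the attacker's goal, which falsifies security. Thus security is certified by the \emph{absence} of such a response; but since ``absence'' is a universal claim, I must instead phrase the \NP\ certificate as the claim that the defender succeeds, and verify this by checking that the optimal attacker response does not succeed. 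Because each of the at most $m-1$ inner knapsack instances has its optimum attainable by a guessed feasible $\VV_B$ (again a vector of at most $m!$ type-counts), the correct framing is: the defender's choice $\VV_F$ is a valid certificate precisely when the polynomial-size description passes the feasibility check and the maximum attacker gain, computed over each candidate coordinate, stays below threshold.

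I expect the cleanest writeup to mirror the membership proof of Lemma~\ref{lemma:general-sigma-complete} but observe that the universal quantifier over $\VV_B$ is eliminable: by the domination corollaries the relevant $\VV_B$ ranges over at most $n^{m!}$ maximal sets, which is polynomial for constant $m$, so the \emph{entire} inner universal check can be performed by deterministic enumeration in polynomial time once $\VV_F$ is fixed. This is exactly the enumeration device used in the proof of Theorem~\ref{thm:uni-price-plu-p}. Hence the algorithm is: guess $\VV_F$ (the \NP\ witness), then deterministically enumerate all $n^{m!}$ candidate $\VV_B$ together with the at most $n^{m!}$ ways of re-assigning preference types to the bribed voters, and accept iff none of them lets the attacker win. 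The main obstacle is therefore purely expository — making explicit that constant $m$ turns the universal quantifier into a polynomial enumeration, thereby collapsing the $\Sigma_2^p$ verification down to a single existential guess followed by polynomial-time deterministic checking, which is precisely \NP\ membership.
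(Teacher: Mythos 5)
Your final argument is correct and is essentially the paper's own proof: guess $\VV_F$ as the \NP\ certificate, then use the domination/maximality observation to reduce the universal check over $\VV_B$ (and over re-assigned preference lists) to a deterministic enumeration of at most $n^{m!}\cdot n^{m!}$ type-count vectors, which is polynomial for constant $m$. The detour through knapsack and ``guessing the attacker's best response'' in your middle paragraphs is unnecessary (and the paper does not use it), but your closing paragraph lands on exactly the enumeration device the paper borrows from Theorem~\ref{thm:uni-price-plu-p}.
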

\begin{proof}
	Note that to show the membership in \NP, it suffices to show that given $\VV_F$, we can determine in polynomial time whether the constructive/destructive attacker can succeed. Note that among voters of the same preference list in $\VV\setminus\VV_F$, $\VV_B$ always contains the ones with the smallest bribing prices. 
	 Hence, similarly as the proof of Theorem~\ref{thm:uni-price-plu-p}, there are at most $n^{m!}$ different kinds of $\VV_B$. Given $\VV_B$, a similar argument as that of Theorem~\ref{thm:uni-price-plu-p} shows that there are $n^{m!}$ different ways of altering the preference lists of voters in $\VV_B$. Hence in $n^{2m!}$ time we can determine the whether the constructive/destructive attacker can succeed, which is polynomial if $m$ is a constant.
	 \qed
 \end{proof}
\begin{lemma}
\label{lemma:unit-weight-hardness}
For any non-trivial scoring rule, both the constructive and destructive \$-protection problems
are \NP-hard even if there are only 2 candidates.	
\end{lemma}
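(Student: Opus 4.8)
The plan is to reduce from the classical (0/1) \emph{Knapsack} problem, which is \NP-complete, to the protection problem with $m=2$ candidates. Recall that for $m=2$ the constructive and destructive versions coincide, so it suffices to handle one of them, and (as for Lemma~\ref{lemma:general-hardness}) it suffices to treat plurality, the general non-trivial rule following by scaling all score differences by $\alpha_1-\alpha_2$. The first step is to pin down the attacker's best response. With two candidates and unit weights, the winner $c_2$ can be dethroned only by flipping $c_2$-voters to $c_1$, and each flip narrows the score gap by the same fixed amount; hence there is a fixed threshold $k^*$ such that, given the awarded set $\VV_F$, the attacker succeeds if and only if it can bribe $k^*$ unprotected $c_2$-voters within budget $B$. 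Since the attacker minimizes cost, its best response is simply to take the $k^*$ cheapest unprotected $c_2$-voters. Consequently, the defender succeeds if and only if there is a set $\VV_F$ with $\sum_{j:v_j\in\VV_F}p_j^a\le F$ for which the $k^*$ cheapest bribing prices among $\VV\setminus\VV_F$ sum to more than $B$.

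Given a Knapsack instance with items $1,\dots,n$ of weights $w_i$, profits $\pi_i\ge 1$, capacity $W$ and target profit $\Pi$, I would build the following election. Set a large constant $M:=\sum_i\pi_i+1$. Create $n$ \emph{item voters} (all voting $c_2$), where item voter $i$ has bribing price $p_i^b=M-\pi_i$ and awarding price $p_i^a=w_i$; create $n$ \emph{baseline voters} (also voting $c_2$), each with bribing price $M$ and awarding price $F+1$ so that the defender can never afford to award them; and create a single voter for $c_1$. Set the defense budget $F:=W$, the threshold $k^*=n$ (which is realized by the vote counts, since $c_2$ then leads $c_1$ by $2n-1$), and the attack budget $B:=nM-\sum_i\pi_i+\Pi-1$.

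The crux is the following clean evaluation of the attacker's cost. Because every item voter is strictly cheaper to bribe than any baseline voter ($M-\pi_i<M$), whenever the defender awards a set $S$ of item voters the attacker's cheapest $k^*=n$ unprotected voters consist of all $n-|S|$ surviving item voters together with $|S|$ baseline voters; its total cost is therefore $\sum_{i\notin S}(M-\pi_i)+|S|M=nM-\sum_i\pi_i+\sum_{i\in S}\pi_i$. Thus the defender succeeds if and only if there is a set $S$ of items with $\sum_{i\in S}w_i=\sum_{i\in S}p_i^a\le F=W$ and $\sum_{i\in S}\pi_i>B-nM+\sum_i\pi_i=\Pi-1$, i.e.\ $\sum_{i\in S}\pi_i\ge\Pi$ --- exactly a ``Yes'' instance of Knapsack. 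Combined with membership in \NP\ (Lemma~\ref{lemma:uni-weight-NP-easy}) and the equivalence of the two variants at $m=2$, this yields \NP-completeness of both the constructive and destructive \$-protection problems.

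The main thing to verify carefully is precisely this attacker best-response claim: one must confirm that swapping a surviving item voter for a baseline voter is never advantageous, and that awarding baseline voters is never affordable, so that protecting item $i$ raises the attacker's optimal cost by \emph{exactly} $\pi_i$ (the attacker saves $M-\pi_i$ but is forced into one extra baseline bribe costing $M$). This ``profit $=\pi_i$'' accounting is what turns the defender's budgeted protection into a knapsack feasibility test, and it is exactly where the asymmetry between $p^a$ and $p^b$ is used --- an asymmetry that must be essential, since by Theorem~\ref{thm:uni-weight-plu-p} the symmetric case lies in \PP.
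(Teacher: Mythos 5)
Your reduction is correct, but it takes a genuinely different route from the paper's. The paper first introduces an auxiliary problem, Balanced Partition$'$ (balanced partition with the extra condition $a_n+\cdots+a_{2n-1}\ge q+1$), proves it \NP-complete by padding an ordinary balanced-partition instance with large integers, and then encodes it with $2n$ key voters whose awarding and bribing prices are $4q+a_j$ and $4q-a_j$; the budget arithmetic there forces $|\VV_F|=n$ and forces the attacker to bribe \emph{all} remaining key voters, so both budget constraints become tight and yield the partition equality $\sum_{j\in\VV_F}a_j=q$. You instead reduce directly from 0/1 Knapsack: awarding prices carry the item weights, bribing prices are $M-\pi_i$, and the $n$ expensive ``baseline'' voters act as forced substitutes, so that protecting item $i$ raises the attacker's optimal cost by exactly $\pi_i$ and the defender's decision becomes a knapsack feasibility test (an inequality rather than an equality condition). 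Your version buys a shorter proof --- it avoids establishing the intermediate Balanced Partition$'$ hardness and the somewhat delicate ``$|\VV_F|=n$ exactly'' claim --- while the paper's symmetric-around-$4q$ price structure is what lets both budgets bind simultaneously. Your key steps check out: with $m=2$ the two variants coincide, the flip threshold is indeed $k^*=n$ against the single $c_1$-voter, every item voter is strictly cheaper to bribe than every baseline voter (since $\pi_i\ge 1$), baseline voters are unawardable, and the cost accounting $nM-\sum_i\pi_i+\sum_{i\in S}\pi_i>B$ reduces to $\sum_{i\in S}\pi_i\ge\Pi$. Two cosmetic points: specify a (positive integer) awarding/bribing price for the lone $c_1$-voter to conform to the problem definition (any value works, since neither party ever touches it), and state explicitly that the attacker never bribes that voter, so ``the $n$ cheapest unprotected voters'' really means the $n$ cheapest unprotected $c_2$-voters.
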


Again, in case of two candidates, the constructive and destructive variants are identical and it suffices to prove the theorem under the scoring rule of plurality.

Towards the proof, we need the following intermediate problems.

\noindent\textbf{Balanced Partition:} Given a set of positive integers $\{a_1,a_2,\cdots,a_{2n}\}$ where $a_1\le a_2\le \cdots\le a_{2n}$ and an integer $q$ such that $\sum_{j=1}^{2n}a_j=2q$. Determine whether there exists a subset $S$ of $n$ integers such that $\sum_{i:a_i\in S}a_i=q$.

The balanced partition problem is a variant of the partition problem (in which $S$ is not required to contain exactly $n$ integers). The \NP-completeness of the balanced partition problem is a folklore result, which follows from a slight modification on \NP-completeness proof for the partition problem~ given by Garey and Johnson\cite{garey2002computers}.

Using the balanced partition problem, we are able to show the \NP-hardness of the following problem in Lemma~\ref{lemma:balanced}.

\smallskip
\noindent\textbf{Balanced partition$'$:} Given a set of positive integers $\{a_1,a_2,\cdots,a_{2n}\}$ where $a_1\le a_2\le \cdots\le a_{2n}$ and an integer $q$ such that $\sum_{j=1}^{2n}a_j=2q$, $a_{n}+a_{n+1}+\cdots+a_{2n-1}\ge q+1$. Determine whether there exists a subset $S$ of $n$ integers such that $\sum_{i:a_i\in S}a_i=q$.

\begin{lemma}\label{lemma:balanced}
	Balanced partition$'$ is \NP-complete.
\end{lemma}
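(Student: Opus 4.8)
The plan is to reduce from the standard Balanced Partition problem, whose \NP-completeness is quoted above, to Balanced partition$'$. The only difference between the two problems is the extra structural hypothesis $a_n+a_{n+1}+\cdots+a_{2n-1}\ge q+1$ in the primed version, so the entire task is to massage an arbitrary Balanced Partition instance into one that additionally satisfies this inequality while preserving the answer. Membership in \NP\ is immediate: a subset $S$ of exactly $n$ integers with $\sum_{i:a_i\in S}a_i=q$ is a polynomial-size certificate that can be checked by summing, so I would dispose of that in one sentence and spend the rest on hardness.

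First I would take an arbitrary Balanced Partition instance $\{a_1\le\cdots\le a_{2n}\}$ with $\sum a_j=2q$. The idea is to shift every element upward by a common additive constant $D$, setting $a_j'=a_j+D$. Because every candidate subset $S$ in Balanced partition$'$ is required to have exactly $n$ elements, any such $S$ picks up exactly $nD$ of added mass; hence $\sum_{a_i'\in S}a_i'=q'$ with $q'=q+nD$ holds if and only if $\sum_{a_i\in S}a_i=q$ for the same index set. Thus the shift preserves both the total ($\sum a_j'=2q+2nD=2q'$) and the set of valid solutions, so it is answer-preserving. The point of the shift is that after it, the $n$ largest-but-one elements $a_n',\ldots,a_{2n-1}'$ each grow by $D$ while $q'$ grows only by $nD$, and I can choose $D$ large enough (a polynomially bounded value, e.g. $D=2q+1$ or similar) to force $a_n'+\cdots+a_{2n-1}'\ge q'+1$.

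The step I expect to be the main obstacle is verifying that the required inequality is actually achievable by the shift and quantifying $D$ correctly. The quantity I must control is $\bigl(\sum_{j=n}^{2n-1}a_j'\bigr)-q'=\bigl(\sum_{j=n}^{2n-1}a_j\bigr)+nD-(q+nD)=\bigl(\sum_{j=n}^{2n-1}a_j\bigr)-q$, and the $nD$ terms cancel, so a naive uniform shift does not change this difference at all. This means a plain additive shift is insufficient, and the real work is to repair the construction: I would instead add a constant only to a carefully chosen block of elements, or equivalently pad the instance with a few auxiliary large elements that are forced into (or out of) every valid half by a counting argument, so that the middle block $a_n,\ldots,a_{2n-1}$ is guaranteed to be heavy relative to $q$. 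Concretely, I would enlarge the top half of the sorted array relative to the bottom half while keeping the total even and keeping every exact-$q$ half-sized subset in bijection with those of the original instance; getting this bijection and the inequality simultaneously, with all numbers polynomially bounded and the sortedness $a_1\le\cdots\le a_{2n}$ maintained, is the delicate part.

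Once the transformed instance is shown to satisfy $a_n+\cdots+a_{2n-1}\ge q+1$ and to have the same yes/no answer as the source Balanced Partition instance, the reduction is complete: it is computable in polynomial time, and it maps yes-instances to yes-instances and no-instances to no-instances. Combining this with the \NP\ membership argument establishes that Balanced partition$'$ is \NP-complete, which is exactly the statement of Lemma~\ref{lemma:balanced}.
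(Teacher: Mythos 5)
Your diagnosis of the difficulty is exactly right: the uniform shift $a_j'=a_j+D$ leaves the quantity $\bigl(\sum_{j=n}^{2n-1}a_j\bigr)-q$ invariant because the $nD$ contributions cancel, so it cannot create the required inequality. But having correctly discarded that idea, you stop at the point where the actual proof begins. ``Pad the instance with a few auxiliary large elements that are forced into every valid half by a counting argument'' is the right strategy and is precisely what the paper does, but you never specify the padding elements, their values, how many there are, or why the counting argument forces a fixed number of them into every solution. That forcing argument is the entire content of the hardness proof, so as written the proposal has a genuine gap rather than a complete reduction. (Your other suggested repair, adding a constant to only some elements, is worse than the padding route: a subset's shifted sum then depends on how many shifted elements it contains, which is not controlled, so answer-preservation would fail.)

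To close the gap along the lines you sketched, the paper adds $4n$ new integers, each equal to $3q$. The new instance has $2n'=6n$ elements with total $2q+12nq$, hence $q'=q+6nq$, and since every original element is at most $2q<3q$, positions $n'=3n$ through $2n'-1=6n-1$ in sorted order all fall on copies of $3q$, giving a middle-block sum of $9nq\ge q'+1$. For answer preservation: a solution of the original instance extends to one of the new instance by adjoining $2n$ copies of $3q$ to each side; conversely, any half-sized subset $S$ of the new instance summing to $q'$ must contain exactly $2n$ of the padding elements, because if it contained at most $2n-1$ its sum would be at most $(2n-1)\cdot 3q+2q<q+6nq=q'$, and the case of at least $2n+1$ is handled by applying the same bound to the complement. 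The remaining $n$ elements of $S$ are then original and sum to $q'-6nq=q$, recovering a solution of the source instance. All numbers are polynomially bounded, so the reduction is polynomial time.
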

\begin{proof}
	Membership in \NP\, is straightforward. We show balanced partition$'$ is \NP-hard in the following via reduction from balanced partition. 

	Given an instance of the balanced partition problem where the integers are $a_1\le a_2\le \cdots\le a_{2n}$ and $q=1/2\cdot \sum_{j=1}^{2n}a_j$, we construct an instance of the balanced partition$'$ problem by adding $4n$ integers, each of value $3q$.
	
		We first show that the constructed instance is a feasible instance. Obviously every additional integer is larger than any $a_j$ where $j\le 2n$. Let the additional integers be $a_{2n+1},a_{2n+2},\cdots,a_{6n}$. In the constructed instance there are $2n'=6n$ elements, with the summation of all integers being $2q+12nq$. Let $q'=q+6nq$. Obviously $a_{n'+1}+a_{n'+2}+\cdots+a_{2n'-1}=9nq>q'+1$. Thus, the constructed instance is a feasible instance of the balanced partition$'$ problem.
	
	We show that the constructed instance admits a feasible partition if and only if the given balanced partition instance admits a feasible solution.
	
	If the given balanced partition instance admits a feasible solution, then obviously the constructed balanced partition$'$ problem admits a feasible solution (by adding $2n$ of the additional integers to both sides).
	
	Suppose the constructed balanced partition$'$ instance admits a feasible solution. We claim that among the $4n$ additional integers, there are exactly $2n$ of them in $S$. Otherwise $S$ contains either at most $2n-1$ of them or at least $2n+1$ of them. By symmetry we assume without loss of generality that $S$ contains at most $2n-1$ of them, then all integers in $S$ add up to at most $(2n-1)\cdot 3q+2q<q'=q+6nq$, which is a contradiction. Thus, $S$ contains exactly $2n$ additional integers, implying that the remaining $n$ integers adding up to $q$, i.e., the given balanced partition instance admits a feasible solution.
	\qed
\end{proof}
\begin{proof}[Proof of Lemma~\ref{lemma:unit-weight-hardness}]
	We will prove the NP-hardness for an election with only two candidates under $1$-approval (plurality). Recall that in this case the constructive and destructive protection problem are the same.
	
	We reduce from the balanced partition$'$ problem. Given an arbitrary instance of the balanced partition$'$ problem, we construct an instance of the \$-bribery-protection problem such that the answer to the problem is ``Yes" if and only the balanced partition$'$ instance admits a feasible solution.
	
	We construct the \$-bribery-protection instance as follows. There are $m=2$ candidates. $c_1$ is the designated candidate. Let $F=4qn+q$, $B=(4n-1)q-1$. There are $6n-1$ voters, each of unit weight and can be divided into three groups:
	\begin{itemize}
		\item $2n$ key voters voting for $c_2$, whose awarding prices are $4q+a_1,4q+a_2,\cdots,4q+a_{2n}$ and bribing prices are $4q-a_1,4q-a_2,\cdots,4q-a_{2n}$, respectively. 
		\item $2n-1$ dummy voters voting for $c_2$, whose awarding prices are all $F+1$ and bribing prices are all $B+1$.
		\item $2n$ dummy voters voting for $c_1$, whose awarding and bribing prices are all $1$.
	\end{itemize}
	Let $\VV^*=\{v_1,v_2,\cdots,v_n\}$ be the set of key voters.
	
	Obviously $c_2$ is the original winner. We show that the answer to the constructed \$-bribery-protection instance is ``Yes" if and only if the balanced partition$'$ problem admits a feasible solution.
	
	Suppose the given balanced partition$'$ instance admits a feasible solution $S$. Now we let the defender fix the $n$ key voters whose awarding price is $4q+a_j$ where $a_j\in S$. It is easy to verify that the total awarding price is $4nq+q=F$, which stays within the defense budget. We argue that, no briber can alter the election result with a budget of $B$. Let $\VV_F$ be the set of fixed voters. Suppose on the contrary there is a briber who can make $c_1$ win. Given that the total attack budget is $B$, the briber can only bribe key voters. Furthermore, the briber has to bribe at least $n$ voters. As $|\VV_F|=n$, the briber has to bribe all voters in $\VV^*\setminus \VV_F$. However,
	$$\sum_{j:v_j\in \VV^*\setminus \VV_F} p_j^b=(4n-1)q>B,$$ which is a contradiction. Thus, the answer to the constructed \$-bribery-protection instance is ``Yes".
	
	Suppose the answer to the constructed \$-bribery-protection instance is ``Yes". Note that in total there are $4n-1$ voters voting for $c_2$ and $2n$ voters voting for $c_1$. Thus any briber who wants to alter the election result has to bribe at least $n$ voters who originally vote for $c_2$. Since the $2n-1$ dummy jobs voting for $c_2$ can never be protected nor bribed, we can fix a subset $\VV_F\subseteq \VV^*$ such that no briber can bribe $n$ or more voters from $\VV^*\setminus\VV_F$ with a budget of $B$. We have the following claim.
	\begin{claim}\label{claim:=k}
		$|\VV_F|=n$.
	\end{claim}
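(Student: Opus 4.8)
The plan is to bound $|\VV_F|$ from both sides, exploiting that for a key voter the awarding price $4q+a_j$ and the bribing price $4q-a_j$ move in opposite directions as $a_j$ grows. The upper bound $|\VV_F|\le n$ is immediate from the defense budget: since $\VV_F\subseteq\VV^*$ and every key voter costs at least $4q+1$ to award, protecting any $n+1$ of them would cost more than $4q(n+1)>4qn+q=F$, violating $\sum_{j:v_j\in\VV_F}p_j^a\le F$. So the work is entirely in the matching lower bound.

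I would prove $|\VV_F|\ge n$ by contradiction. Suppose the defender succeeds while awarding only $k\le n-1$ key voters. Then at least $2n-k\ge n+1$ key voters remain unprotected, and I would have the attacker bribe the $n$ of them with smallest bribing prices, i.e.\ the $n$ unprotected key voters of largest $a$-value. If these have $a$-values summing to $S$, their total bribing cost is $4qn-S$, so the attack stays within $B=(4n-1)q-1$ precisely when $S\ge q+1$. Thus the lower bound reduces to a purely combinatorial statement: however the defender chooses $k\le n-1$ key voters to shield, some $n$ of the remaining key voters have $a$-values summing to at least $q+1$, which would contradict the assumption that the election is protected.

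I expect this combinatorial step to be the main obstacle, and it is exactly the reason the reduction is set up from \textbf{balanced partition$'$} rather than ordinary balanced partition. The defender's most damaging move is to shield the cheapest-to-bribe voters, namely those of largest $a_j$, leaving the attacker with $a_{n-k+1},\dots,a_{2n-k}$; the difficulty is to certify that even this residual pool contains $n$ indices of total $a$-value at least $q+1$. Here I would lean on the promise $a_n+a_{n+1}+\cdots+a_{2n-1}\ge q+1$ that distinguishes balanced partition$'$, together with the sortedness $a_1\le\cdots\le a_{2n}$, and I would simultaneously bring in the defense budget $F=4qn+q$ to constrain how much total $a$-weight the defender can withdraw while protecting fewer than $n$ voters. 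Reconciling these constraints so that shielding $k<n$ voters provably leaves an attackable $n$-subset is the delicate part, and I would want to check carefully whether the structural promise alone suffices or whether the budget must also be invoked. Once $|\VV_F|=n$ is secured the argument closes quickly: with exactly $n$ key voters unprotected the attacker is forced to bribe all of them, which exceeds $B$ only if their $a$-values sum to at most $q$, while the budget forces the protected $a$-values to sum to at most $q$; together these give $\sum_{j:v_j\in\VV_F}a_j=q$, the sought balanced partition.
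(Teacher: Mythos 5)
Your upper bound $|\VV_F|\le n$ is correct and is exactly the paper's argument. The gap is precisely where you suspect it, and your hesitation is well founded: the combinatorial statement you need --- that every choice of $k\le n-1$ shielded key voters leaves some $n$ unprotected ones whose $a$-values sum to at least $q+1$ --- does \emph{not} follow from the promise $a_n+a_{n+1}+\cdots+a_{2n-1}\ge q+1$, and the defense budget cannot rescue it, since awarding any $n-1$ key voters costs at most $4q(n-1)+2q<F$, so $F$ places no restriction at all on \emph{which} $n-1$ voters are shielded. As you compute, shielding the $k$ largest-$a$ voters leaves the attacker the window $a_{n-k+1}+\cdots+a_{2n-k}$; the promise is the case $k=1$ of this family, and since the $a_j$ are sorted ascending these window sums \emph{decrease} as $k$ grows, so the promise controls the largest window rather than the one you need at $k=n-1$, namely $a_2+\cdots+a_{n+1}$. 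Concretely, take $n=3$, $q=10$, $(a_1,\ldots,a_6)=(1,1,1,5,5,7)$: the promise holds ($a_3+a_4+a_5=11\ge q+1$) and no $3$-subset sums to $10$, yet awarding $v_5,v_6$ costs $92\le F=130$ and leaves the attacker needing three voters from $\{v_1,\ldots,v_4\}$ at minimum cost $35+39+39=113>B=109$. So the defender succeeds with $|\VV_F|=2<n$, and one can check that no successful $\VV_F$ of size $3$ exists; the claim as stated fails on this instance.

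You should also know that the paper's own proof does not close this gap: it asserts that the cheapest $n$ voters among \emph{any} $n+1$ key voters have total bribing price at most $p_n^b+\cdots+p_{2n-1}^b=4qn-(a_n+\cdots+a_{2n-1})\le B$, but the extremal case is the $n+1$ most expensive voters, for which the cheapest $n$ cost $p_2^b+\cdots+p_{n+1}^b=4qn-(a_2+\cdots+a_{n+1})$, and this can exceed $B$, as the example shows. So you have not merely failed to reproduce a correct counting argument --- you have located a genuine weak point in the reduction. Repairing it appears to require strengthening the promise built into the Balanced Partition$'$ gadget (one needs control of the window $a_2+\cdots+a_{n+1}$, and the padding with integers equal to $3q$ as described does not provide it), not a cleverer analysis of the construction as given.
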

	\begin{proof}[Proof of Claim~\ref{claim:=k}]
		We first show that $|\VV_F|\le n$. Suppose on the contrary that $|\VV_F|\ge n+1$. Note that any key voter has an awarding price of at least $4q$, thus the total awarding price is at least $4(n+1)q$. However, $F=4qn+q<4(n+1)q$, which is a contradiction.
		
		We now show that $|\VV_F|\ge n$. Suppose on the contrary that $|\VV_F|\le n-1$. Then there are at least $n+1$ key voters that can be bribed and we bribe the cheapest $n$ voters. As $p_1^b\ge p_2^b\ge\cdots\ge p_{2n}^b$, the total bribing price of the cheapest $n$ voters among any $n+1$ voters is at most $p_{n}^b+p_{n+1}^b+\cdots+p_{2n-1}^b=4qn-(a_n+a_{n+1}+\cdots+a_{2n-1})\le 4qn-(q+1)=B$, whereas the briber can always bribe the cheapest $n$ voters and let $c_2$ win, which contradicts the fact that the answer to the \$-bribery-protection instance is ``Yes". Hence $|\VV_F|\ge n$.
		\qed
	\end{proof}
	
	Now the following inequalities hold simultaneously:
	
	\begin{subequations}
		\begin{align}
		&\quad \sum_{j:v_j\in\VV_F} p_j^a\le F=4qn+q~\label{eq:1a}\\
		&\quad \sum_{j:v_i\in \VV^*\setminus\VV_F} p_j^b\ge B+1=4(n-1/2)q+q~\label{eq:1b}
		\end{align}	
	\end{subequations}
	Note that
	\begin{eqnarray*}
		&\sum_{j:v_i\in \VV^*\setminus\VV_F} p_j^b=\sum_{j=1}^{2n}p_j^b-\sum_{j:v_j\in\VV_F}p_j^b\\=&4(2n-1/2)q-\sum_{j:v_j\in\VV_F}(4q-a_j),
	\end{eqnarray*}
	
	by~(\ref{eq:1a}) we have
	$$\sum_{j:v_j\in\VV_F}(4q-a_j)\le 4qn-q.$$
	Using the fact that $|\VV_F|=n$, we have
	$$\sum_{j:v_j\in\VV_F}a_j\ge q.$$
	From (\ref{eq:1b}), we have
	$$\sum_{j:v_j\in\VV_F}a_j\le q.$$
	Thus,
	$$\sum_{j:v_j\in\VV_F}a_j= q,$$
	i.e., the given balanced partition$'$ instance admits a feasible solution. 	
	\qed
\end{proof}

The symmetric \$-protection problem (i.e., $p_j^a=p_j^b$), however, is  significantly easier, as shown by Theorem \ref{thm:uni-weight-plu-p}.


\begingroup
\def\thetheorem{\ref{thm:uni-weight-plu-p}}
\begin{theorem}
For constant $m$, both destructive and
	constructive symmetric \$-protection problems are in $P$ for any scoring rule.
\end{theorem}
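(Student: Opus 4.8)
The plan is to solve both variants by a polynomial-size enumeration in the spirit of Theorem~\ref{thm:uni-price-plu-p}, the new ingredient being that symmetry ($p_j^a=p_j^b$) is exactly what makes such an enumeration possible once prices are arbitrary. First I would group the $n$ voters according to their preference list: since $m$ is constant there are at most $m!$ groups, and within a group all voters induce the same score change when bribed (the weights are all $1$ in the \$-protection setting). The decisive observation is that, with $p_j^a=p_j^b$ and unit weights, the domination relation restricted to a single group becomes a \emph{total} order: $v_j$ dominates $v_{j'}$ precisely when they share a preference and $p_j<p_{j'}$ (ties broken by index). Hence a maximal awarded (resp.\ bribed) subset of a given size inside a group is \emph{uniquely} the cheapest voters of that group. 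Invoking Corollary~\ref{coro:maximal} for the destructive case and Corollary~\ref{coro:maximal2} for the (unweighted) constructive case, I may assume $\VV_F$ is maximal with respect to $\VV$ and $\VV_B$ is maximal with respect to $\VV\setminus\VV_F$; consequently both sets are completely determined by a \emph{count vector} recording how many voters of each group they contain.

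For the destructive problem I would pass to the equivalent minmax vector addition formulation (Lemma~\ref{lemma:equivalent}). I would enumerate all count vectors $(f_1,\dots,f_K)$ for $\VV_F$ (awarding the cheapest $f_h$ voters of group $h$) and test $\sum_{v_j\in\VV_F}p_j^a\le F$; there are at most $(n+1)^{m!}$ of them. For each surviving $\VV_F$, I would enumerate all count vectors for $\VV_B$ (bribing the cheapest still-available voters of each group), test $\sum_{v_j\in\VV_B}p_j^b\le B$, and check whether $\|\vec{\Lambda}+\sum_{v_j\in\VV_B}\vec{\Delta}_j\|_{\infty}>\Lambda(c_m)$. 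No enumeration of reassigned preferences is needed here, because each $\vec{\Delta}_j$ already encodes the attacker's worst-case re-bribe, and all voters of one group share the same $\vec{\Delta}$, so for a fixed count vector the realized shift is identical and the cheapest choice merely minimizes cost. The defender succeeds iff some $\VV_F$ within budget admits no winning $\VV_B$, decided in $O((n+1)^{2m!}\cdot\mathrm{poly})$ time, polynomial for constant $m$.

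For the constructive problem the same outer enumeration over maximal $\VV_F$ count vectors applies (now via Corollary~\ref{coro:maximal2}). The inner test --- whether the attacker can make the designated candidate the unique winner against a fixed $\VV_F$ --- requires one extra layer: after fixing the bribe count vector $(b_1,\dots,b_K)$ (again taking the cheapest available voters and testing the budget $B$), I would enumerate how the $\sum_h b_h$ bribed voters are \emph{reassigned}, i.e.\ a count vector over the at most $m!$ target preference lists. From the original tallies, the removed contributions of the bribed voters' old preferences, and the added contributions of the chosen new preferences, the final score of every candidate is computed directly, so I can check whether the designated candidate strictly beats all others. This adds one more factor of at most $(n+1)^{m!}$, giving total running time $O((n+1)^{3m!}\cdot\mathrm{poly})$, again polynomial for constant $m$, and the overall $\exists\VV_F\,\forall\VV_B$ structure is resolved by reporting success exactly when some admissible $\VV_F$ defeats every admissible attacker configuration.

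The only real obstacle is justifying that count vectors suffice, and this is precisely where symmetry is essential. In the asymmetric \$-protection problem the within-group domination relation is merely a \emph{partial} order (a voter may be cheaper to award yet dearer to bribe), so maximal subsets of a given size are no longer unique and cannot be encoded by counts --- which is consistent with the \NP-completeness of that case (Theorem~\ref{thm:assy-uniweight-np}). Under symmetry this partial order collapses to a total order, the cheapest-first rule pins down $\VV_F$ and $\VV_B$, and the enumeration above goes through. I would therefore devote the bulk of the write-up to stating the cheapest-first characterization cleanly and verifying the two budget/score computations, the combinatorial bookkeeping being routine.
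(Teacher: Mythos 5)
Your proposal is correct and follows essentially the same route as the paper: group voters by preference list, use the symmetry $p_j^a=p_j^b$ together with Corollaries~\ref{coro:maximal} and~\ref{coro:maximal2} to reduce maximal $\VV_F$ and $\VV_B$ to cheapest-first count vectors, and enumerate the at most $n^{O(m!)}$ configurations (including the reassigned preferences for the constructive case). Your use of the minmax vector addition formulation to skip the reassignment enumeration in the destructive case is a minor streamlining, not a different argument.
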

\addtocounter{theorem}{-1}
\endgroup
\begin{proof}
{The proof idea is the same as that of Theorem~\ref{thm:uni-price-plu-p}, namely by trying all different possible $\VV_F$. For each $\VV_F$, we try all possible $\VV_B$'s and all possible ways of altering the preference of voters in $\VV_B$. Note that every voter has the same weight and satisfies that $p_j^a=p_j^b$. Therefore, a voter with a smaller (awarding and bribing) price always dominates a voter with a larger price. By Corollary~\ref{coro:maximal} and Corollary~\ref{coro:maximal2}, for the voters having the same preference, $\VV_F$ contains the voters that have the smallest {prices}. Therefore, in order to determine $\VV_F$, it suffices to know the number of voters that have the same preference, implying that there are at most $n^{m!}$ different kinds of {$\VV_F$'s.} Similarly, given {a $\VV_F$, there are at most $n^{m!}$ different kinds of $\VV_B$'s.} Using the same argument as that of Theorem~\ref{thm:uni-price-plu-p}, for every $\VV_F$ and $\VV_B$, there are at most $n^{m!}$ ways of altering the preferences of the voters in $\VV_B$. Therefore, there are $n^{3m!}$ different possibilities {in total}, which can be enumerated efficiently when $m$ is a constant.}
\qed
\end{proof}

\subsection{Proofs Omitted in Section~\ref{subsec: destructive-arbitrary}}\label{appsec:des}
The goal is to prove Theorem~\ref{thm:destructive-np}.


\begingroup
\def\thetheorem{\ref{thm:destructive-np}}
\begin{theorem}
	Both $r$-approval destructive weighted-protection and $r$-approval (symmetric) \$-protection problems are \NP-complete.
\end{theorem}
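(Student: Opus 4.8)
\addtocounter{theorem}{-1}
\endgroup

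\begin{proof}[Proof plan for Theorem~\ref{thm:destructive-np}]
\textbf{Membership.} The plan is to place both problems in \NP\ by using $\VV_F$ as a polynomial certificate and checking its validity in polynomial time. By Lemma~\ref{lemma:equivalent} it suffices, once $\VV_F$ is fixed, to decide whether the destructive attacker can achieve $\|\vec{\Lambda}+\sum_{j:v_j\in\VV_B}w_j\vec{\Delta}_j\|_\infty>\Lambda(c_m)$ for some feasible $\VV_B\subseteq\VV\setminus\VV_F$. Since the infinity norm is a maximum over coordinates, and a bribing attack aimed at a single $c_i$ can redirect every bribed voter toward $c_i$ independently of the other coordinates, the attacker succeeds if and only if for \emph{some} fixed $i\in\{1,\dots,m-1\}$ there is a feasible $\VV_B$ with $\sum_{j:v_j\in\VV_B}w_j\Delta_{ij}>\Lambda(c_m)-\Lambda(c_i)$. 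Thus the inner $\forall$-quantifier collapses into $m-1$ independent single-target optimization problems, one per candidate, and it remains to solve each in polynomial time.

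For the weighted-protection case ($p_j^a=p_j^b=1$) the attack budget bounds only the \emph{number} of bribed voters, so the optimal single-target attack on $c_i$ just selects the $B$ not-awarded voters with the largest values $w_j\Delta_{ij}$; this is a sort, hence polynomial (and is the greedy behaviour already underlying Corollary~\ref{coro:maximal}). For the symmetric \$-protection case ($w_j=1$) the single-target attack is a knapsack whose items have profit $\Delta_{ij}\in\{0,1,2\}$ and cost $p_j^b$. Because the profits are bounded by $2$, the achievable total profit lies in $\{0,1,\dots,2n\}$, so a dynamic program recording the minimum cost attaining each profit level runs in polynomial time; equivalently, one enumerates the number of profit-$2$ items taken (cheapest first) and greedily fills the remaining budget with the cheapest profit-$1$ items. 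Hence in both cases a guessed $\VV_F$ is verifiable in polynomial time, so the problems are in \NP.

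\textbf{Hardness.} I would establish \NP-hardness by reducing from \textsc{Vertex Cover}: given a graph $G=(V,E)$ and a target $t$, I build an $r$-approval election in which awarding a voter corresponds to choosing a vertex and blocking every candidate-attack corresponds to covering every edge. Introduce one \emph{vertex voter} $v_u$ per vertex, one \emph{edge candidate} $c_e$ per edge, the designated winner $c_m$, plus enough \emph{filler voters} approving $c_m$ and dummy candidates to fix the baseline scores. Set $r=|E|$ and let $v_u$ approve exactly the edge candidates $c_e$ with $u\notin e$ (padding with fresh dummy candidates) while \emph{not} approving $c_m$. Then $\Delta_{e,u}=1-\alpha_{\tau_u^{-1}(e)}+\alpha_{\tau_u^{-1}(m)}-\alpha_m$ evaluates to $1$ exactly when $u\in e$ and to $0$ otherwise, so the only voters able to help push $c_e$ past $c_m$ are its two endpoints. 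I tune the fillers so that $c_m$ is the genuine winner with each $c_e$ a controlled amount below it, and give every dummy candidate a gap so large that it can never be threatened by a budget-two attack.

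With this gadget I set the budgets so the attacker can bribe exactly two useful voters. For symmetric \$-protection let every vertex voter have price $1$ and every filler a price exceeding $B$, place each $c_e$ one unit below $c_m$, and set $B=2$, $F=t$; for weighted-protection instead give vertex voters weight $W\ge 3$ and fillers weight $1$ with unit prices, place each $c_e$ a gap of $2W-1$ below $c_m$, and again set $B=2$, $F=t$. In either regime the attacker can make $c_e$ win if and only if both endpoints of $e$ are left unawarded (one endpoint contributes $1$, resp.\ $W$, which is short of the gap, and any combination involving fillers is also short), while awarding a filler never helps the defender. Hence the defender succeeds if and only if the awarded vertex voters form a vertex cover of size at most $t$, and both implications of the reduction follow; together with the membership argument this gives \NP-completeness.

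\textbf{Main obstacle.} The delicate part is not the combinatorial correspondence but its \emph{realizability} as a single $r$-approval election: one must choose the approval sets of the vertex voters, the number and targets of the filler voters, and the dummy candidates so that simultaneously (i) every $\Delta_{e,u}$ takes precisely the intended value, (ii) $c_m$ is the true winner with each real gap equal to the planned threshold while all dummy gaps remain huge, and (iii) the weight/price parameters force the attacker's optimal single-target response to use the two endpoint voters rather than any filler or dummy combination. Checking that all these constraints are met with only polynomially many voters and candidates is the crux; the remaining equivalence between covers and successful defenses is routine.
\end{proof}
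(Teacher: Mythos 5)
Your membership argument is essentially the paper's: the paper also passes through the minmax vector addition formulation (Lemma~\ref{lemma:equivalent}), verifies a guessed $\VV_F$ coordinate-by-coordinate, solves the unit-price case by taking the $B$ largest $w_j\Delta_{ij}$ (Lemma~\ref{lemma:destructive-veri-p}), and solves the unit-weight priced case as a knapsack whose profits take $O(1)$ distinct values (Lemma~\ref{lemma:destructive-veri-price-p}). Nothing to add there.

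Your hardness argument takes a genuinely different route: the paper reduces from a bounded-occurrence variant of 3-dimensional matching, with element candidates, a leading candidate, triple voters of large weight $Q$, and a calibration function $f$; you reduce from \textsc{Vertex Cover} with a per-edge target candidate and a budget-two attacker. Your gadget is cleaner and the cover/defense correspondence is correct as far as I can check (the arithmetic $W+2\le 2W-1$ for $W\ge 3$, the filler pricing, and the dummy gaps all work out). However, there is one substantive shortfall: your construction needs $r=|E|$, because forcing $\Delta_{e,u}=0$ for $u\notin e$ requires $v_u$ to \emph{approve} $c_e$ (and not $c_m$), hence each vertex voter must approve all $|E|-\deg(u)$ non-incident edge candidates. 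So you only obtain hardness when $r$ is part of the input and grows with the instance, whereas the paper's 3DM reduction establishes \NP-hardness for every \emph{fixed} $r\ge 3$ (with the usual dummy-candidate padding for $r>3$), which is the reading the paper intends in Table~\ref{table:1}. This is not repairable inside your gadget, since small fixed $r$ caps each voter's approval set at $r$ candidates; you would need a different encoding (e.g., the paper's, where a voter's three approvals directly name the three elements of a triple). Apart from this quantification over $r$, the proposal is sound.
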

\addtocounter{theorem}{-1}
\endgroup

Towards the proof, we first show the \NP-hardness.

\begin{lemma}\label{lemma:destructive-np}
The $r$-approval destructive weighted-protection problem is \NP-hard for any $r\ge 3$.
\end{lemma}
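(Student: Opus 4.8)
The plan is to prove $\NP$-hardness of the $r$-approval destructive weighted-protection problem (with $p_j^a=p_j^b=1$) for any fixed $r\ge 3$ by a reduction from a suitable partition-type problem, exploiting the equivalent \emph{minmax vector addition} formulation established in Lemma~\ref{lemma:equivalent}. Under that reformulation, each bribed voter $v_j$ adds the vector $w_j\vec{\Delta}_j$ to the running score, and the defender wins exactly when no budget-$B$ subset of unreserved voters can push any coordinate of $\vec{\Lambda}+\sum w_j\vec{\Delta}_j$ above $\Lambda(c_m)$. Since the awarding and bribing prices are all unit, the defender simply removes some $F$ voters and the attacker then picks any $B$ of the rest; the weights carry all the arithmetic. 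This makes the problem look like a two-stage knapsack/partition where the defender must ``neutralize'' the most dangerous voters.

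\textbf{Reduction source and gadget.} I would reduce from \textbf{Balanced Partition} (or directly from \textsc{Partition}), the same $\NP$-complete problem used in Lemma~\ref{lemma:balanced}. Given integers $a_1,\ldots,a_{2n}$ summing to $2q$, the idea is to build an election with a small constant number of candidates (three should suffice once $r\ge 3$, since $r$-approval with $r\ge 3$ lets a single voter simultaneously affect several candidates while still leaving room to encode which candidate is ``threatened''). Each integer $a_j$ becomes a key voter of weight $a_j$ whose preference list is chosen so that bribing it raises a designated challenger candidate $c_i$ against the incumbent $c_m$ by exactly $a_j$ in the relevant coordinate of $\vec{\Delta}_j$. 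Dummy voters and an appropriate choice of $\vec{\Lambda}$ and $\Lambda(c_m)$ calibrate the initial margins so that $c_m$ loses precisely when the attacker can assemble a bribed set of total weight exceeding the margin. The defense budget $F$ and attack budget $B$ are then tuned so that the defender can keep $c_m$ safe if and only if the $a_j$'s admit a balanced split.

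\textbf{Main steps, in order.} First, I would write down the election instance explicitly and verify using Lemma~\ref{lemma:equivalent} that ``defender succeeds'' is equivalent to the minmax vector addition condition on the weights $a_j$. Second, for the forward direction I would show that a balanced partition yields a reserving set $\VV_F$ (the $n$ voters on one side) after which every admissible bribing set has insufficient total weight to exceed $\Lambda(c_m)$ in any coordinate. Third, for the reverse direction I would argue contrapositively: if no balanced partition exists, then whatever $F$ voters the defender reserves, the remaining key voters always contain a bribable sub-collection whose weight tips some coordinate over the threshold---this is where Corollary~\ref{coro:maximal} is useful, since it lets me assume the attacker greedily takes the heaviest available voters of each preference type, reducing the analysis to a clean weight inequality. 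The membership in $\NP$ follows, as in Lemma~\ref{lemma:uni-weight-NP-easy} and Theorem~\ref{thm:uni-price-plu-p}, from the fact that for a \emph{fixed} $\VV_F$ the attacker's best response can be found in polynomial time (by Corollary~\ref{coro:maximal} one enumerates maximal $\VV_B$), so a guessed $\VV_F$ can be verified efficiently.

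\textbf{Main obstacle.} The delicate part is the calibration of budgets and margins so that the two-sided weight constraint of balanced partition (needing \emph{exactly} $n$ elements summing to \emph{exactly} $q$) is enforced by the interplay of $F$ and $B$ under \emph{unit} prices---with unit awarding and bribing prices the defender and attacker are constrained only by \emph{cardinality}, not by the $a_j$ values, so the value information must live entirely in the weights while the count information lives in the budgets. I expect the crux to be designing the gadget so that the attacker is forced to bribe a specific number of key voters (mirroring Claim~\ref{claim:=k} in the \$-protection proof), and arranging $r\ge 3$ and the dummy candidates so that this cardinality pin-down interacts correctly with the weight threshold $\Lambda(c_m)$. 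Handling the extra coordinates introduced by having $r\ge 3$ (so that a voter affects multiple challenger candidates) without accidentally giving the attacker a cheaper alternative route to unseating $c_m$ will require the careful padding that the proof presumably supplies next.
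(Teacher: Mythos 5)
There is a genuine gap, and it is structural rather than a matter of missing calibration. You propose a reduction from Balanced Partition using ``a small constant number of candidates (three should suffice once $r\ge 3$).'' But for any \emph{constant} number of candidates the destructive weighted-protection problem is in \PP\ (Theorem~\ref{thm:uni-price-plu-p}), so no such reduction can establish \NP-hardness unless $\PP=\NP$. The obstruction is visible in your own toolkit: with unit awarding and bribing prices, Corollary~\ref{coro:maximal} says the defender may as well protect, and the attacker may as well bribe, the heaviest voters of each preference type; with constantly many preference types this collapses to enumerating $n^{O(m!)}$ greedy choices. In particular, in the minmax vector addition view with a single challenger coordinate, the defender's problem is ``delete $F$ numbers so that the sum of the $B$ largest remaining numbers is at most $T$,'' which is solved greedily. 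The tension you flag at the end --- that with unit prices the budgets constrain only cardinality while the $a_j$'s must live in the weights --- is exactly what makes a Partition gadget impossible here: the cardinality-pinning trick of Claim~\ref{claim:=k} in Lemma~\ref{lemma:unit-weight-hardness} works only because there the value information sits in the \emph{prices} ($4q\pm a_j$) and the budgets are price-sums; with $p_j^a=p_j^b=1$ that channel is gone.

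The paper's proof instead reduces from a bounded-occurrence variant of 3-dimensional matching and uses \emph{arbitrarily many} candidates: $3\zeta$ element candidates plus a leading candidate, one heavy key voter per triple, defense budget $F=\zeta$ and attack budget $B=\eta-\zeta$. The hardness lives entirely in the interaction across the $3\zeta$ coordinates: protecting a set of key voters whose triples form a perfect matching subtracts a \emph{uniform} amount from every element candidate's achievable score, whereas any non-matching protected set leaves some element candidate under-protected (an element covered twice means another is covered zero times, or the corresponding $\sum_j \Delta_{kj}$ over the protected set is too small), and the attacker then pushes that candidate past the threshold. If you want to salvage your outline, you would need to abandon the constant-candidate gadget and build a covering/matching structure over many challenger coordinates; at that point you are essentially rederiving the paper's 3DM reduction.
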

\begin{proof}
	We prove the lemma for $r=3$. The case of $r>3$ can be proved by introducing dummy candidates and letting each voter vote for $r-3$ distinct dummy candidate. 
	
	We reduce from a variant of 3-dimensional matching in which every element occurs at most $d=O(1)$ times in the given triples, which is also known to be \NP-hard~ stated by Kann \cite{kann1991maximum}.
	Given a 3DM instance with $3\zeta=|W\cup X\cup Y|$ elements and $\eta=|M|$ triples such that every element appears at most $d=O(1)$ times in $M$, we construct an instance of the destructive weighted bribery-protection problem as follows. Here we further require that $\eta\ge \zeta+2d$. The assumption is without loss of generality since if $\eta\le \zeta+2d-1$, then there are at most $O(1)$ triples outside a perfect matching, and the existence of a perfect matching can be determined by brute-forcing within $\zeta^{O(1)}$ time, which is polynomial.
	
	For ease of description, we re-index all elements of $W\cup X\cup Y$ arbitrarily as $z_1,z_2,\cdots,z_{3\zeta}$.
	
	Let $Q=2\eta+1$. There are $3\zeta+1$ key candidates, including the following two kinds of candidates (the function $f$ will be defined later):
	\begin{itemize}
		\item $3\zeta$ key candidates $c_1$ to $c_{3\zeta}$, with $c_i$ corresponding to $z_i\in W\cup X\cup Y$ and has a score of $Q\cdot f(z_i)$. We call them element candidates;
		\item one key candidate $c_{3\zeta+1}$ called leading candidate, which is the original winner and has a score of $Q\cdot f(z_{3\zeta+1})$.
	\end{itemize}
	Besides key candidates, there are also sufficiently many dummy candidates $c_{i}$ for $i>3\zeta+1$, each having a score of $1$. The number of dummy candidates will be determined later.
	
	
	There are $\eta$ key voters $v_1$ to $v_\eta$, each of weight $Q$. Each key voter corresponds to a distinct triple in $(z_i,z_j,z_k)\in M$, and votes for the three candidates that correspond to $z_i,z_j,z_k$, respectively.
	
	Besides key voters, there are also sufficiently many dummy voters $v_j$ for $j>\eta$. A dummy vote has a unit weight, and votes for one key candidate and two distinct dummy candidates.
	
	Now we determine the number of dummy voters and dummy candidates together with all the parameters.
	If we only consider key voters of weight $Q$, then every element candidate corresponding to some $z_i$ gets a score of $Q\cdot d(z_i)$ where $d(z_i)$ is the number of occurrences of $z$ in $M$. Adopting the viewpoint of the minmax vector addition problem, for every $1\le j\le \eta$ we have
	\begin{equation*}
	\Delta_{ij}=\begin{cases}
	0, \hspace{5mm} &\text{if $v_j$ votes for $c_i$ and does not vote for $c_{3\zeta+1}$}\\
	1,  &\text{if $v_j$ votes for $c_{3\zeta+1}$ and does not votes for $c_i$}, \\ & \text{or $v_j$ votes for both $c_i$ and $c_{3\zeta+1}$}\\
	2, & \text{if $v_j$ votes for $c_{3\zeta+1}$ and does not vote for $c_i$}
	\end{cases}
	\end{equation*}
	Let $$\Delta_{max}=\max_{1\le i\le 3\zeta}\sum_{j=1}^\zeta\Delta_{ij},\quad d_{max}=\max_{1\le i\le 3\zeta} d(z_i).$$ We define $$f(z_i)=2\eta+ d_{max}+\Delta_{max}-\sum_{j=1}^{\eta}\Delta_{ij}, 1\le i\le 3\zeta$$
	That means, candidate $c_i$, $1\le i\le 3\zeta$ will get a score of $Q\cdot d(z_i)$ from key voters, and additionally $Q\cdot [f(z_i)-d(z_i)]\ge 0$ score from dummy voters. Hence, for each $c_i$ we need to create $Q\cdot [f(z_i)-d(z_i)]$ dummy voters.
	
	We define $$f(z_{3\zeta+1})=2\eta+d_{max}+\Delta_{max}-\zeta+1.$$ Note that $\sum_{j=1}^{\eta}\Delta_{ij}\ge \eta-d>\zeta$ as every element appears at most $d$ times in triples, hence $f(z_{3\zeta+1})>f(z_i)$ for $1\le i\le 3\zeta$ and $c_{3\zeta+1}$ is indeed the original winner.
	
	Overall, dummy voters should contribute $Q\cdot [f(z_i)-d(z_i)]$ to each $c_i$, $1\le i\le 3\zeta$ and $(\zeta+1)f(z_{3\zeta+1})$ to $c_{3\zeta+1}$. We create in total $Q\cdot [\sum_{i=1}^{3\zeta+1}f(z_i)-\sum_{i=1}^{3\zeta}d(z_i)]$ dummy voters, and $2Q\cdot [\sum_{i=1}^{3\zeta+1}f(z_i)-\sum_{i=1}^{3\zeta}d(z_i)]$ dummy candidates.
	
	Let the defense budget be $F=\zeta$ and the attack budget be $B=\eta-\zeta$.
	
	\smallskip
	\noindent\textbf{``Yes" Instance of 3DM $\to$ ``Yes" Instance of Destructive Weighted-Bribery-Protection.} Suppose the given 3DM instance admits a feasible solution, we show that the answer to destructive weighted-bribery-protection problem is ``Yes". Let $T\subseteq M$ be the perfect matching. Then $|T|=\zeta$ and we let the defender protect voters corresponding to the triples in $T$. Taking the viewpoint of the minmax vector addition problem. If the attacker bribes all the key voters, then $W(c_i)$ increases by exactly $Q\cdot\sum_{j=1}^{\eta}\Delta_{ij}$ for $1\le i\le 3\zeta$. First it is easy to see that no dummy candidate can be a winner as $Q\cdot\sum_{j=1}^{\eta}\Delta_{ij}\le 2{\eta}Q$ while $Q\cdot f(z_{3\zeta+1})\ge (2\eta+1)Q$. Meanwhile, for each key candidate $c_{i}$, $1\le i\le 3\zeta$, his/her total score becomes exactly $Q\cdot [f(z_{3\zeta+1})+\zeta-1]$. As the defender fixes a subset of key voters, we should subtract the contribution of these key voters. As the triple corresponding to these voters form a perfect matching, these voters contribute a score of exactly $Q(\zeta-1)$ to each $c_i$, hence after bribery every key candidate has a score at most $Q\cdot f(z_{3\zeta+1})$, implying that the answer to the Destructive Weighted-Bribery-Protection problem is ``Yes".
	
	\smallskip
	\noindent\textbf{``No" Instance of 3DM $\to$ ``No" Instance of Destructive Weighted-Bribery-Protection.} Suppose the given 3DM instance does not admit a perfect matching, we show that the answer to the constructed instance of the destructive weighted-bribery-protection problem is ``No". Consider an arbitrary set of voters fixed by the defender and let $U$ be the subset of key voters that are fixed. Obviously $|U|\le \zeta$. If $|U|<\zeta$, we add arbitrary key voters into $U$ such that its cardinality becomes $\zeta$. Let $U'$ be the set of these $\zeta$ key voters and let the attacker bribe the remaining $\zeta-\eta$ key voters. Again we take the viewpoint of the minmax vector addition problem. If the attacker bribes every key voter, then the total score of every key candidate $c_i$, $1\le i\le 3\zeta$, becomes exactly $Q\cdot [f(z_{3\zeta+1})+\zeta-1]$. As key voters in $U$ are not bribed, we subtract their contribution from each $c_i$. Note that triples corresponding to voters in $U$ do not form a perfect matching, thus there exists some element which appears at least twice in these triples. Let $z_{k}$ be such element and we consider $c_k$. It is clear that $\Delta_{kj}=0$ if $v_j$ votes for $c_k$ and $\Delta_{kj}=1$ if $v_j$ does not vote for $c_k$ (note that key voters never vote for $c_{3\zeta+1}$). Hence $\sum_{j:v_j\in U} \Delta_{kj}\le \zeta-2$. By subtracting the contribution of voters in $U$, $c_k$ has a score at least $Q\cdot [f(z_{3\zeta+1}+1)]$, implying that after bribery $c_k$ will get a higher score than $c_{3\zeta+1}$. Thus, the answer to the Destructive Weighted-Bribery-Protection problem is ``No".	
	\qed
\end{proof}


Note that in the preceding reduction, we only construct voters of two different weights, $Q$ for the key voters and $1$ for the dummy voters. {Recall that $Q$ is set to be large enough to assure that only the key voters will be considered by the defender or the attacker.} Once $\VV_F$ and $\VV_B$ are restricted to be subsets of the key voters, the concrete value of $Q$ does not matter. Moreover, we can also prove the \NP-hardness of the destructive (symmetric) \$-bribery-protection problem by using essentially the same proof, except that we set key voters of price $1$ and dummy voters of price exceeding budgets $F$ and $B$, say, $\max\{F,B\}+1$. This leads to the following lemma.

\begin{lemma}\label{lemma: destructive (symmetric)}
The $r$-approval destructive (symmetric) \$-bribery-protection problem is \NP-hard for any $r\ge 3$.
\end{lemma}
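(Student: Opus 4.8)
The plan is to recycle, essentially unchanged, the reduction from the bounded variant of $3$-dimensional matching built for Lemma~\ref{lemma:destructive-np}, and to alter only the mechanism that separates key voters from dummy voters. In that proof the separation was bought by \emph{weight}: key voters carried weight $Q=2\eta+1$ while dummy voters had unit weight, which guaranteed that $\VV_F$ and $\VV_B$ effectively consist of key voters only. Since the symmetric \$-protection problem fixes every weight to $1$ but lets us choose symmetric prices, I would keep the same incidence structure---$\eta$ key voters for the triples of $M$, each of weight $1$, together with the dummy voters that tune the baseline scores of the $3\zeta+1$ key candidates---but now assign key voters price $p_j^a=p_j^b=1$ and dummy voters price $p_j^a=p_j^b=\max\{F,B\}+1$, retaining $F=\zeta$ and $B=\eta-\zeta$.

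First I would check that this reproduces the same combinatorial restriction from a different source. Because a single dummy voter now costs strictly more than either budget, neither the defender nor the attacker can afford to touch one, so $\VV_F$ and $\VV_B$ are forced to be subsets of the key voters---exactly the conclusion the large weight $Q$ previously provided. Key voters cost $1$ each, so $F=\zeta$ awards at most $\zeta$ of them and $B=\eta-\zeta$ bribes at most $\eta-\zeta$ of them, matching the cardinality bounds used originally. Then I would rerun the score bookkeeping of the minmax vector addition formulation (Lemma~\ref{lemma:equivalent}) with the weight factor stripped out: bribing a key voter now shifts a coordinate by $\Delta_{ij}\in\{0,1,2\}$ rather than by $Q\,\Delta_{ij}$, so I set each element candidate's baseline score to $f(z_i)$ in place of $Q\cdot f(z_i)$ and supply $f(z_i)-d(z_i)$ unit-weight dummy voters per candidate (still polynomially many). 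The identity $f(z_i)+\sum_{j=1}^{\eta}\Delta_{ij}=f(z_{3\zeta+1})+\zeta-1$ depends only on the values of $f$ and is untouched, so both directions transfer verbatim: from a perfect matching $T$ the defender awards the $\zeta$ key voters of $T$, after which bribing the remaining key voters brings every element candidate to effective score exactly $f(z_{3\zeta+1})$ and $c_{3\zeta+1}$ survives; conversely, if no perfect matching exists, any $\zeta$ key voters the defender fixes leave some element repeated, and the attacker---bribing the other $\eta-\zeta$ key voters within budget---drives that element candidate strictly above $f(z_{3\zeta+1})$.

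The step I expect to demand the most care is confirming genuine scale-invariance rather than inventing anything new: I must verify that dropping the uniform factor $Q$ does not open up some unintended manipulation now that key and dummy voters share the same weight, and that the single threshold $\max\{F,B\}+1$ simultaneously freezes every dummy voter on \emph{both} the awarding and the bribing side---this is precisely where the symmetry $p_j^a=p_j^b$ is used. Once these checks are in place, the identical ``Yes''/``No'' correspondence gives \NP-hardness for $r=3$, and adding $r-3$ further dummy candidates voted by each voter extends the result to every $r\ge 3$.
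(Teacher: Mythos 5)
Your proposal is correct and follows essentially the same route as the paper: the authors likewise observe that the weight $Q$ in the reduction for Lemma~\ref{lemma:destructive-np} served only to confine $\VV_F$ and $\VV_B$ to the key voters, and they obtain the symmetric \$-version by giving key voters price $1$ and dummy voters price $\max\{F,B\}+1$ while keeping the rest of the construction intact. Your additional care in rechecking the score bookkeeping with the factor $Q$ removed is sound but does not change the argument.
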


Having showed the \NP-hardness of the destructive weighted-protection problem, we show the problem is polynomial-time verifiable and is therefore {\NP}-complete.

\begin{lemma}\label{lemma:destructive-veri-p}
The destructive weighted-protection problem can be verified in polynomial time under any scoring rule.
\end{lemma}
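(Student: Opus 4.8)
The statement is exactly the membership half needed for Theorem~\ref{thm:destructive-np}: to place the destructive weighted-protection problem in \NP\ it suffices to exhibit a polynomially checkable certificate. The natural certificate is the awarded set $\VV_F$ itself; the budget condition $\sum_{j:v_j\in\VV_F}p_j^a\le F$ is checked directly, so the only real content is to verify the universally quantified clause, namely that \emph{no} admissible $\VV_B$ lets the attacker make $c_m$ lose. A priori this ranges over exponentially many subsets, so the plan is to show that this inner $\forall$-check can in fact be carried out in polynomial time for any fixed $\VV_F$.

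First I would pass to the equivalent minmax vector addition formulation via Lemma~\ref{lemma:equivalent}, so that ``the defender succeeds with $\VV_F$'' becomes the statement that for every $\VV_B\subseteq\VV\setminus\VV_F$ with $\sum_{j:v_j\in\VV_B}p_j^b\le B$ one has $\|\vec{\Lambda}+\sum_{j:v_j\in\VV_B}w_j\vec{\Delta}_j\|_\infty\le\Lambda(c_m)$. Because the infinity norm is the maximum over the $m-1$ coordinates, this universal condition decouples across the non-winning candidates: it \emph{fails} precisely when there is some coordinate $i\in\{1,\dots,m-1\}$ and some admissible $\VV_B$ with $\Lambda(c_i)+\sum_{j:v_j\in\VV_B}w_j\Delta_{ij}>\Lambda(c_m)$. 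In the weighted-protection setting $p_j^b=1$, so the budget constraint reduces to the cardinality bound $|\VV_B|\le B$.

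Next I would solve the inner maximization coordinate by coordinate. Fix a target $i$. From the definition $\Delta_{ij}=\alpha_1-\alpha_{\tau_j^{-1}(i)}+\alpha_{\tau_j^{-1}(m)}-\alpha_m$ together with the monotonicity $\alpha_1\ge\cdots\ge\alpha_m$, each $\Delta_{ij}\ge 0$, and the quantity $\sum_{j:v_j\in\VV_B}w_j\Delta_{ij}$ is additive in the bribed voters; hence the attacker's best response is obtained greedily, by bribing the (at most) $B$ voters of $\VV\setminus\VV_F$ with the largest products $w_j\Delta_{ij}$. This maximum is computed by a single sort, after which one tests whether $\Lambda(c_i)$ plus the maximum exceeds $\Lambda(c_m)$. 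Ranging over all $m-1$ candidates, the verifier declares that the defender succeeds iff every such test fails; the whole procedure runs in time $O(mn\log n)$. This establishes the lemma, and together with the \NP-hardness of Lemma~\ref{lemma:destructive-np} (and its symmetric-price analogue, Lemma~\ref{lemma: destructive (symmetric)}) yields the \NP-completeness asserted in Theorem~\ref{thm:destructive-np}.

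The part that needs the most care is the justification that this coordinatewise greedy genuinely captures the attacker's optimal strategy. It rests on two features of the reformulation, which I would state explicitly: the infinity norm lets the attacker commit to a single target candidate $c_i$ and align \emph{all} bribed voters toward maximizing $c_i$'s advantage over $c_m$ (exactly what $\Delta_{ij}$ encodes), and, for that fixed target, the objective is separable over voters with nonnegative per-voter gains, so no interaction among bribed voters can beat picking the top $B$. These two points are what convert an ostensibly $\forall$-quantified condition into a polynomial-time computation. Note in particular that no enumeration over $\VV_F$ is attempted here, which is why arbitrarily many candidates pose no obstacle to membership, in contrast to the constant-$m$ enumeration algorithm of Theorem~\ref{thm:uni-price-plu-p}.
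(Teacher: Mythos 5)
Your proof is correct and follows essentially the same route as the paper: pass to the minmax vector addition formulation, decouple the infinity norm coordinate by coordinate, and observe that with unit bribing prices the attacker's best response against a fixed target $c_i$ is to greedily bribe the $B$ unawarded voters with the largest contribution. If anything, your version is slightly more careful than the paper's, which says to pick the voters with the largest $\Delta_{ij}$ where (as you note) the correct quantity in the weighted setting is $w_j\Delta_{ij}$.
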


\begin{proof}
We leverage the {\em minmax vector addition} problem.
In the case of unit price, given $\VV_F$, the decision version of the verification problem becomes: does there exist a subset $\VV_B\subseteq\VV\setminus\VV_F$ such that the following is true
\begin{eqnarray*}
\left\|\vec{\Lambda}+\sum_{j:v_j\in \VV_B}\vec{\Delta}_j\right\|>\Lambda(c_m).
\end{eqnarray*}
To answer this decision problem, it suffices to do the following for every $1\le i\le m-1$: pick $B$ voters from $\VV\setminus\VV_F$ whose $i$-th coordinate $\Delta_{ij}$ is the largest, add them to $\Lambda(c_i)$, and check if it is greater than $\Lambda(c_m)$.
\qed
\end{proof}

It is, however, not clear if the destructive \$-protection problem is {\NP}-complete for arbitrary scoring rules. However, we show in the following that for any scoring rule which only assigns a constant number of different scores to a preference list, i.e., the $\alpha_i$'s only take $O(1)$ distinct values, the \$-protection problem can be verified in polynomial-time. As in the case of the $r$-approval rule, the $\alpha_i$'s only take values of $1$ or $0$, the destructive \$-protection problem is {\NP}-complete for the $r$-approval rule.

\begin{lemma}\label{lemma:destructive-veri-price-p}
The destructive (symmetric) \$-protection problem can be verified in polynomial-time in $n$ under any scoring rule in which the $\alpha_i$'s only take a constant number of distinct values.
\end{lemma}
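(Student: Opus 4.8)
The verification task, given an awarded set $\VV_F$, is to certify that no destructive attacker can succeed; the plan is to reduce this to a small family of knapsack instances that are tractable precisely because the scoring rule produces only a constant number of distinct coordinate increments. By Lemma~\ref{lemma:equivalent} the problem is equivalent to minmax vector addition, so with $\VV_F$ fixed an attacker succeeds if and only if there is a $\VV_B\subseteq\VV\setminus\VV_F$ with $\sum_{j:v_j\in\VV_B}p_j^b\le B$ and $\|\vec{\Lambda}+\sum_{j:v_j\in\VV_B}\vec{\Delta}_j\|_{\infty}>\Lambda(c_m)$. Every voter has unit weight in the \$-protection problem and every $\Delta_{ij}\ge 0$, so all coordinates of $\vec{\Lambda}+\sum_{j:v_j\in\VV_B}\vec{\Delta}_j$ are nonnegative and the infinity-norm condition holds exactly when some single target candidate $c_i$, $i\neq m$, satisfies $\Lambda(c_i)+\sum_{j:v_j\in\VV_B}\Delta_{ij}>\Lambda(c_m)$. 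Hence I would, for each of the at most $m-1$ candidates $c_i$, solve the knapsack of maximizing the profit $\sum_{j:v_j\in\VV_B}\Delta_{ij}$ subject to $\sum_{j:v_j\in\VV_B}p_j^b\le B$, and compare its optimum with the threshold $\Lambda(c_m)-\Lambda(c_i)$; the certificate $\VV_F$ is valid iff no candidate's optimum exceeds its threshold.

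The key step, and the only place the hypothesis enters, is to observe that for a fixed $i$ the profits $\Delta_{ij}$ take only $O(1)$ distinct values. Indeed $\Delta_{ij}=\alpha_1-\alpha_{\tau_j^{-1}(i)}+\alpha_{\tau_j^{-1}(m)}-\alpha_m$, where $\alpha_1$ and $\alpha_m$ are fixed and the two remaining terms each range over the constant-size set of distinct scores; so $\Delta_{ij}$ assumes at most $O(1)$ values, say $\delta_1>\cdots>\delta_k$ with $k=O(1)$. I am thus left with a knapsack whose item profits come from a constant-size set --- solvable in polynomial time even though general knapsack is not.

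To solve one such instance I would partition the voters of $\VV\setminus\VV_F$ into groups $G_1,\dots,G_k$ by profit value and sort each group by bribing price nondecreasingly. Once the number $n_\ell$ of bribed voters from each group is fixed, the profit $\sum_{\ell}n_\ell\delta_\ell$ is determined, so the cost is minimized by taking the $n_\ell$ cheapest members of $G_\ell$, a value read off from a prefix sum of that group's sorted prices. It then suffices to enumerate all count vectors $(n_1,\dots,n_k)$ with $0\le n_\ell\le|G_\ell|$ --- at most $(n+1)^k=n^{O(1)}$ of them --- discard those whose minimum cost exceeds $B$, and keep the largest surviving profit. Repeating this for all $m-1$ candidates costs $(m-1)\cdot n^{O(1)}$ time, which is polynomial in the input size.

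I expect the main obstacle to be conceptual rather than computational: one has to recognize that bounding the number of distinct scores collapses the otherwise \NP-hard bribery knapsack into the ``few distinct profits'' case, after which the within-group greedy and the count-vector enumeration are routine. I would also note that symmetry of the prices is irrelevant here --- with $\VV_F$ fixed only the bribing prices $p_j^b$ appear --- so the identical argument verifies the asymmetric \$-protection problem as well. Paired with the \NP-hardness established in Lemma~\ref{lemma: destructive (symmetric)} and the fact that the $r$-approval rule assigns only the two scores $0$ and $1$, this yields the \NP-completeness of the $r$-approval destructive \$-protection problem asserted in Theorem~\ref{thm:destructive-np}.
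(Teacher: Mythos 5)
Your proposal is correct and follows essentially the same route as the paper: reduce to minmax vector addition, handle each coordinate $i$ separately as a knapsack with arbitrary prices but only $O(1)$ distinct profits $\Delta_{ij}$, and solve it by enumerating the count of bribed voters per profit class while greedily taking the cheapest voters within each class. Your added observations (nonnegativity of the $\Delta_{ij}$ justifying the per-coordinate decomposition, and the irrelevance of symmetry once $\VV_F$ is fixed) are correct elaborations of details the paper leaves implicit.
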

\begin{proof}
Consider the {\em minmax vector addition} problem. We observe that in the case of unit weight and that the $\alpha_i$'s take $O(1)$ distinct values, $\Delta_{ij}$ only takes $O(1)$ distinct values. For each coordinate $i$, we can check if it is possible for $\Lambda(c_i)$ and $\Delta_{ij}$'s to add up to some value strictly greater than $\Lambda(c_m)$. Note that by adding every $\Delta_{ij}$, we need to pay a price of $p_j^b$, hence it is essentially the {\em knapsack} problem with items having arbitrary prices but only $O(1)$ distinct weights. Such a knapsack problem can be solved in polynomial-time, e.g., by simply guessing the number of items of the same weight. Among the items of the same weight, the optimal solution should take the ones with the cheapest price.
\qed
\end{proof}

\clearpage

\footnotesize
\bibliographystyle{splncs04}
\bibliography{bibliography}

\end{document}